\newif\iflongversion
\DeclareMathOperator*{\argmax}{\arg\max}   % rbp
\newtheorem{theorem}{Theorem}
\newtheorem{definition}{Definition}
\newtheorem{assumption}{Assumption}
\newtheorem{lemma}{Lemma}
\newtheorem{remark}{Remark}
\newcommand{\red}[1]{\textcolor{black}{#1}}
\begin{document}
	\title{A Unified Approach to Dynamic Decision Problems with Asymmetric Information -\\  Part I: Non-Strategic Agents}   % type title between braces
	\author{Hamidreza Tavafoghi, Yi Ouyang, and Demosthenis Teneketzis}         % type author(s) between braces
	
	\date{November 23, 2018} % type date between braces
	\maketitle
	{\let\thefootnote\relax\footnote{ A preliminary version of this paper will appear in
			the Proceeding of the 57th IEEE Conference on Decision and Control
			(CDC), Miami Beach, FL, December 2018 \cite{CDC18}.\\ H. Tavafoghi is with the Department of Mechanical Engineering at the University of California, Berkeley (e-mail:
			tavaf@berkeley.edu). Y. Ouyang is with Preferred Networks America, Inc. (e-mail: ouyangyi@preferred-america.com). D. Teneketzis is with the Department of Electrical Engineering and Computer Science at the University of Michigan, Ann Arbor (e-mail:
			teneket@umich.edu)\\
			This work was supported in part by the NSF grants CNS-1238962, CCF-1111061, ARO-MURI grant W911NF-13-1-0421, and ARO grant W911NF-17-1-0232.\\
		}}
		\begin{abstract}
				  We study a general class of dynamic multi-agent decision problems with asymmetric information and non-strategic agents, which includes dynamic teams as a special case.  When agents are non-strategic, an agent's strategy is known to the other agents. Nevertheless, the agents' strategy choices and beliefs are interdependent over times, a phenomenon known as \textit{signaling}.
				   We introduce the \red{notions of \textit{private information} that effectively compresses} the agents' information in a mutually consistent manner. Based on the notions of sufficient information,  we propose an information state for each agent that is sufficient for decision making purposes. We present instances of dynamic multi-agent decision problems where we can determine an information state with a time-invariant domain for each agent. Furthermore, we present a generalization of \textit{the policy-independence property of belief} in Partially Observed Markov Decision Processes (POMDP) to dynamic multi-agent decision problems.  
				   Within the context of dynamic teams with asymmetric information, the proposed set of information states leads to a sequential decomposition that decouples the interdependence between the agents' strategies and beliefs over time, and enables us to formulate a dynamic program to determine a globally optimal policy via backward induction.

		\end{abstract}

		\vspace*{-5pt}
		\section{Introduction}\label{sec:introductionG}

\subsection{Background and Motivation}

\red{Dynamic multi-agent decision problems with asymmetric information have been used to model many situations arising in engineering, economic, and socio-technological applications.}
In these applications many decision makers/agents interact with each other as well as with a dynamic system. They make private imperfect observations over time, and influence the evolution of the dynamic system through their actions that are determined by their strategies.
An agent's strategy is defined as a decision rule that the agent uses to choose his action at each time based on his realized information at that time.

In this paper, we study a general class of dynamic decision problems with \textit{non-strategic agents}. We say an agent is non-strategic if his strategy (not his specific action) is known to the other agents. In a companion paper \cite{game} we study dynamic decision problems with \textit{strategic agents} where an agent's strategy is his private information and not known to the other agents.

We consider an environment with controlled Markovian dynamics, where, given the agents' actions at every time, the system state at the next time is a stochastic function of the current system state. The instantaneous utility of each agent depends on the agents' joint actions as well as the system state. At every time, each agent makes a private noisy observation that depends on the current system state and past actions of all agents in the system.
Therefore, agents have asymmetric and imperfect information about the system history. 
Moreover, each agent's information depends on other agents' past actions and strategies; this phenomenon is known as \textit{signaling} in the control theory literature. In such problems, the agents' decisions and information are coupled and interdependent over time because (i) an agent's utility depends on the other agents' actions, (ii) the evolution of the system state depends, in general, on all the agents' actions, (iii) each agent has imperfect and asymmetric information about the system history, and (iv) at every time an agent's information depends, in general, on the agents' (including himself)  past actions and strategies.

There are two main challenges in the study of dynamic multi-agent decision problems with asymmetric information. First, because of the coupling and interdependence among the agents' decisions and information over time, we need to determine the agents' strategies simultaneously for all times. Second, as the agents acquire more information over time, the domains of their strategies grow.

In this paper, we propose a general approach for the study of dynamic decision problems with non-strategic agents and address these two challenges. 
We propose the notion of \textit{sufficient information} and provide a set of conditions sufficient to characterize a compression of the agents' private and common information in a mutually consistent manner over time. 
We show that such a compression results in an \textit{information state} for each agent's decision making problem. 
We show that restriction to the set of strategies based on this information state entails no loss of generality in dynamic decision problems with non-strategic agents.

We identify specific instances of dynamic decision problems where we can discover a set of information states for the agents that have time-invariant domain.
Within the context of dynamic teams, we further demonstrate that the notion of \red{sufficient information} leads to a sequential decomposition of dynamic teams. This sequential decomposition results in a dynamic program the solution of which determines the agents' \textit{globally optimal strategies}.

\vspace*{-5pt}

\subsection{Related Literature}
The Partially Observed Markov Decision Processes (POMDPs), \textit{i.e.} centralized stochastic control problems, present the simplest form of dynamic decision problems with single agent %where the main two challenges described above are present 
	\cite{kumar1986stochastic,Bertsekas:1995}. %Nevertheless, % 
	To analyze and identify properties of optimal strategies in POMDPs the notion of \textit{information state} is introduced as the agent's belief about the current system state conditioned on his information history. The information state provides a way to compress the agent's information over time that is sufficient for the decision-making purposes. 
When the agent has perfect recall, this information state is independent of the agent's strategies over time; this result is known as the \textit{policy-independence belief} property \cite{kumar1986stochastic}.

Dynamic multi-agent decision problems with non-strategic agents are considerably more difficult compared to their centralized counterparts. This is because, due to signaling, they are (in general) non-convex functional optimization problems (see \cite{mahajan_martins_yuksel,kulkarni2015optimizer,lessard2016convexity,yuksel2016convex}). The difficulties present in these problems were first illustrated by Witsenhausen \cite{Witsenhausen:1968}, who showed that in a simple dynamic team problem with Gaussian primitive random variables and quadratic cost function where signaling occurs, linear strategies are suboptimal (contrary to the corresponding centralized problem where linear strategies are optimal). Subsequently, many researchers investigated control problems with various specific information structures such as: partially nested (\cite{Chu:1972,LamperskiDoyle:2011,LessardNayyar:2013,ShahParrilo:2013, Nayyar_Lessard_2015,Lessard_Lall_2015} and references therein), stochastic nested \cite{Yuksel:2009}, randomized partially nested \cite{Ouyang_Asghari_Nayyar:2017}, delayed sharing (\cite{LamperskiDoyle:2011,nayyar2011optimal,witsenhausen1971separation,varaiya1978delayed} and references therein), information structures possessing the \textit{i-partition} property or the \textit{ s-partition} property \cite{yoshikawa1978decomposition}, the quadratic invariance property  \cite{RotkowitzLall:2006}, and the substitutability property \cite{asghari2016dynamic}.

Currently, there are three approaches to the analysis of dynamic multi-agent decision problems with non-strategic agents: the \textit{agent-by-agent} approach \cite{ho1980team}, the \text{designer's} approach \cite{witsenhausen1973standard}, and the \textit{common information} approach \cite{nayyar2013decentralized}. We provide a brief discussion of these approaches here. We discuss them in details in Section \ref{sec:disc:c}, where we compare them with the sufficient information approach we present in this paper and show that our approach is distinctly different from them. 

The agent-by-agent approach \cite{ho1980team}, is an iterative method. At each iteration, we pick an agent and fix the strategy of all agents except that agent, and determine the best response for that agent and update his strategy accordingly. We proceed in a round robin fashion among the agents until a fixed point is reached, that is, when no agent can improve his performance by unilaterally changing his strategy. The designer's approach \cite{witsenhausen1973standard}, considers the decision problem from the point of view of a designer who knows the system model and the probability distribution of the primitive random variables, and chooses the control strategies for all agents without having an information about the realization of the primitive random variables. The common information approach \cite{nayyar2013decentralized}, assumes that at each time all agents possess private information and share some common information; it uses the common information to coordinate the agents' strategies \red{sequentially over time.}

\vspace*{-5pt}

\subsection{Contribution}
We develop a general methodology for the study and analysis of dynamic decision problems with asymmetric information and non-strategic agents. Our model includes problems with  non-classical information structures \cite{witsenhausen1971separation} where signaling is present. We propose an approach that effectively compresses the agents' private and common information in a mutually consistent manner. As a result, we offer a set of information states for the agents which are sufficient for decision making purposes. We characterize special instances where we can identify an information state with a time-invariant domain. Based on the proposed information state, we provide a sequential decomposition of dynamic teams over time. We show that the methodology developed in this paper generalizes the existing results for dynamic teams with non-classical information structure. 
Our results in this paper, along those appearing in the companion paper \cite{game} present a set of information states sufficient for decision making in strategic and non-strategic settings. Therefore, we provide a unified approach to decision making problems that can be used to study dynamic games and dynamic teams as well as dynamic games among teams of agents.

\vspace*{-7pt}

\subsection{Organization}

The rest of the paper is organized as follows. In Section \ref{sec:model}, we describe the model and present few examples. In Section \ref{sec:equilibrium}, we discuss the main challenges that are present in dynamic multi-agent decision problems with non-strategic agents. We present the sufficient information approach in Section \ref{sec:sufficient}. We present the main results of the paper in Section \ref{sec:nonstrategic}. We discuss an open problem associated with the sufficient information approach in Section \ref{sec:disc:a}. In Section \ref{sec:disc:c}, we compare the sufficient information approach with the existing approaches in the literature.  We provide a generalization of the sufficient information approach in Section \ref{sec:disc:b}. We present an extension of our results to infinite-horizon dynamic multi-agent decision problems with non-strategic agents in Section \ref{sec:infinite}. We conclude in Section \ref{sec:conclsion}. The proofs of all the theorems and lemmas appear in the Appendix.

\vspace*{-7pt}

\subsection*{Notation}
Random variables are denoted by upper case letters, their realizations by the corresponding lower case letters.
In general, subscripts are used as time index while superscripts are used to index agents.
For $t_1\hspace*{-2pt}\leq\hspace*{-2pt} t_2$, $X_{t_1:t_2}$ (resp. $f_{t_1:t_2}(\cdot)$) is the short hand notation for the random variables $(X_{t_1},\hspace*{-1pt}X_{t_1+1},...,\hspace*{-1pt}X_{t_2})$ (resp.  functions $(f_{t_1}(\cdot),\dots,\hspace*{-1pt}f_{t_2}(\cdot))$).
When we consider a sequence of random variables (resp. functions) for all time, we drop the subscript and use $X$ to denote $X_{1:T}$ (resp. $f(\cdot)$ to denote $f_{1:T}(\cdot)$).
For random variables $X^1_t,\dots,\hspace*{-1pt}X^N_t$ (resp. functions $f^1_t(\cdot),\dots,\hspace*{-1pt}f^N_t(\cdot)$), we use $X_t\hspace*{-2pt}:=\hspace*{-2pt}(X^1_t,\dots,\hspace*{-1pt}X^N_t)$ (resp. $f_t(\cdot)\hspace*{-2pt}:=\hspace*{-2pt}(f^1_t(\cdot),\dots,\hspace*{-1pt}f^N_t(\cdot))$) to denote the vector of the set of random variables (resp. functions) at $t$, and $X^{-n}_t\hspace*{-2pt}:=\hspace*{-2pt}(X^1_t,\dots,\hspace*{-1pt}X^{n-1}_t,\hspace*{-1pt}X^{n+1}_t,\dots,\hspace*{-1pt}X^N_t)$ (resp. $f^{-n}_t(\cdot)\hspace*{-2pt}:=\hspace*{-2pt}(f^1_t(\cdot),\dots,\hspace*{-1pt} f^{n-1}_t(\cdot),\hspace*{-1pt}f^{n+1}_t(\cdot),\dots,\hspace*{-1pt}f^N_t(\cdot))$) to denote all random variables (resp. functions) at $t$ except that of the agent indexed by $n$.
$\mathbb{P}(\cdot)$ and $\mathbb{E}(\cdot)$ denote the probability and expectation of an event and a random variable, respectively.
For a set $\mathcal{X}$, $\Delta(\mathcal{X})$ denotes the set of all beliefs/distributions on $\mathcal{X}$.
For random variables $X,\hspace*{-1pt}Y$ with realizations $x,\hspace*{-1pt}y$, $\mathbb{P}(x|y)\hspace*{-2pt} := \hspace*{-2pt}\mathbb{P}(X\hspace*{-2pt}=\hspace*{-2pt}x|Y\hspace*{-2pt}=\hspace*{-2pt}y)$ and $\mathbb{E}(X|y) \hspace*{-2pt}:=\hspace*{-2pt} \mathbb{E}(X|Y\hspace*{-2pt}=\hspace*{-2pt}y)$.
For a strategy $g$ and a belief (probability distribution) $\pi$, we use $\mathbb{P}^g_{\pi}(\cdot)$ (resp. $\mathbb{E}^g_{\pi}(\cdot)$) to indicate that the probability (resp. expectation) depends on the choice of $g$ and $\pi$. We use $\mathbf{1}_{\{X=x\}}$ to denote the indicator function for event $X\hspace*{-2pt}=\hspace*{-2pt}x$. For sets $A$ and $B$ we use $A\backslash B$ to denote all elements in set $A$ that are not in set $B$. For random variables $X$ and $Y$ we write $X\hspace*{-2pt}\stackrel{dist.}{=}\hspace*{-2pt}Y$ when $X$ and $Y$ have an identical probability distribution. 
		\vspace*{-5pt}
\section{Model}\label{sec:model}\vspace*{-2pt}
%\subsection{General Model}\label{sec:model:general}
\textit{1) System dynamics:} Consider $N$ non-strategic agents who live in a dynamic Markovian world over a horizon $\mathcal{T}\hspace*{-3pt}:=\hspace*{-5pt}\{\hspace*{-1pt}1,\hspace*{-1pt}2,...,\hspace*{-1pt}T\}$, $T\hspace*{-2pt}<\hspace*{-2pt}\infty$. Let $X_t\hspace*{-2pt}\in\hspace*{-2pt}\mathcal{X}_t$ denote the state of the world at $t\hspace*{-2pt}\in\hspace*{-2pt}\mathcal{T}$. At time $t$, each agent, indexed by $i\hspace*{-2pt}\in\hspace*{-2pt} \mathcal{N}\hspace*{-2pt}:=\hspace*{-2pt}\{1,\hspace*{-1pt}2,...,\hspace*{-1pt}N\}$, chooses an action $a^i_t\hspace*{-2pt}\in\hspace*{-2pt}\mathcal{A}^i_t$, where  $\mathcal{A}^i_t$ denotes the set of available actions to him at $t$. Given the collective action profile $A_t\hspace*{-2pt}:=\hspace*{-2pt}(A_t^1,...,\hspace*{-1pt}A_t^N)$, the state of the world evolves according to the following stochastic dynamic equation,
\begin{align}
X_{t+1}=f_t(X_t,A_t,W_t^x), \label{eq:systemdynamic1} \vspace*{-2pt}
\end{align} 
where $W_{1:T-1}^x$ is a sequence of independent random variables. The initial state $X_1$ is a random variable that has a probability distribution $\eta\in\Delta(\mathcal{X}_1)$ with full support.

At every time $t\in\mathcal{T}$, before taking an action, agent $i$ receives a noisy private observation $Y_t^i\in\mathcal{Y}_t^i$ of the current state of the world $X_t$ and the action profile $A_{t-1}$, given by
\begin{align}
Y_t^i=O_t^i(X_t,A_{t-1},W_t^i), \label{eq:systemdynamic2} \vspace*{-2pt}
\end{align} 
where $W_{1:T}^i$, $i\in\mathcal{N}$, are sequences of independent random variables. Moreover, at every $t\in\mathcal{T}$, all agents receive a common observation $Z_t\in\mathcal{Z}_t$ of the current state of the world $X_t$ and the action profile $A_{t-1}$, given by
\begin{align}
Z_t=O_t^c(X_t,A_{t-1},W_t^c), \label{eq:systemdynamic3} \vspace*{-2pt}
\end{align} 
where $W_{1:T}^c$, is a sequence of independent random variables. We note that the agents' actions $A_{t-1}$ is commonly observable at $t$ if $A_{t-1}\subseteq Z_t$.
We assume that the random variables $X_1$, $W_{1:T-1}^x$, $W_{1:T}^c$, and $W_{1:T}^i$, $i\in\mathcal{N}$ are mutually independent. 

\vspace{5pt}

\textit{2) Information structure:} Let $H_t\in\mathcal{H}_t$ denote the aggregate information of all agents at time $t$. Assuming that agents have perfect recall, we have $H_t=\{Z_{1:t},Y_{1:t}^{1:N},A_{1:t-1}^{1:N}\}$, \textit{i.e.} $H_t$ denotes the set of all agents' past observations and actions. The set of all possible realizations of the agents' aggregate information is given by $\mathcal{H}_t:=\prod_{\tau\leq t}\mathcal{Z}_\tau\times\prod_{i\in\mathcal{N}}\prod_{\tau\leq t}\mathcal{Y}_\tau^i\times \prod_{i\in\mathcal{N}}\prod_{\tau< t}\mathcal{A}_\tau^i$. 

At time $t\hspace*{-2pt}\in\hspace*{-2pt}\mathcal{T}$, the aggregate information $H_t$ is not fully known to all agents; each agent may have asymmetric information about $H_t$. Let $C_t\hspace*{-2pt}:=\hspace*{-2pt}\{\hspace*{-1pt}Z_{1:t}\}\hspace*{-2pt}\in\hspace*{-1pt}\mathcal{C}_t$ denote the agents' common information  about $H_t$ and $P_t^i\hspace*{-2pt}:=\hspace*{-2pt}\{\hspace*{-1pt}Y_{1:t}^i,\hspace*{-1pt}A_{1:t-1}^i\}\backslash C_t\hspace*{-2pt}\in\hspace*{-1pt}\mathcal{P}_t^i$ denote agent $i$'s private information about $H_t$, where $\mathcal{P}_t^i$ and $\mathcal{C}_t$ denote the set of all possible realizations of agent $i$'s private and common information at $t$, respectively. 
In this paper, we discuss several instances of information structures  that can be captured as special cases of our general model.

\vspace{5pt}

\textit{3) Strategies and Utilities:} Let $H_t^i\hspace*{-2pt}:=\hspace*{-2pt}\{C_t,\hspace*{-1pt}P_t^i\}\hspace*{-2pt}\in\hspace*{-1pt} \mathcal{H}_t^i$ denote the information available to agent $i$ at $t$, where $\mathcal{H}_t^i$ denote the set of all possible realizations of agent $i$'s information at $t$. Agent $i$'s \textit{strategy} $g^i:=\{g_t^i, t\in\mathcal{T}\}$, is defined as a sequence of mappings $g_t^i:\mathcal{H}_t^i\rightarrow \Delta (\mathcal{A}_t^i)$, $t\hspace*{-2pt}\in\hspace*{-1pt}\mathcal{T}$, that determine agent $i$'s action $A_t^i$ for every realization $h_t^i\hspace*{-2pt}\in\hspace*{-1pt}\mathcal{H}_t^i$ of \red{his} history  at $t\hspace*{-2pt}\in\hspace*{-1pt}\mathcal{T}$. 

Agent $i$'s instantaneous utility at $t$ depends on the state of the world $X_t$ and the collective action profile $A_t$ and is given by $u_t^i(X_t,A_t)$. Therefore, agent $i$'s total utility over the horizon $\mathcal{T}$ is given as
\begin{align}
U^i(X_{1:T},A_{1:T}):=\sum_{t\in\mathcal{T}}u_t^i(X_t,A_t). \label{eq:totalutility}
\end{align}

We assume that agents are \textit{non-strategic}. That is, each agent's, say $i$'s, $i\hspace*{-2pt}\in\hspace*{-2pt}\mathcal{N}$, strategy choice $g^i$ is known to other agents. 
We note that these non-strategic agents may have different utilities over time.
Therefore, the model includes a team of agents sharing the same utilities (see Sections \ref{sec:nonstrategic}) as well as agents with general non-identical utilities.
In \cite{game} we build on our results in this paper to study dynamic decision problems with strategic agents where an agent may deviate privately from the commonly believed strategy, and gain by misleading the other agents.

%\vspace{3pt}

To avoid measure-theoretic technical difficulties and for clarity and convenience of exposition, we assume that all the random variables take values in finite sets.
\begin{assumption}\label{assump:finite}(Finite game)
	The sets $\mathcal{X}_t$, $\mathcal{Z}_t$, $\mathcal{Y}_t^i$, $\mathcal{A}_t^i$, $i\in\mathcal{N}$, $t\in\mathcal{T}$, are finite.
\end{assumption}

\vspace*{3pt}
\noindent\textbf{Special Cases:}
\vspace*{3pt}

We present several instances of dynamic decision problems with asymmetric information that are special cases of the general model described above.  

\vspace{3pt}

\textit{1) Real-time source coding-decoding \cite{witsenhausen1979structure}:} Consider a data source that generates a random sequence $\{\hspace*{-1pt}X_1\hspace*{-1pt},\hspace*{-1pt}...,\hspace*{-1pt}X_T\hspace*{-1pt}\}$ that is $k$-th order Markov, \textit{i.e.} for every sequence of realizations  $x_{1:T}$, $\mathbb{P}\{\hspace*{-1pt}X_{t+k:T}\hspace*{-2pt}=\hspace*{-2pt}x_{t+k:T}|x_{1:t+k-1}\hspace*{-1pt}\}\hspace*{-2pt}=\hspace*{-2pt}\mathbb{P}\{\hspace*{-1pt}X_{t+k:T}\hspace*{-2pt}=\hspace*{-2pt}x_{t+k:T}|x_{t:t+k-1}\hspace*{-1pt}\}$ for $t\hspace*{-2pt}\leq \hspace*{-2pt}T-k$. There exists an encoder (agent $1$) who observes $X_t$ at every time $t$; the encoder has perfect recall.
At every time $t$, based on his available data $\{\hspace*{-1pt}X_1\hspace*{-1pt},\hspace*{-1pt}...,\hspace*{-1pt}X_t\hspace*{-1pt}\}$, the encoder transmits a signal $M_t\hspace*{-2pt}\in\hspace*{-2pt}\mathcal{M}_t$ through a noiseless channel to a decoder (agent $2$), where $\mathcal{M}_t$ denotes the transmission alphabet.  At the receiving end, at every time $t$, the decoder wants to estimate the value of $X_{t-1-\delta}$ (with delay $\delta$) as $\hat{X}_{t-1-\delta}$ based on his available data $M_{1:t-1}$; we  assume that the decoder has perfect recall. The encoder and decoder choose their \red{joint} coding-decoding policy so as to minimize the expected total distortion function given by $\sum_{t=2+\delta}^T d_t(X_t,\hat{X}_t)$, where $d_t(\cdot,\cdot)$ denotes the instantaneous distortion function.
To capture the above-described model within the context of our model, we need to define an augmented system state $\tilde{X}_t$ that includes the last $\max(k,\delta+1)$ states realizations as $\tilde{X}_t\hspace*{-2pt}:=\hspace*{-2pt}\{\hspace*{-1pt}X_{t-\max(k,\delta+1)+1}\hspace*{-1pt},\hspace*{-1pt}...,\hspace*{-1pt}X_t\hspace*{-1pt}\}$. Moreover, the encoder's (agent $1$'s) observation is given by $Y_t^1=O_t^1(\tilde{X}_t,A_{t-1})=X_t$ and the decoder's (agent $2$'s) observation is given by $Y_t^2\hspace*{-2pt}=\hspace*{-2pt}O_t^2(\tilde{X}_t,\hspace*{-1pt}A_{t-1})\hspace*{-2pt}=\hspace*{-2pt}M_{t-1}$, where $(A_t^1,\hspace*{-1pt}A_t^2)\hspace*{-2pt}=\hspace*{-2pt}(M_t,\hspace*{-1pt}\hat{X}_{t-1-\delta})$. The encoder's and decoder's instantaneous utility are given by a distortion function $u_t^{\text{team}}(\tilde{X}_t\hspace*{-1pt},\hspace*{-1pt}A_t\hspace*{-2pt})=\hspace*{-2pt}d_t(X_{t-1-\delta},\hspace*{-1pt}\hat{X}_{t-1-\delta})$.

\vspace{3pt}

\tikzstyle{block} = [draw, rectangle, 
minimum height=2em, minimum width=4em]
\tikzstyle{sum} = [draw, fill=blue!20, circle, node distance=1cm]
\tikzstyle{input} = [coordinate]
\tikzstyle{output} = [coordinate]
\tikzstyle{pinstyle} = [pin edge={to-,thin,black}]

%\vspace{5pt}

\textit{2) Delayed sharing information structure \cite{witsenhausen1971separation,varaiya1978delayed,kurtaran1979corrections,nayyar2011optimal}:} Consider a $N$-agent decision problem where agents observe each others' observations and actions with $d$-step delay. %That is, $P_t^i\hspace*{-2pt}=\hspace*{-2pt}\{Y_{t-d+1:t}^i,A_{t-d+1:t-1}\}$ and $C_t\hspace*{-2pt}=\hspace*{-2pt}\{Y_{1:t-d},A_{1:t-d}\}$. 
We note that in \red{our} model we assume that the agents' common observation $Z_t$ at $t$ is only a function of $X_t$ and and $A_{t-1}$. Therefore, to describe the decision problem with delayed sharing information structure within the context of our model we need to augment our state space to include the agents' last $d$ observations and actions as part of the augmented state. Define $\tilde{X}_t:=\{X_t,M^1_t,M^2_t,...,M^d_t\}$ as the augmented system state where $M_t^i\hspace*{-2pt}:=\hspace*{-2pt}\{A_{t-i},Y_{t-i}\}\in\mathcal{A}_{t-i}\times\mathcal{Y}_{t-i}$, $i\hspace*{-2pt}\in\hspace*{-2pt}\mathcal{N}$; that is, $M_t^i$ serves as a temporal memory for the agents' observations $Y_{t-i}$ \red{and actions $A_{t-i}$} at $t\hspace*{-2pt}-\hspace*{-2pt}i$. Then, we have $\tilde{X}_{t+1}\hspace*{-2pt}=\hspace*{-2pt}\{X_{t+1},M_{t+1}^1,M_{t+1}^2,...,M_{t+1}^d\}\hspace*{-2pt}=\hspace*{-2pt}\{f_t(X_t,A_t,W_t^x),(Y_t,A_t),M_t^1,...,M_t^{d-1}\}$ and $Z_t\hspace*{-2pt}=\hspace*{-2pt}\{M_t^d\}\hspace*{-2pt}=\hspace*{-2pt}\{Y_{t-d},A_{t-d}\}$.

\vspace{3pt}

\textit{3) Real-time multi-terminal communication \cite{nayyar2011structure}:} Consider a real-time communication system with two encoders (agents $1$ and $2$) and one receiver (agent $3$). The two encoders make distinct observations $X_t^1$ and $X_t^2$ of a Markov source. The encoders' observation are conditionally independent Markov chains. That is, there is an unobserved random variable variable $R$ such that $\mathbb{P}\{\hspace*{-1pt}X_1^1\hspace*{-1pt},\hspace*{-1pt}X_1^2\hspace*{-1pt},\hspace*{-1pt}R\hspace*{-1pt}\}\hspace*{-2pt}=\hspace*{-2pt}\mathbb{P}\{\hspace*{-1pt}X_1^1|R\}\mathbb{P}\{\hspace*{-1pt}X_1^2|R\}\mathbb{P}\{R\}$, and
\begin{align*}\mathbb{P}\{X_{t+1}^1,X_{t+1}^2|X_t^1,X_t^2,R\}\hspace*{-2pt}=\hspace*{-2pt}\mathbb{P}\{X_{t+1}^1|X_t^1,R\}\mathbb{P}\{X_{t+1}^2|X_t^2,R\}.\end{align*}

%\vspace*{3pt}

\begin{center}{\fontsize{8}{9}\selectfont
	\begin{tikzpicture}[scale=0.2]
	\hspace*{-9pt}\node [block] (source) {$\begin{array}{c}\hspace*{-7pt}\text{Markov}\hspace*{-7pt}\\\hspace*{-7pt}\text{source}\hspace*{-7pt}\\ \\\hspace*{-7pt}X_t^1,X_2^t\hspace*{-7pt}\end{array}$};
	\node [block] (encoder1) [above right=-0.55cm and 0.6cm of source] {$\begin{array}{c}\text{Encoder 1}\\\hspace*{-7pt}g_t^1\hspace*{-1pt}(X_{1\hspace*{-1pt}:t}^1,\hspace*{-2pt}M_{1\hspace*{-1pt}:t-1}^1)\hspace*{-7pt}\end{array}$};
	\node [block] (encoder2) [below right=-0.55cm and 0.6cm of source] {$\begin{array}{c}\text{Encoder 2}\\\hspace*{-7pt}g_t^1\hspace*{-1pt}(X_{1\hspace*{-1pt}:t}^2,\hspace*{-2pt}M_{1\hspace*{-1pt}:t-1}^2)\hspace*{-7pt}\end{array}$};
	\node [block] (channel1) [right=0.6cm of encoder1] {$\begin{array}{c}\text{Channel 1}\\\hspace*{-7pt}Q_t^1(Y_t^1|M_{t-1}^1)\hspace*{-7pt}\end{array}$};
	\node [block] (channel2) [right=0.6cm of encoder2] {$\begin{array}{c}\text{Channel 2}\\\hspace*{-7pt}Q_t^2(Y_t^2|M_{t-1}^2)\hspace*{-7pt}\end{array}$};
	\node [block] (receiver) [below right=-0.7 and 0.6cm of channel1] {$\begin{array}{c} \\ \text{\hspace*{-6pt}Receiver\hspace*{-6pt}}\\ \\\hspace*{-6pt}g^3_t\hspace*{-1pt}(Y_{1\hspace*{-1pt}:t-1}^{1:2})\hspace*{-6pt}\\ \\\end{array}$};
	
	\draw [->] ([yshift=11em] source.east) -- node[above] {$X_t^1$} (encoder1.west);
	\draw [->] ([yshift=-11em] source.east) -- node[below] {$X_t^2$} (encoder2.west);
	\draw [->] (encoder1.east) -- node[above] {$M_t^1$} (channel1.west);
	\draw [->] (encoder2.east) -- node[below] {$M_t^2$} (channel2.west);
	\draw [->] (channel1.east) -- node[above] {$Y_t^1$} ([yshift=11em] receiver.west);
	\draw [->] (channel2.east) -- node[above] {$Y_t^2$} ([yshift=-11em] receiver.west);
	\node (end) [right=0.5cm of receiver] {};
	\draw [->] (receiver.east) -- node[above] {$\hat{X}_t$} (end.west);
	
	\end{tikzpicture}}
\end{center}

Each encoder encodes, in real-time, its observations into a sequence of discrete symbols and sends it through a memoryless noisy channel characterized by a transition matrix $Q_t^i(\cdot|\cdot)$, $i\hspace*{-2pt}=\hspace*{-2pt}1,2$.
The receiver wants to construct, in real time,  an estimate $\hat{X}_t$ of the state of the Markov source based on the channels' output $Y_{1:t}^1$,$Y_{1:t}^2$. All agents have the same instantaneous utility given by a distortion function $d_t(X_t,\hat{X}_t)$.       

%\vspace{2pt}

\vspace{3pt}

\textit{4) Optimal remote and local controller \cite{ouyang2016optimal,asghari2016optimal}:} Consider a decentralized control problem for a Markovian plant with two controllers, a local controller (agent $1$) and a remote controller (agent $2$).   

%\vspace*{3pt}

\begin{center}
{\fontsize{8}{9}\selectfont
\begin{tikzpicture}[scale=0.2]

\hspace*{5pt} \node [block] (system) {$\begin{array}{c}\text{\hspace*{-6pt}Plant\hspace*{-6pt}}\\\hspace*{35pt}f_t(X_t,A_t^1,A_t^2)\hspace*{35pt}\end{array}$};
\node [block] (controller1) [below left=0.1cm and 0.9cm of source] {$\begin{array}{c}\text{\hspace*{-6pt}Remote Controller\hspace*{-6pt}}\\g_t(Y_{1:t},A_{1:t-1}^2)\end{array}$};
\node [block] (controller2) [below right=0.1cm and 0.7cm of source] {$\begin{array}{c}\text{\hspace*{-6pt}Local Controller\hspace*{-6pt}}\\\hspace*{-5pt}g_t(X_{1:t},Y_{1:t},A_{1:t-1}^1)\hspace*{-5pt}\end{array}$};

\draw [->] (controller2.north) |- node[name=alabel, right] {$A_t^1$} (system.east);
\draw [->] (controller1.north) |- node[name=alabel, left] {$A_t^2$} (system.west);
\draw [->] ([xshift=10em]system.south) |- node[name=alabel, above right] {$X_t$} ([yshift=5em]controller2.west);
\draw [-] ([yshift=-0em]controller2.west) -- node[name=alabel, below	] {$X_t$} ++ (-5,0cm) coordinate(v1){};
\draw [-] (v1.center) -- ++ (-2,2cm) coordinate(v2){};
\draw [->] (v1.center)+(-2.5,0cm) -- node[name=alabel, below] {$Y_t$} ([yshift=-0em]controller1.east);
\draw [-] (controller1.east)+(2,0cm) -- ++ (2,-4cm) coordinate(v4){};
\draw [->] (v4.center) -| (controller2.south);
\draw[<->,>=stealth',semithick,gray] (v1)+(-0.3,2.3cm) arc (125:165:4.5	cm);
\end{tikzpicture}}
\end{center}

The local controller observes perfectly the state $X_t$ of the Markov chain, and sends his observation through a packet-drop channel to the remote controller. The transmission is successful, \red{\textit{i.e.} $Y_t\hspace*{-2pt}=\hspace*{-2pt}X_t$,} with probability $p\hspace*{-2pt}>\hspace*{-2pt}0$ and is not successful, \red{\textit{i.e.} $Y_t\hspace*{-2pt}=\hspace*{-2pt}\emptyset$,} with probability $1\hspace*{-2pt}-\hspace*{-2pt}p\hspace*{-2pt}\geq\hspace*{-2pt} 0$. We assume that the local controller receives an acknowledgment every time the transmission is successful.  The controllers' joint instantaneous utility is given by a $u_t^{\text{team}}(X_t,\hspace*{-1pt}A_t^1,\hspace*{-1pt}A_t^2)$.     

		\section{Strategies and Beliefs}
\label{sec:equilibrium}

In a dynamic decision problem with asymmetric information %, as described in Section \ref{sec:model}, 
agents have private information about the evolution of the system, and they do not observe the complete history $\{H_t,X_{t}\}$, $t\hspace*{-2pt}\in\hspace*{-2pt}\mathcal{T}$. Therefore, at every time $t\hspace*{-2pt}\in\hspace*{-2pt}\mathcal{T}$, each agent, say agent $i\hspace*{-2pt}\in\hspace*{-2pt}\mathcal{N}$, needs to form (i) an appraisal about the current state of the system $X_{t}$ and the other agents' information $H_t^{-i}$ (appraisal about the history), and (ii) an appraisal about how other agents will play in the future (appraisal about the future), so as to evaluate the performance of his strategy choices.

When agents are non-strategic, the agents' strategies $g_{1:T}^{1:N}$ are known to all agents. Therefore, agent $i\hspace*{-2pt}\in\hspace*{-2pt}\mathcal{N}$ can form these appraisals by using his private information $H_t^i$ along with \red{the} commonly known strategies $g^{-i}$. Specifically, agent $i$ can utilize his own information $H_t^i$ at $t\hspace*{-2pt}\in\hspace*{-2pt}\mathcal{T}$,  along with (i) the past strategies $g_{1:t-1}$  and (ii) the future strategies $g_{t:T}$ to form these appraisals about the history and the future of the overall system, respectively.
As a result, the outcome of decision problems with non-strategic agents can be fully characterized by the agents' strategy profile $g$.\footnote{We discuss the decision problems with strategic agents in the companion paper \cite{game}.When agents are strategic each agent may have incentive to deviate \textit{an any time} from the strategy the other agents commonly believe he uses if it is profitable to him (see \cite{game} for more discussion).}

However, we need to know the entire strategy profile $g$ for all agents and at all times to form these appraisals so as to evaluate the performance of an arbitrary strategy $g_t^i$, at any time $t\hspace*{-2pt}\in\hspace*{-2pt}\mathcal{T}$ and for any agent $i\hspace*{-2pt}\in\hspace*{-2pt}\mathcal{N}$. Therefore, we must work with the strategy profile $g$ as a whole irrespective of the length of the time horizon $T$. Consequently, the computational complexity of determining a strategy profile that satisfies certain conditions (\textit{e.g.} an optimal strategy profile in teams) grows doubly exponentially in $|\mathcal{T}|$ since the domain of agents' strategy (\textit{i.e.} $|\mathcal{H}_t^i|$) and the number of temporally interdependent decision problems (one for each time instance) grows with $|\mathcal{T}|$. As a result, the analysis of such decision problems is very challenging in general \cite{bernstein2002complexity}.

An alternative conceptual approach for the analysis of decision problems is to define a belief system $\mu$ along with the strategy profile $g$. For every agent $i\hspace*{-2pt}\in\hspace*{-2pt}\mathcal{N}$, at every time $t\hspace*{-2pt}\in\hspace*{-2pt}\mathcal{T}$, define $\mu^i_t(\hspace*{-1pt}h_t^i)$ as the agent $i$'s belief about $\{\hspace*{-1pt}X_t,\hspace*{-1pt}P_t^{-i}\}$ conditioned on the realization of $h_t^i$, that is, $\mu(\hspace*{-1pt}h_t^i)(\hspace*{-1pt}x_t,\hspace*{-1pt}p^{-i})\hspace*{-2pt}:=\hspace*{-2pt}\mathbb{P}^{g_{1:t-1}}\hspace*{-1pt}\{\hspace*{-1pt}X_t\hspace*{-2pt}=\hspace*{-2pt}x_t,\hspace*{-1pt}P_t^{-i}\hspace*{-2pt}=\hspace*{-2pt}p_t^{-i}|h_t^i\}$. The belief $\mu_t^i$ provides an intermediate instrument that encapsulates agent $i$'s appraisal about the past. %and thus, break down the interdependence between strategy profile $g_{1:t-1}$ for the past and strategy profile $g_{t:T}$ for the future for performance evaluation. 
Therefore, agent $i$ can evaluate the performance of any action $a_t^i$ using only the belief $\mu_t^i(h_t^i)$ along with the future strategy profile $g_{t:T}$. However, the belief $\mu(h_t^i)(x_t,\hspace*{-1pt}p^{-i})$ is dependent on  $g_{1:t-1}$ in general since the probability distribution $\mathbb{P}^{g_{1:t-1}}\hspace*{-1pt}\{\hspace*{-1pt}X_t\hspace*{-2pt}=\hspace*{-2pt}x_t,\hspace*{-1pt}P_t^{-i}\hspace*{-2pt}=\hspace*{-2pt}p_t^{-i}|h_t^i\}$ depends on $g_{1:t-1}$. Therefore, the introduction of a belief system offers an equivalent problem formulation that does not necessarily break the inter-temporal dependence between $g_{1:t-1}$ and $g_{t:T}$ and does not simplify the analysis of decision problems. 

Nevertheless, the definition of a belief system has been shown to be suitable for the analysis of single-agent decision making problems (POMDP) for the following reasons.   First, in POMDPs, under perfect recall, the probability distribution $\mathbb{P}^{g_{1:t-1}}\{X_t\hspace*{-2pt}=\hspace*{-2pt}x_t|h_t\}$ is independent of $g_{1:t-1}$; this is \red{known} as \textit{the policy-independence property of beliefs} in stochastic control. Second, the complexity of the belief function does not grow over time since at every \red{time} $t$ the agent only needs to form a belief about $X_t$, which has a time-invariant domain. As a result, we can sequentially decompose  the problem over time to a sequence of static decision problems with time-invariant complexity; such a decomposition leads to a dynamic program. At each stage $t\in\mathcal{T}$ of the dynamic program, we specify $g_t$ by determining an action for each realization of the belief $\mu_t(\cdot)$ fixing the future strategies $g_{t+1:T}$. 
Therefore, the computational complexity of the analysis is reduced from being exponential in $T$ to linear in $T$.

Unfortunately, the above approach for POMDPs does not generalize to decision problems with many agents. This is because of three reasons. First, with many agents, currently in the literature, there exists no information state for each agent that provides a compression of the agent's information, in a mutually consistent manner among the agents, that is sufficient for decision making purposes. Therefore,  an agent's, say agent $i$'s, strategy $g_t^i$ has a growing domain over time. Second, at every time $t\hspace*{-2pt}\in\hspace*{-2pt}\mathcal{T}$, each agent $i\hspace*{-2pt}\in\hspace*{-2pt}\mathcal{N}$ needs to form a belief about the system state $X_t$ as well as the other agents' private information $P_t^{-i}$ that has a growing domain. Therefore the complexity of belief functions grows over time. Third, in decision problems with many agents, the policy-independence \red{property} of belief does not hold in general and  the agents' beliefs at every time $t$ depend on the \red{past strategy profile} $g_{1:t-1}$. Therefore, the agents' beliefs $\mu_t^{1:N}(\cdot)$ are correlated with one another. This correlation depends on $g_{1:t-1}$, and thus, it is not known a priori. Consequently, if we follow an approach similar to that of POMDP to sequentially decompose the problem, we need to solve the decision problem at every stage for every arbitrary correlation among the agents' belief functions, and such a problem is not tractable.\footnote{Alternatively, one can consider arbitrary correlation among the agents' information rather than their beliefs. This is the main idea that underlies the designer's approach proposed by Winstenhausen \cite{witsenhausen1973standard}. Please see Section \ref{sec:disc:c} for more discussion.}  Hence, the methodology proposed for the study of POMDPs is not directly applicable to decision problems with many agents and non-classical information structures.

In the sequel, we propose a notion of sufficient private information and sufficient common information as a mutually consistent compression of the agents' information for decision making purposes. Therefore, we address (partially) the first two problems on the growing domain of the agents' beliefs and strategies. We provide instances of decision problems where we can discover time-invariant information state for each agent. We then utilize the agents' sufficient common information as a coordination instrument, and thus, capture the implicit correlation among the agents' beliefs over time. Accordingly, we present a sequential decomposition of the original decision problems such that at every stage the complexity of the decision problem is similar to that of a static decision multi-agent problem and the size of state variable at every stage is proportional to the dimension of the sufficient private information; thus, we (partially) address the third problem discussed above.

		\vspace*{-5pt}

\section{Sufficient Information}\label{sec:sufficient}
\red{We present the sufficient information approach and characterize an information state that results from compressing the agents' private and common information in a mutually consistent manner. Therefore, we introduce a class of  strategy choices that are simpler than general strategies as they require agents to keep track of only a compressed version of their information over time. We proceed as follows.}
In Section \ref{subsec:privatecompress} we provide  conditions sufficient to determine the subset of private information an agent needs to keep track of over time for decision making purposes.
In Section \ref{subsec:commoncompress}, we introduce the notion of sufficient common information as a compressed version of the agents' common information that along with sufficient private information provides an information state for each agent. We then show, \red{in Section \ref{sec:nonstrategic}}, that this compression of the agents' private and common information provides a sufficient statistic in dynamic decision problems with non-strategic agents. \red{In Section \ref{sec:disc:b}, we provide a generalization of sufficient information approach presented here.}

\vspace*{-5pt}

\subsection{Sufficient Private Information}\label{subsec:privatecompress}
The key ideas for compressing an agent's private information appear in Definitions \ref{def:payoff-relevant} and \ref{def:sufficient} below. To motivate these definitions we first consider the decision problem with single agent, that is, a Partially Observed Markov Decision Process (POMDP), which is a special case of the model described in Section \ref{sec:model} where $N=1$, $H_t^1=P_t^1$ and $C_t=\emptyset$ for all $t\in\mathcal{T}$.

In a POMDP, the agent's belief about the system state $X_t$ conditioned on his history realization $h_t^i$ is an information state. We highlight the three main proprieties that underlie the definition of information state in POMDP (see \cite{mahajan2016decentralized, yuksel2013stochastic}): (1) the information state can be updated recursively, that is, at any time $t$ the information state at $t$ can be written as a function of the information state at $t-1$ and the new information that becomes available at $t$, (2)  the agent's belief about the information state at the next time conditioned on the current information state and action is independent of his information history, and (3) at any time $t$ and for any arbitrary action the agent's expected instantaneous utility conditioned on the information state is independent of his information history.

We generalize the key properties of information state for POMDPs, described above, to decision problems with many agents. We propose a set of conditions sufficient to compress the agents' private information in two steps. First, we consider a decision problem with many agents where there is no signaling among them. Motivated by the definition of information state in POMDPs, we describe conditions sufficient  to determine a compression of the agents' private information (Definition \ref{def:payoff-relevant}). Next, we build on Definition \ref{def:payoff-relevant} \red{as an intermediate conceptual step}, and consider the case where agents are aware of possible signaling among them. Accordingly, we present a set of conditions sufficient  to determine a compression of the agents' private information in decision problems with many agents (Definition \ref{def:sufficient}) .

\red{Therefore,} we first characterize subsets of an agent's private information that are sufficient for the agent's decision making process when there is no signaling among the agents.

\begin{definition}[Private payoff-relevant information]\label{def:payoff-relevant}
	Let $P^{i,pr}_t=\bar{\zeta}_t^i(P_t^i,C_t)$ denote a private signal that agent $i\in\mathcal{N}$ forms at $t\in\mathcal{T}$ based on his private information $P_t^i$ and common information $C_t$. We say $P_t^{i,pr}$ is a \textit{private payoff-relevant information for agent $i$} if, for all open-loop \red{strategy profile} $(A_{1:T}^{1:N}=\hat{a}_{1:T}^{1:N})$ and for all $t\in\mathcal{T}$,
	\begin{enumerate} [(i)]
		\item it can be updated recursively as 
		\begin{gather*}
		P_t^{i,pr}=\bar{\phi}_t^i(P_{t-1}^{i,pr},H_t^i\backslash H_{t-1}^i)\quad \text{if }t\hspace*{-2pt}\neq\hspace*{-2pt}1,
		\end{gather*}
		\item for all realizations $\{c_{t},p_{t}^i\}$ it satisfies
		\begin{gather}
		\hspace*{-18pt}\mathbb{P}^{(\hspace*{-1pt}A_{1:T}^{1:N}\hspace*{-1pt}=\hat{a}_{1:T}^{1:N}\hspace*{-1pt})}\hspace*{-3pt}\left\{\hspace*{-1pt}p_{t+1}^{i,pr}\hspace*{-1pt}\Big|p_t^{i},\hspace*{-1pt}c_{t},\hspace*{-1pt}a_t\hspace*{-2pt}\right\}\hspace*{-3pt}=\hspace*{-2pt}\mathbb{P}^{(\hspace*{-1pt}A_{1:T}^{1:N}\hspace*{-1pt}=\hat{a}_{1:T}^{1:N}\hspace*{-1pt})}\hspace*{-3pt}\left\{\hspace*{-1pt}p_{t+1}^{i,pr}\hspace*{-1pt}\Big|p_t^{i,pr}\hspace*{-1pt},\hspace*{-1pt}c_{t},\hspace*{-1pt}a_t\hspace*{-1pt}\right\}\hspace*{-2pt},\nonumber  %\label{eq:sufficientdynamic}
		\end{gather}
		\item for all realizations $\{c_{t},p_{t}^i\}\in\mathcal{C}_{t}\times\mathcal{P}_{t}^i$ such that $\mathbb{P}^{(A_{1:T}^{1:N}=\hat{a}_{1:T}^{1:N})}\{c_t,p_t^i\}>0$,
		\begin{gather} 		
			 \mathbb{E}^{(\hspace*{-1pt}A_{1:t-1}^{1:N}\hspace*{-1pt}=\hat{a}_{1:t}^{1:N}\hspace*{-1pt})}\hspace*{-3pt}\left\{\hspace*{-2pt}	u_t^i\hspace*{-1pt}(\hspace*{-1pt}X_t,\hspace*{-1pt}A_t\hspace*{-1pt})\hspace*{-1pt}\Big|c_t,\hspace*{-1pt}p_t^i,\hspace*{-1pt}a_t\hspace*{-2pt}\right\}\hspace*{-3pt}\nonumber=\hspace*{-1pt}\mathbb{E}^{(\hspace*{-1pt}A_{1:t-1}^{-i}\hspace*{-1pt}=\hat{a}_{1:t}^{-i})}\hspace*{-3pt}\left\{\hspace*{-2pt}u_t^i\hspace*{-1pt}(\hspace*{-1pt}X_t,\hspace*{-1pt}A_t\hspace*{-1pt})\hspace*{-1pt}\Big|c_t,p_t^{i,pr},a_t\right\}\hspace*{-1pt}.\hspace*{-3pt}\nonumber %\label{eq:payoff-relevant}
		\end{gather} 

	\end{enumerate}
\end{definition}

By assuming that all other agents play open-loop strategies we remove the interdependence between agents $-i$'s strategy choices and agent $i$'s information structure, thus, we eliminate \textit{signaling} among the agents. Fixing the open-loop strategies of agents \hspace*{-1pt}$-i$, agent $i$ faces a centralized stochastic control problem. Definition \ref{def:payoff-relevant} says that $P_t^{i,pr}\hspace*{-1pt}$, $t\hspace*{-2pt}\in\hspace*{-2pt}\mathcal{T}$, is a private payoff-relevant information for agent $i$ if (i) it can be recursively updated, (ii) $P_t^{i,pr}$ includes all information in $P_t^i$ that is relevant to $P_{t+1}^{i,pr}$ and (iii) agent $i$'s instantaneous conditional expected utility at any $t\in\mathcal{T}$ is only a function of $C_t,\hspace*{-1pt}P_t^{i,pr}\hspace*{-1pt}$, and his action $A_t^i$ at $t$. These three conditions are similar to properties (1)-(3) for an information state in POMDP, but they concern only agent $i$'s private information $P_t^i$ instead of the collection  $H_t^i\hspace*{-2pt}=\hspace*{-2pt}\{C_t,\hspace*{-1pt}P_t^i\}$ of his private and common information.\footnote{We note that we interpret a centralized control problem as a special case of our model where $N\hspace*{-2pt}=\hspace*{-2pt}1$, $H_t^1\hspace*{-2pt}=\hspace*{-2pt}P_t$ and $C_t\hspace*{-2pt}=\hspace*{-2pt}\emptyset$ for all $t\hspace*{-2pt}\in\hspace*{-2pt}\mathcal{T}$, Definition \ref{def:payoff-relevant} coincides with the definition of information state for the single agent decision problem.    
We would like to point out that conditions (i)-(iii) can have many solutions including the trivial solution $P_t^{i,pr}\hspace*{-2pt}=\hspace*{-2pt}P_t^i$. \footnote{An interesting research direction is to determine whether a minimal private payoff-relevant information exists, and if so, characterize such a minimal payoff-relevant information. However, such a direction is beyond the scope of this chapter, and we leave this topic for future research.}}

While the definition of private payoff-relevant information suggests a possible way to compress the information required for an agent's decision making process, it assumes that other agents play open-loop strategies  and do not utilize the information they acquire in real-time for decision making purposes (\textit{i.e.} no signaling). However, open-loop strategies are not in general optimal for agents $-i$. As a result, to evaluate the performance of any strategy choice $g^i$ agent $i$ needs also to form a belief about the information that other agents utilize to make decisions.

\begin{definition}[Sufficient private information]
	We say $S_t^i=\zeta_t^i(P_t^i,C_t;g_{1:t-1})$, $i\in\mathcal{N}$, $t\in\mathcal{T}$, is \textit{sufficient private information} for the agents if, 
	\begin{enumerate}[(i)]
		\item it can be updated recursively as \vspace*{-2pt}
		\begin{gather}
		S_t^{i}=\phi_t^i(S_{t-1}^{i},H_t^i\backslash H_{t-1}^i;g_{1:t-1})  \text{ for } t\in\mathcal{T}\backslash\{1\}, \label{eq:sufficientupdate}
		\end{gather}
		\item for any strategy profile $g$ and for all realizations $\{c_t,p_t,p_{t+1},z_{t+1},a_t\}\in\mathcal{C}_t\times\mathcal{P}_t\times\mathcal{P}_{t+1}\times\mathcal{Z}_{t+1}$ of positive probability,\vspace*{-2pt}
		\begin{align}
		\hspace*{-26pt}\mathbb{P}^{g_{1:t}}\hspace*{-1pt}\left\{\hspace*{-2pt}s_{t+1}\hspace*{-1pt},\hspace*{-1pt}z_{t+1}\hspace*{-1pt}\Big|p_t\hspace*{-1pt},\hspace*{-1pt}c_t\hspace*{-1pt},\hspace*{-1pt}a_t\hspace*{-2pt}\right\}\hspace*{-3pt}=\hspace*{-2pt}\mathbb{P}^{g_{1:t}}\hspace*{-1pt}\left\{\hspace*{-2pt}s_{t+1}\hspace*{-1pt},\hspace*{-1pt}z_{t+1}\hspace*{-1pt}\Big|s_t\hspace*{-1pt},\hspace*{-1pt}c_t\hspace*{-1pt},\hspace*{-1pt}a_t\hspace*{-2pt}\right\}\hspace*{-1pt},\hspace*{-4pt}\label{eq:sufficientdynamic}
		\end{align}
		\hspace*{-4pt}where $s_{\tau}^{1:N}\hspace*{-3pt}=\hspace*{-2pt}\zeta_{\tau}^{1:N}\hspace*{-2pt}(p_{\tau}^{1:N}\hspace*{-1pt},\hspace*{-1pt}c_{\tau};\hspace*{-1pt}g_{1\hspace*{-1pt}:\tau-1}\hspace*{-1pt})$ for $\tau\in\mathcal{T}$;%	\vspace{2pt}	
		\item 
		for every strategy profile \red{$\tilde{g}$ of the form} $\tilde{g}\hspace*{-2pt}:=\hspace*{-2pt}\{\hspace*{-1pt}\tilde{g}^{i}_t\hspace*{-1pt}:\hspace*{-1pt}\mathcal{S}_t^i\times \mathcal{C}_t\rightarrow \Delta(\mathcal{A}_t^i), i\hspace*{-2pt}\in\hspace*{-2pt}\mathcal{N}\hspace*{-1pt},\hspace*{-1pt} t\hspace*{-2pt}\in\hspace*{-2pt}\mathcal{T}\}$ and $a_t\hspace*{-2pt}\in\hspace*{-2pt}\mathcal{A}_t$, $t\hspace*{-2pt}\in\hspace*{-2pt}\mathcal{T}$;
		\begin{align} 		
		\hspace*{-26pt}\mathbb{E}^{\tilde{g}_{1:t-1}^{}}\hspace*{-2pt}\left\{\hspace*{-2pt}u_t^i(\hspace*{-1pt}X_t\hspace*{-1pt},\hspace*{-1pt}A_t\hspace*{-1pt})\hspace*{-1pt}\Big|c_t\hspace*{-1pt},\hspace*{-1pt}p_t^i\hspace*{-1pt},\hspace*{-1pt}a_t\hspace*{-2pt}\right\}\hspace*{-3pt}=\hspace*{-2pt}\mathbb{E}^{\tilde{g}_{1:t-1}^{}}\hspace*{-2pt}\left\{\hspace*{-2pt}u_t^i(\hspace*{-1pt}X_t\hspace*{-1pt},\hspace*{-1pt}A_t\hspace*{-1pt})\hspace*{-1pt}\Big|c_t\hspace*{-1pt},\hspace*{-1pt}s_t^{i}\hspace*{-1pt},\hspace*{-1pt}a_t\hspace*{-2pt}\right\}\hspace*{-2pt},\hspace*{-5pt}\label{eq:payoff-relevant2}
		\end{align} 
		for all realizations $\{\hspace*{-1pt}c_{t}\hspace*{-1pt},\hspace*{-1pt}p_{t}^i\}\hspace*{-3pt}\in\hspace*{-2pt}\mathcal{C}_{t}\hspace*{-1pt}\times\hspace*{-1pt}\mathcal{P}_{t}^i$ of positive probability where $s_{\tau}^{1:N}\hspace*{-3pt}=\hspace*{-2pt}\zeta_{\tau}^{1:N}\hspace*{-2pt}(p_{\tau}^{1:N}\hspace*{-1pt},\hspace*{-1pt}c_{\tau};\hspace*{-1pt}\tilde{g}_{1\hspace*{-1pt}:\tau-1}\hspace*{-1pt})$ for $\tau\in\mathcal{T}$;\vspace{5pt}
				 
		\item given an arbitrary strategy profile \red{$\tilde{g}$ of the form} $\tilde{g}\hspace*{-1pt}:=\hspace*{-1pt}\{\tilde{g}^{i}_t:\mathcal{S}_t^i\hspace*{-1pt}\times \hspace*{-1pt}\mathcal{C}_t\rightarrow \Delta(\mathcal{A}_t^i), i\hspace*{-2pt}\in\hspace*{-2pt}\mathcal{N}, t\hspace*{-2pt}\in\hspace*{-2pt}\mathcal{T}\}$, $i\hspace*{-2pt}\in\hspace*{-2pt}\mathcal{N}$, and $t\hspace*{-2pt}\in\hspace*{-2pt}\mathcal{T}$,
		\begin{align}
		\hspace*{-25pt}\mathbb{P}^{\tilde{g}^{}_{1:t-1}}\hspace*{-2pt}\left\{\hspace*{-2pt}s_t^{-i}\hspace*{-1pt}\Big|p_t^i\hspace*{-1pt},\hspace*{-1pt}c_t\hspace*{-2pt}\right\}\hspace*{-3pt}=\hspace*{-2pt}\mathbb{P}^{\tilde{g}^{}_{1:t-1}}\hspace*{-2pt}\left\{\hspace*{-1pt}s_t^{-i}\hspace*{-1pt}\Big|s_t^i\hspace*{-1pt},\hspace*{-1pt}c_t\hspace*{-2pt}\right\}\hspace*{-1pt},\hspace*{-4pt}\label{eq:sufficientinfo}
		\end{align}
		for all realizations $\{c_{t}\hspace*{-1pt},\hspace*{-1pt}p_{t}^i\}\hspace*{-2pt}\in\hspace*{-2pt}\mathcal{C}_{t}\hspace*{-2pt}\times\hspace*{-2pt}\mathcal{P}_{t}^i$ of positive probability where $s_{\tau}^{1:N}\hspace*{-3pt}=\hspace*{-2pt}\zeta_{\tau}^{1:N}\hspace*{-2pt}(p_{\tau}^{1:N}\hspace*{-1pt},\hspace*{-1pt}c_{\tau};\hspace*{-1pt}\tilde{g}_{1\hspace*{-1pt}:\tau-1}\hspace*{-1pt})$ for $\tau\in\mathcal{T}$.		
	\end{enumerate}
	\label{def:sufficient}
\end{definition}
There are four key differences  between the definition of sufficient private information and that of private payoff relevant information. First, we allow that the definition \red{and the update rule} of sufficient information $S_t^i$ to depend on the agents' strategies $g_{1:t-1}$. Second, comparing to part (ii) of Definition \ref{def:payoff-relevant}, part (ii) of Definition \ref{def:sufficient} requires that sufficient information $S_t$ includes all information relevant to the realization of $Z_{t+1}$ in addition to the information relevant to  the realization of $S_{t+1}$. As we discuss further in Section \ref{sec:discussion}, this is because when signaling occurs in a multi-agent decision problems agents need to have a consistent view about future commonly observable events.  Third, comparing part (iii) of Definition \ref{def:sufficient} to part (iii) of Definition \ref{def:payoff-relevant}, we note that the probability measures in Definition \ref{def:sufficient} depend on the strategy profile $g$ \red{instead of the ope-loop strategy profile $(\hspace*{-1pt}A_{1:T}^{1:N}\hspace*{-2pt}=\hspace*{-2pt}\hat{a}_{1:T}^{1:N}\hspace*{-1pt})$}. Fourth, in part (iv) of Definition \ref{def:sufficient} there is  an additional condition requiring that agent $i$'s sufficient private information $S_t^i$ must be rich enough so that he can form beliefs about agents $-i$'s sufficient private information $S_t^{-i}$; such a condition is absent in Definition \ref{def:payoff-relevant}.

In general, the notion of sufficient private information $S_t^{1:N}$ is more restrictive than that of private payoff relevant information $P_t^{1:N,pr}$. This is because,  $S_t^{1:N}$, $t\in\mathcal{T}$, needs to satisfy the additional condition (iv), and furthermore, open-loop strategies are a strict subset of closed loop strategies. 
Definition \ref{def:sufficient} provides (sufficient) conditions under which agents can compress their private information in a ``mutually consistent' manner.  We would like to point out that conditions (i)-(iv) of Definition \ref{def:sufficient} can have many solutions including the trivial solution $S_t^{i}=P_t^i$.\footnote {We do not discuss the possibility of finding a minimal set of sufficient private information in this chapter, and leave it for future research as such investigation is beyond the scope of this chapter.}

\vspace*{-6pt}

\subsection{Sufficient Common Information}\label{subsec:commoncompress}\vspace*{-1pt}

Based on the characterization of sufficient private information, we present a statistic (compressed version) of the common information $C_t$ that agents need to keep track of over time for decision making purposes.

Fix a choice of sufficient private information $S_t^{1:N}$, $t\in\mathcal{T}$. Define $\mathcal{S}_t^i$ to be the set of all possible realizations of $S_t^i$, and $\mathcal{S}_t:=\prod_{i=1}^N\mathcal{S}_t^i$.     
Given the agents' strategy profile $g$, let $\gamma_t:\mathcal{C}_t\rightarrow \Delta(\mathcal{X}_t\times\mathcal{S}_t)$ denote a mapping that determines a conditional probability distribution over the system state $X_t$ and all the agents' sufficient private information $S_t$ conditioned on the common information $C_t$ at time $t$ as\vspace*{-2pt}
\begin{align}
\gamma_t(c_t)(x_t,s_t)=\mathbb{P}^{g_{1:t-1}}\{X_t=x_t,S_t=s_t|c_t\}, \label{eq:SIBcommon belief}\vspace*{-2pt}
\end{align}
for all  $c_t\in\mathcal{C}_t, x_t\in\mathcal{X}_t,s_t\in\mathcal{S}_t$.

We call  the collection of mappings $\gamma:=\{\gamma_t,t\in\mathcal{T}\}$ a \textit{sufficient information based belief system} (SIB belief system). 
Note that $\gamma_t$ is only a function of the common information $C_t$, and thus, it is computable by all agents. Let $\Pi_t^\gamma\hspace*{-2pt}:=\hspace*{-2pt}\gamma_t(C_t)$ denote the (random) common information based belief that agents hold under belief system $\gamma$ at $t$. We can interpret $\Pi_t^\gamma$ as the common belief that each agent holds about the system state $X_t$ and all the agents' (including himself) sufficient private information $S_t$ at time $t$. We call the SIB belief $\Pi_t$ a  \textit{sufficient common information} for the agents.   
In the rest of the paper, we write $\Pi_t$ and drop the superscript $\gamma$ whenever such a simplification in notation is clear. Moreover, we use the terms sufficient common information and SIB belief interchangeably.

\vspace*{-11pt}
 
 \subsection{Sufficient Information based Strategy}\vspace*{-2pt}
 The combination of sufficient private information $S_t^{1:N}$ and sufficient common information (the SIB belief) $\Pi_t$ offers a mutually consistent compression of the agents' private and common  information. 
 Consider a class of strategies that are based on the information given by $(\Pi_t,S_t^i)$ for each agent $i\hspace*{-2pt}\in\hspace*{-2pt}\mathcal{N}$ at time $t\in\mathcal{T}$. We call the mapping $\sigma^i_t\hspace*{-2pt}:\hspace*{-2pt}\Delta(\mathcal{X}_t\times\mathcal{S}_t)\hspace*{-1pt}\times \mathcal{S}_t^i\rightarrow\hspace*{-1pt} \Delta(\mathcal{A}_t^i)$ a \textit{Sufficient Information Based (SIB) strategy} for agent $i$ at time $t$. A SIB strategy $\sigma^i_t$ determines a probability distribution for agent $i$'s action $A_t^i$ at time $t$ given his information $(\Pi_t,S_t^i)$. A SIB strategy is a strategy where agents only use the sufficient common information $\Pi_t=\gamma_t(C_t)$ (instead of \red{complete} common information $C_t$), and the sufficient private information $S_t^i=\zeta_t^i(P_t^i,C_t;g_{1:t-1})$ (instead of complete private information $P_t^i$). A collection of SIB strategies $\{\sigma_{1:T}^1,...,\sigma_{1:T}^N\}$ is called a \textit{SIB strategy profile} $\sigma$.  The set of SIB strategies is a subset of general strategies, defined in Section \ref{sec:model}, as we can define,
 \begin{align}
 g^{(\sigma,\gamma),i}_t(h_t^i):=\sigma^i_t(\pi_t^\gamma,s_t^i) \quad \forall t\hspace*{-2pt}\in\hspace*{-2pt}\mathcal{T}  \label{eq:SIB-strategy}
 \end{align} 
 We note that from Definition \ref{def:sufficient} and (\ref{eq:SIBcommon belief}), the realizations $\pi_t$ and $s_t^{1:N}$ at $t$ only depends on $g_{1:t-1}$. Therefore, strategies $g^{(\sigma,\gamma),i}_t$, defined above via (\ref{eq:SIB-strategy}) needs to be determined iteratively as follows; 
 for $t\hspace*{-2pt}=\hspace*{-2pt}1$, $g^{(\sigma,\gamma),i}_1(h_1^i)\hspace*{-2pt}=\hspace*{-2pt}\sigma^i_1(\pi_1^\gamma,\zeta_1^i(P_1^i,C_1))$;  
 for $t\hspace*{-2pt}=\hspace*{-2pt}2$, $g^{(\sigma,\gamma),i}_2(h_2^i)\hspace*{-2pt}=\hspace*{-2pt}\sigma^i_2(\pi_2^\gamma,\zeta_2^i(P_2^i,C_2;g^{(\sigma,\gamma)}_1))$;
 $...$;
for $t\hspace*{-2pt}=\hspace*{-2pt}T$, $g^{(\sigma,\gamma),i}_t(h_t^i)\hspace*{-2pt}=\hspace*{-2pt}\sigma^i_t(\pi_t^\gamma,\zeta_2^i(P_t^i,C_t;g^{(\sigma,\gamma)}_{t-1}))$.
 Therefore, strategy $g^{(\sigma,\gamma),i}_t$ is well-defined for all $t\hspace*{-2pt}\in\hspace*{-2pt}\mathcal{T}$ and $i\hspace*{-2pt}\in\hspace*{-2pt}\mathcal{N}$. 
 
 \vspace*{-12pt}
 
 \subsection{Sufficient Information based Update Rule}\vspace*{-2pt}
 When the agents play a SIB strategy profile $\sigma$, it is possible to determine the SIB belief $\Pi_t$ recursively over time based on $\Pi_{t-1}$ and the new common information $Z_{t}$ via Bayes' rule. Let $\psi_{t}^{\sigma_{t-1}}\hspace*{-2pt}:\hspace*{-2pt}\Delta(\mathcal{X}_{t-1}\hspace*{-2pt}\times\hspace*{-2pt} \mathcal{S}_{t-1})\hspace*{-2pt}\times\hspace*{-2pt} \mathcal{Z}_{t}\rightarrow \Delta(\mathcal{X}_{t}\hspace*{-2pt}\times\hspace*{-2pt} \mathcal{S}_{t})$ describe such a update rule for time $t+1\in\mathcal{T}/\{1\}$ so that 
 \begin{align}
 \Pi_{t}=\psi_{t}^{\sigma_{t-1}}(\Pi_{t-1},Z_{t}).\label{eq:CIBupdaterule}
 \end{align}

We note that the SIB update rule $\psi_t^{\sigma_{t-1}}$ depends on the SIB strategy profile ${\sigma_{t-1}}$ at $t\hspace*{-2pt}-\hspace*{-2pt}1$. In the rest of the paper, we drop the superscript $\sigma$ whenever such a simplification in notation is clear.

\vspace*{-7pt}
\subsection{Special Cases}
We consider the special cases (1)-(3) of the general model we presented in Section \ref{sec:model}, and identify the sufficient private information  $S_{1:T}^{1:N}$; we discuss the application of sufficient information approach to special case (4) in Section \ref{sec:disc:b}.   

\vspace*{5pt}

\textit{1) Real-time source coding-decoding:} \red{The encoder's and decoders' private information are given by $P_t^1\hspace*{-2pt}=\hspace*{-2pt}\{\hspace*{-1pt}X_{1:t}\hspace*{-1pt}\}$ and  $P_t^2\hspace*{-2pt}=\hspace*{-2pt}\{\hspace*{-1pt}\hat{X}_{1:t-1-\delta}\hspace*{-1pt}\}$, respectively}. The agents' common information is given by $C_t\hspace*{-2pt}=\hspace*{-2pt}\{\hspace*{-1pt}M_{1:t-1}\hspace*{-1pt}\}$. We can verify that $S_t^1=\tilde{X}\hspace*{-2pt}=\hspace*{-2pt}\{\hspace*{-1pt}X_{t-\max(k,\delta+1)+1}\hspace*{-1pt},\hspace*{-1pt}...,\hspace*{-1pt}X_t\hspace*{-1pt}\}$ and $S_t^2=\emptyset$ satisfy the conditions of Definition \ref{def:sufficient} ; this is similar to the structural results in \cite[Sections III and VI]{witsenhausen1979structure}. Consequently, the common information based belief is $\Pi_t\hspace*{-2pt}=\hspace*{-2pt}\mathbb{P}^g\{\hspace*{-1pt}X_{t-\max(k,\delta+1)+1:t}|M_{1:{t-1}}\hspace*{-1pt}\}$.

\vspace*{5pt}

\textit{2) Delayed sharing information structure:} We have $P_t^{i}\hspace*{-2pt}=\hspace*{-2pt}\{\hspace*{-1pt}Y_{t-d+1:t}^i,\hspace*{-1pt}A_{t-d+1:t}^i\hspace*{-1pt}\}$ and $C_t\hspace*{-2pt}=\hspace*{-2pt}\{Y_{1:t-d},$ $A_{1:t-d}\}$. Since we do not assume any specific structure for \red{the} system dynamics and \red{the} agents' observations, agent $i$'s complete private information $P_t^i$ is payoff-relevant for him. Therefore, we set $S_t^i\hspace*{-2pt}=\hspace*{-2pt}P_t^i$. Consequently, we have $\Pi_t\hspace*{-2pt}=\hspace*{-2pt}\mathbb{P}^g\hspace*{-1pt}\{\hspace*{-1pt}X_t,\hspace*{-2pt}Y_{t-d+1:t},\hspace*{-2pt}A_{t-d+1:t}|Y_{1:t-d},\hspace*{-1pt}A_{1:t-d}\hspace*{-1pt}\}$. The above sufficient information appears in the first structural result in \cite{nayyar2011optimal}.

\vspace*{5pt}
\textit{3) Real-time multi-terminal communication:} We have $P_t^1\hspace*{-2pt}=\hspace*{-2pt}\{\hspace*{-1pt}X_{1:t}^1\hspace*{-1pt},\hspace*{-2pt}M_{1:t-1}^1\hspace*{-1pt}\}$,\hspace*{-2pt} $P_t^2\hspace*{-2pt}=\hspace*{-2pt}\{\hspace*{-1pt}X_{1:t}^2\hspace*{-1pt},\hspace*{-2pt}M_{1:t-1}^2\hspace*{-1pt}\}$,  $P_t^3\hspace*{-2pt}=\hspace*{-2pt}\{\hspace*{-1pt}Y_{1:t}^1\hspace*{-1pt},\hspace*{-3pt}Y_{1:t}^2\hspace*{-1pt},\hspace*{-2pt}\hat{X}_{1:t-1}\hspace*{-1pt}\}$, and $C_t\hspace*{-2pt}=\hspace*{-2pt}\emptyset$. It is easy to verify that $S_t^1\hspace*{-2pt}=\hspace*{-2pt}(\hspace*{-1pt}X_t^1\hspace*{-1pt},\hspace*{-2pt}\mathbb{P}\{\hspace*{-1pt}R|X_{1:t}^1\hspace*{-1pt}\}\hspace*{-1pt},\hspace*{-2pt}\mathbb{P}\{Y_{1:t-1}^1|M_{1:t-1}^1\}\hspace*{-1pt})$, $S_t^2\hspace*{-2pt}=\hspace*{-2pt}(\hspace*{-1pt}X_t^2\hspace*{-1pt},\hspace*{-1pt}\mathbb{P}\{\hspace*{-1pt}R|X_{1:t}^2\hspace*{-1pt}\}\hspace*{-1pt},\hspace*{-2pt}\mathbb{P}\{\hspace*{-1pt}Y_{1:t-1}^2|M_{1:t-1}^2\hspace*{-1pt}\}\hspace*{-1pt})$, and $S_t^3\hspace*{-2pt}=\hspace*{-2pt}P_t^3$; \red{this sufficient information corresponds to the structural results  that appear \cite{nayyar2011structure}.}
\vspace{1pt}

\vspace*{-8pt}
  
		%\hspace*{-5pt}
\section{Main Results}\label{sec:nonstrategic}

In this section, we present our main results for the analysis of dynamic decision problems with asymmetric information and non-strategic agents \red{using the notion of sufficient information}.
We first provide a generalization of the policy-independence property of beliefs to decision problems with many agents (Theorem \ref{thm:beliefindependence}). Second, we show that the set of SIB strategies are rich enough so that restriction to them is without loss of generality (Theorem \ref{thm:nonstrategic}). That is, given any strategy profile $g$, there exists a SIB strategy profile $\sigma$ such that every agent gets the same flow of utility \red{over time} under $\sigma$ as the one under $g$. Third, we consider dynamic team problems with asymmetric information. We show that using the SIB strategies, we can decompose the problem sequentially over time, \red{formulate a dynamic program}, and determine a globally optimal policy via backward induction (Theorem \ref{thm:team}).

\begin{theorem}[Policy-independence belief property]\label{thm:beliefindependence} \hspace*{2pt}\\
		\indent(i) Consider a general strategy profile $g$.
		If agents $-i$ play according to strategies $g^{-i}$, then for every strategy $g^i$ that agent $i$ plays,
		\begin{align}
		\mathbb{P}^{g}\left\{x_{t},p_t^{-i}\Big|h_t^i\right\}=\mathbb{P}^{g^{-i}}\left\{x_{t},p_t^{-i}\Big|h_t^i\right\}.\label{eq:beliefindependence1}
		\end{align}
		
		(ii) Consider a SIB strategy profile $\sigma$ along with the associated update rule $\psi$.
		If agents $-i$ play according to SIB strategies $\sigma^{-i}$, then for every general strategy ${g}^i$ that agent $i$ plays,
		\begin{align}
		\hspace*{-6pt}\mathbb{P}^{\hspace*{-1pt}\sigma^{-i}\hspace*{-2pt},g^i}_\psi\hspace*{-3pt}\left\{\hspace*{-1pt}x_{t},p_t^{-i}\Big|h_t^i\hspace*{-1pt}\right\}\hspace*{-2pt}=\hspace*{-2pt}\mathbb{P}^{\sigma^{-i}}_\psi\hspace*{-3pt}\left\{\hspace*{-1pt}x_{t},p_t^{-i}\Big|h_t^i\hspace*{-1pt}\right\}\hspace*{-2pt}.\label{eq:beliefindependence2}
		\end{align}
\end{theorem}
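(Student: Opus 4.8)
The plan is to prove both parts by a single Bayes'-rule computation on the factorized joint law of the trajectory, isolating and cancelling the factors that carry agent $i$'s own strategy. For part (i), I would first write the joint law of a complete trajectory $(x_{1:t},y_{1:t}^{1:N},z_{1:t},a_{1:t-1}^{1:N})$ under the profile $g$. Using the mutual independence of the primitives $X_1,W^x,W^c,W^{1:N}$ and the Markovian dynamics (\ref{eq:systemdynamic1})--(\ref{eq:systemdynamic3}), this law factorizes into three pieces: the initial law $\eta(x_1)$; the state-transition and observation kernels induced by $f$, $O^{1:N}$, $O^c$, none of which depends on any strategy; and the product of strategy factors $\prod_{\tau<t}\prod_{j\in\mathcal{N}} g_\tau^j(a_\tau^j\mid h_\tau^j)$. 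The only place the tested strategy $g^i$ enters is the subproduct $\prod_{\tau<t} g_\tau^i(a_\tau^i\mid h_\tau^i)$.

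The crux is that, under perfect recall, for every $\tau\le t$ the history $h_\tau^i$ is a deterministic function of the conditioning history $h_t^i$, and agent $i$'s past actions $a_{1:t-1}^i$ are themselves components of $h_t^i$. Hence each factor $g_\tau^i(a_\tau^i\mid h_\tau^i)$, $\tau<t$, is fully determined by the fixed event $h_t^i$ and is constant with respect to the marginalization over the hidden variables. Applying Bayes' rule, $\mathbb{P}^g\{x_t,p_t^{-i}\mid h_t^i\}=\mathbb{P}^g\{x_t,p_t^{-i},h_t^i\}/\mathbb{P}^g\{h_t^i\}$, and marginalizing the factorized joint law over the remaining hidden variables in numerator and denominator, the common subproduct $\prod_{\tau<t} g_\tau^i(a_\tau^i\mid h_\tau^i)$ pulls outside both sums and cancels. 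What survives involves only $g^{-i}$ and the strategy-independent kernels; this common value is exactly what is denoted $\mathbb{P}^{g^{-i}}\{x_t,p_t^{-i}\mid h_t^i\}$ (equivalently, the belief computed with agent $i$'s realized actions held fixed and only $g^{-i}$ used for the others), which establishes (\ref{eq:beliefindependence1}) and exhibits the invariance to $g^i$.

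For part (ii), I would reduce to part (i). A SIB profile $\sigma$ together with the update rule $\psi$ induces, via (\ref{eq:SIB-strategy}), a genuine general strategy profile; in particular $\sigma^{-i}$ induces $g^{(\sigma,\gamma),-i}$. The one additional point is to confirm that the measure $\mathbb{P}_\psi$, in which the common belief is propagated by $\psi$ (see (\ref{eq:CIBupdaterule})), coincides with the law induced by these general strategies --- i.e., that iterating $\psi$ reproduces $\gamma_t(c_t)=\mathbb{P}^{g_{1:t-1}}\{X_t,S_t\mid c_t\}$. I would verify this by induction on $t$, using (\ref{eq:CIBupdaterule}) together with the recursive update (\ref{eq:sufficientupdate}) of sufficient private information; then (\ref{eq:beliefindependence2}) follows by applying (\ref{eq:beliefindependence1}) to the profile $(g^{(\sigma,\gamma),-i},g^i)$.

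I expect the main obstacle to be the bookkeeping in the factorization: cleanly separating the variables fixed by $h_t^i$ (over which the $g^i$-factors are constant) from those that are marginalized, and justifying the perfect-recall measurability claim that $h_\tau^i$ is a function of $h_t^i$ uniformly in $\tau\le t$. In part (ii), the delicate additional ingredient is the $\psi$-consistency step, ensuring that the declared SIB update genuinely coincides with the Bayesian posterior so that part (i) transfers intact.
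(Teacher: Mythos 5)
Your argument is correct, but for part (i) it takes a genuinely different route from the paper. The paper proceeds by induction on $t$: it expands $\mathbb{P}^{g}\{x_t,h_t^{-i}\mid h_t^i\}$ one step backward via the transition and observation kernels and the factor $g_{t-1}^{-i}(h_{t-1}^{-i})(a_{t-1}^{-i})$, applies Bayes' rule to the leftover conditional, and invokes the induction hypothesis to replace $\mathbb{P}^{g}$ by $\mathbb{P}^{g^{-i}}$ in both the numerator and the normalizing term. You instead factorize the full trajectory law once and observe that, by perfect recall, every factor $g_\tau^i(a_\tau^i\mid h_\tau^i)$, $\tau<t$, is a constant on the event $\{H_t^i=h_t^i\}$ (since $h_\tau^i$ and $a_\tau^i$ are components of $h_t^i$), so the subproduct $\prod_{\tau<t}g_\tau^i(a_\tau^i\mid h_\tau^i)$ pulls out of both the numerator and denominator of the Bayes ratio and cancels; this is the classical policy-cancellation argument and is valid here (the degenerate case where the pulled-out product vanishes forces $\mathbb{P}^{g}\{h_t^i\}=0$, so the conditional is undefined there anyway and no division-by-zero issue arises). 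Your global cancellation is arguably more transparent about \emph{why} the result holds, at the cost of heavier bookkeeping in the marginalization; the paper's induction is more mechanical but produces the intermediate identities (the $t-1$ versions of the numerator and denominator) in a form it reuses. For part (ii) both you and the paper reduce to part (i) by noting that $\sigma^{-i}$ together with the fixed update rule $\psi$ induces, via the iterative construction in (\ref{eq:SIB-strategy}), a bona fide general strategy profile for agents $-i$ that does not depend on agent $i$'s actual strategy $g^i$; your extra $\psi$-consistency check is a reasonable (if slightly stronger than necessary) way to make that reduction precise.
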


\vspace*{2pt}

Theorem \ref{thm:beliefindependence} provides a generalization of the \textit {policy-independence belief property} \red{for} the centralized stochastic control \red{problem} \cite{kumar1986stochastic} to multi-agent decision making problems. Part (i) of Theorem \ref{thm:beliefindependence} states 
that, under perfect recall, agent $i$'s belief is independent of his actual strategy $g^i$. Part (ii) of Theorem \ref{thm:beliefindependence} refers to the case where agents $-i$ play SIB strategies $\sigma^{-i}$ and update their SIB belief according to SIB update rule $\psi$. The update rule $\psi$ is determined based on $(\sigma^{-i},\sigma^i)$ via Bayes' rule, where $\sigma^i$ denotes the SIB strategy that agents $-i$ assume agent $i$ utilizes. Equation (\ref{eq:beliefindependence2}) states that even if agent $i$ unilaterally and privately deviates from his SIB strategy, his belief is independent of his actual strategy $g^i$, and only depends on the other agents's strategy $\sigma^{-i}$ as well as the other agents' assumption about the SIB strategy $\sigma^i$ (or equivalently the SIB update rule $\psi$).\footnote{The results of Theorem \ref{thm:beliefindependence} provides a crucial property for the analysis of decision problems with strategic agents. This is because it ensures that an agent's unilateral deviation does not influence his belief (see the companion paper \cite{game} for more details).}

In POMDPs it is shown that restriction to Markov strategies is without loss of optimality. We provide a generalization of this result to decision problems with many agents. %In contrast to the result for POMDP, 
We show that restriction to SIB strategies is \textit{without loss of generality} \red{in non-strategic settings} given that the agents have access to a \textit{public randomization device}. We say that the agents have access to a public randomization device if at every time $t\hspace*{-2pt}\in\hspace*{-2pt}\mathcal{T}$ they observe a public random signal $\omega_t$ that is completely independent of all events and primitive random variables in the decision problem  and is uniformly distributed on $[0\hspace*{-1pt},\hspace*{-1pt}1]$, and is independent across time. As a result, in general, at every $t\hspace*{-2pt}\in\hspace*{-2pt}\mathcal{T}$, all agents can condition their actions on the realization of $\omega_t$ as well as their own information. \red{In other words, a public randomization device enables the agents to play \textit{correlated} randomized strategies.} We denote by $\sigma_t^i(\Pi_t,S_t^i,\omega_t)$ agent $i$'s SIB strategy using the public randomization device for every $i\in\mathcal{N}$ and $t\in\mathcal{T}$.   

\begin{theorem}\label{thm:nonstrategic}
	Assume that the non-strategic agents have access to a public randomization device. Then, for any strategy profile $g$ there exists an equivalent SIB strategy profile $\sigma$ that results in the same expected flow of utility, \textit{i.e.} 
	\begin{gather}
	\mathbb{E}^g\hspace*{-2pt}\left\{\hspace*{-2pt}\sum_{\tau=t}^{T}\hspace*{-1pt}u_\tau^i(g_\tau^{1:N}\hspace*{-1pt}(\hspace*{-1pt}H_\tau^{1:N}\hspace*{-1pt}),\hspace*{-1pt}X_\tau\hspace*{-1pt})\hspace*{-2pt}\right\}
	=\hspace*{-1pt}\mathbb{E}^\sigma\hspace*{-2pt}\left\{\hspace*{-2pt}\sum_{\tau=t}^{T}\hspace*{-1pt}u^{1:N}_\tau\hspace*{-1pt}(\sigma_\tau^{i}\hspace*{-1pt}(\hspace*{-1pt}\Pi_\tau\hspace*{-1pt},\hspace*{-1pt}S_\tau^{1:N},\omega_\tau),\hspace*{-2pt}X_\tau\hspace*{-1pt})\hspace*{-2pt}\right\}\hspace*{-2pt}, \hspace*{-4pt}
	\end{gather}
	for all $i\hspace*{-2pt}\in\hspace*{-2pt}\mathcal{N}$ and $t\hspace*{-2pt}\in\hspace*{-2pt}\mathcal{T}$.
\end{theorem}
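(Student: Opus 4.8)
The plan is to fix an arbitrary strategy profile $g$, construct from it an SIB strategy profile $\sigma$ (exploiting the public signal $\omega_t$), and then prove by forward induction on $t$ that the joint law of the quadruple $(X_t,S_t^{1:N},\Pi_t,A_t)$ is the same under $g$ and under $\sigma$. Since $u_t^i(X_t,A_t)$ depends only on $(X_t,A_t)$, matching the marginal law of $(X_t,A_t)$ at every $t$ immediately gives $\mathbb{E}^g\{u_t^i(X_t,A_t)\}=\mathbb{E}^\sigma\{u_t^i(X_t,A_t)\}$ for every $i$ and $t$, and summing over $\tau=t,\dots,T$ yields the claimed identity. To run the induction cleanly I would carry the stronger invariant on the full quadruple and reduce the per-stage utility using part (iii) of Definition \ref{def:sufficient}, which lets me replace $\mathbb{E}\{u_t^i\mid c_t,p_t^i,a_t\}$ by $\mathbb{E}\{u_t^i\mid c_t,s_t^i,a_t\}$ under either measure.

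For the construction of $\sigma$, the natural choice is to let $\sigma_t^i(\pi_t,s_t^i,\cdot)$ reproduce, after averaging over $\omega_t$, the conditional action law that $g$ induces. The first thing to verify is that this object is well defined as a function of $(\pi_t,s_t^i)$ rather than of the full common history $c_t$; here I would invoke Theorem \ref{thm:beliefindependence}, which guarantees that the relevant conditional beliefs depend on the past only through $\Pi_t=\gamma_t(C_t)$ and are unaffected by the substitution of one agent's rule, so that the update rule $\psi$ and the belief $\Pi_t$ stay consistent once $g$ is replaced by $\sigma$. Part (iv) of Definition \ref{def:sufficient} is what makes agent $i$'s marginal action law measurable with respect to his own sufficient information: since $\mathbb{P}\{s_t^{-i}\mid p_t^i,c_t\}=\mathbb{P}\{s_t^{-i}\mid s_t^i,c_t\}$, a short Bayes computation gives $\mathbb{P}\{p_t^i\mid c_t,s_t^{1:N}\}=\mathbb{P}\{p_t^i\mid c_t,s_t^i\}$, whence $\mathbb{P}\{a_t^i\mid c_t,s_t^{1:N}\}=\mathbb{P}\{a_t^i\mid c_t,s_t^i\}$, i.e. agent $i$'s action under $g$ depends on the type profile only through his own type.

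The inductive step splits into a transition part and an action-matching part. For the transition part I assume the quadruple matches at $t$ and propagate to $t+1$: the state updates by the common kernel of \eqref{eq:systemdynamic1}, each $S_{t+1}^i$ updates recursively by $\phi_t^i$ (part (i) of Definition \ref{def:sufficient}), and, crucially, part (ii) of Definition \ref{def:sufficient} ensures that the joint transition of $(S_{t+1}^{1:N},Z_{t+1})$ given the past depends on it only through $(s_t^{1:N},c_t,a_t)$; together with \eqref{eq:CIBupdaterule} for $\Pi_{t+1}=\psi_t(\Pi_t,Z_{t+1})$ this shows that the law of $(X_{t+1},S_{t+1}^{1:N},\Pi_{t+1})$ is a fixed function of the law of $(X_t,S_t^{1:N},\Pi_t,A_t)$, hence identical under both measures. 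The action-matching part requires that the conditional joint law $\mathbb{P}\{A_t=a_t\mid\Pi_t,S_t^{1:N}\}$ induced by $g$ be reproduced by the SIB profile, which randomizes independently across agents once $\omega_t$ is fixed.

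This last point is where I expect the real difficulty to lie, and it is exactly what forces the use of the public randomization device. Under $g$ the agents' actions are conditionally independent only given the fine profile $(C_t,P_t^{1:N})$, not given the compressed profile $(C_t,S_t^{1:N})$; consequently the $g$-induced conditional joint action law is a mixture, over the correlated residual of $P_t^{1:N}$, of products of per-agent kernels, and this mixture is in general genuinely correlated. An SIB profile using only $(\Pi_t,S_t^i,\omega_t)$ produces, for each fixed value of $\omega_t$, a product across agents, so over $\omega_t$ it can realize precisely the mixtures of product (``publicly correlated'') action laws. The crux is therefore to show that the target law lies in this class, i.e. that it can be written as a distribution over deterministic prescription profiles $(\delta_t^i:\mathcal{S}_t^i\to\mathcal{A}_t^i)_i$ sampled through $\omega_t$; the own-type measurability established above, together with the reduction afforded by part (iii) of Definition \ref{def:sufficient}, is what I would use to exhibit such a representation and close the induction. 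Once the representation is in hand, defining $\sigma_t^i$ to read off the $i$-th coordinate of the prescription profile selected by $\omega_t$ and apply it to $S_t^i$ completes the construction and the argument.
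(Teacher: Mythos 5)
Your architecture matches the paper's (construct $\sigma$ from $g$ via the public signal, forward induction on the law of an information-state tuple; the paper runs the forward induction on $(\Pi_t,S_t)$ and a separate backward induction on continuation payoffs, but that difference is cosmetic), and you have correctly isolated the crux: the $g$-induced joint action law given $(\Pi_t,S_t^{1:N})$ is a correlated mixture of products, and one must show it is realizable as a distribution over prescription profiles drawn from a single $\omega_t$ that is independent of $(\Pi_t,S_t^{1:N})$. The problem is that you stop exactly there. The tools you propose for closing this step do not suffice: part (iii) of Definition \ref{def:sufficient} is a statement about conditional expected utilities and says nothing about representing a joint distribution, and the own-type measurability you derive from part (iv), namely $\mathbb{P}\{a_t^i\mid c_t,s_t^{1:N}\}=\mathbb{P}\{a_t^i\mid c_t,s_t^i\}$, pins down only the per-agent \emph{marginals} of the action law. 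Matching marginals does not determine the joint law of $A_t$, and the joint law is what drives the transition of $X_{t+1}$ and hence the whole induction.

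What actually closes the gap in the paper is a concrete coupling construction that you do not supply. For each agent one defines a residual variable $R_t^i$ by the quantile (inverse-CDF) partition of $[0,1]$ according to $p(h_t^{i,k}\mid \pi_t,s_t^i)$, so that $(\Pi_t,S_t^i,R_t^i)$ reconstructs $H_t^i$ exactly; Lemma \ref{lemma:noise} then uses precisely condition (iv) of Definition \ref{def:sufficient} (the computation $\mathbb{P}^g\{s_t^{-i}\mid h_t^i\}=\mathbb{P}^g\{s_t^{-i}\mid s_t^i,c_t\}$) to show that $R_t^{1:N}$ is independent of $(\Pi_t,S_t^{1:N})$. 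That independence is the whole point: it lets the correlated residual vector be simulated as $\hat R_t^{1:N}(\omega_t)$ from the public device \emph{without knowledge of the realized type profile}, and the SIB strategy is then defined by $\sigma_t^i(\Pi_t,S_t^i,\omega_t):=g_t^i\bigl(F^{-1}_{R_t^i\mid S_t^i,\Pi_t}(\hat R_t^i(\omega_t),\Pi_t,S_t^i)\bigr)$, i.e.\ apply $g_t^i$ to a simulated history with the correct joint law. Your Bayes computation $\mathbb{P}\{p_t^i\mid c_t,s_t^{1:N}\}=\mathbb{P}\{p_t^i\mid c_t,s_t^i\}$ is the right ingredient, but it must be upgraded to joint independence of the entire residual vector from the sufficient-information profile and combined with the explicit quantile coupling; without that, the existence of the representation you need is asserted rather than proved. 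A secondary, smaller issue: you appeal to Theorem \ref{thm:beliefindependence} to argue that $\sigma_t^i$ is well defined as a function of $\pi_t$ rather than $c_t$, but that theorem gives policy independence of beliefs, not measurability with respect to $\gamma_t(C_t)$; in the paper's construction this issue never arises because the residual $R_t^i$ carries the remaining $c_t$-information and the resulting map is a function of $(\Pi_t,S_t^i,\omega_t)$ by construction.
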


We provide an intuitive explanation for the result of Theorem  \ref{thm:nonstrategic} below. For every agent $i\in\mathcal{N}$, his complete information history $H_t^i$ at any time $t\in\mathcal{T}$ consists of two components: (i) one component captures his information about past events that is relevant to the continuation decision problem; and (ii) another component that, given the first component, captures the information about past events that is irrelevant to the continuation decision problem. We show that the combination of sufficient private information $S_t^i$ and sufficient common information $\Pi_t$ contains the first component. Nevertheless, in general, the agents can coordinate their action by incorporating the second component into their decision since their information about the past events is correlated. Let $R_t^i$
denote the part of agent $i$'s information $H_T^i$ that is not captured by $(\Pi_t,S_t^i)$. We show that the set of $\{R_t^1,...,R_t^N\}$ are jointly independent of $\{(\Pi_t,S_t^i),...,(\Pi_t,S_t^N)\}$ (Lemma \ref{lemma:noise} in the Appendix). Therefore, at every time $t\in\mathcal{T}$, we can generate a set of signals $\{\tilde{R}_t^1,...,\tilde{R}_t^N\}$, one for each agent, using the public randomization device $\omega$ so that they are identically distributed as $\{R_t^1,...,R_t^N\}$. Using the signals $\{\tilde{R}_t^1,...,\tilde{R}_t^N\}$ along with the information state $(\Pi_t,S_t^{i})$ for every agent $i\in\mathcal{N}$, we can thus recreate a (simulated) history that is identically distributed to $H_t^i$. This implies that, given a public randomization device $\omega$, it is sufficient for each agent $i\in\mathcal{N}$ to only keep track of $(\Pi_t,S_t^{i})$ instead of his complete history $H_t^i$, and play a SIB strategy $\sigma^i$ to achieve an identical (in distribution) sequence of outcomes per stage as those under the strategy profile $g$.

The result of Theorem \ref{thm:nonstrategic} states that the the class of SIB strategies \red{characterizes} a set of \textit{simpler} strategies where the agents only keep track of a compressed version of their information rather than their entire information history. Moreover, the restriction to the class of SIB strategies is without loss of generality. Thus, along with results appearing in the companion paper \cite{game}, the result of Theorem \ref{thm:nonstrategic} suggests that the sufficient information approach proposed in this paper presents a unified methodology for the study of decision problems with many non-strategic or strategic agents and asymmetric information.

We would like to discuss the implication of Theorem \ref{thm:nonstrategic} for two special instances of our model. First, when $N=1$, there is no need for a public randomization device since the single decision maker does not need to correlate the outcome of his randomized strategy with any other agent. Therefore, the result of Theorem \ref{thm:nonstrategic} states that the restriction to Markov strategies in POMDPs is without loss of generality. Second, when $N>1$ and the agents have identical utilities, \textit{i.e.} dynamic teams, utilizing a public randomization device does not improve the performance. \red{This is because, in dynamic teams a randomized strategy profile is optimal if and only if it is optimal for every realization of the randomization.} Therefore, the restriction to SIB strategies in dynamic teams is without loss of optimality.

Using the  result of Theorem \ref{thm:nonstrategic}, we present \red{below} a sequential decomposition of dynamic teams over time. We formulate a dynamic program that enables us to determine a globally optimal strategy profile via backward induction.

\begin{theorem} \label{thm:team} A SIB strategy profile $\sigma$ is a globally optimal solution to a dynamic team problem with asymmetric information if it  solves the following dynamic program:
	\begin{align}
	&V_{T+1}(\pi_{t+1}):=0,	\quad\quad	\forall \pi_{t+1},\forall i\in\mathcal{N};\label{eq:team-initial}
	%&\psi_{T+1}(\pi_T,\cdot)=\pi_T;
	\end{align}
	at every $t\in\mathcal{T}$, and for every $\pi_t$,
	\begin{align}
	&\sigma_t^{1:N}\hspace*{-1pt}(\pi_t,\hspace*{-1pt}\cdot)\hspace*{-2pt}\in \hspace*{-8pt}\argmax_{\alpha^{1:N}: \mathcal{S}_t^{1:N}\rightarrow \Delta(\mathcal{A}_t^{1:N})}\hspace*{-7pt}\mathbb{E}_{\pi_t}\hspace*{-2pt}\big\{\hspace*{-1pt}u_t^{\text{team}}\hspace*{-1pt}(\hspace*{-1pt}X_t\hspace*{-1pt},\hspace*{-1pt}\alpha^{1:N}(S_t^{1:N})\hspace*{-1pt})+V_{t+1}\hspace*{-1pt}(\psi_t^{\sigma_{t}}\hspace*{-1pt}(\hspace*{-1pt}\pi_t\hspace*{-1pt},\hspace*{-1pt}\alpha^{\hspace*{-2pt}1:N}\hspace*{-3pt},\hspace*{-1pt}Z_{t+1}\hspace*{-1pt})\hspace*{-1pt})\hspace*{-2pt}\big\}\hspace*{-1pt},\hspace*{-3pt} \label{eq:team-max}\\
	&V_t(\pi_t)\hspace*{-2pt}:=\hspace*{-6pt}\max_{\alpha^{1:N}: \mathcal{S}_t^{1:N}\rightarrow \Delta(\mathcal{A}_t^{1:N})}\hspace*{-6pt}\mathbb{E}_{\pi_t}\hspace*{-2pt}\big\{\hspace*{-1pt}u_t^{\text{team}}\hspace*{-1pt}(\hspace*{-1pt}X_t\hspace*{-1pt},\hspace*{-1pt}\alpha^{1:N}(S_t^{1:N})\hspace*{-1pt})+V_{t+1}\hspace*{-1pt}(\psi_t^{\sigma_{t}}\hspace*{-1pt}(\hspace*{-1pt}\pi_t\hspace*{-1pt},\hspace*{-1pt}\alpha^{1:N}\hspace*{-3pt},\hspace*{-1pt}Z_{t+1}\hspace*{-1pt})\hspace*{-1pt})\hspace*{-2pt}\big\}\hspace*{-1pt}.\label{eq:team-value}
	\end{align}
\end{theorem}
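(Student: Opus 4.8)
The plan is to prove this verification theorem in two stages: first reduce the search for a globally optimal team strategy to a search over SIB strategies, then recognize the SIB-restricted problem as a fully observed Markov decision process (MDP) whose state is the common belief $\Pi_t$, so that (\ref{eq:team-initial})--(\ref{eq:team-value}) is exactly its Bellman recursion. For the reduction I would invoke Theorem \ref{thm:nonstrategic}: for any general profile $g$ there is a SIB profile $\sigma$ yielding, for each agent, the same expected utility-to-go from every time $t$, and in particular the same expected total team utility; moreover, as noted after Theorem \ref{thm:nonstrategic}, a public randomization device cannot improve the team value. Therefore the supremum of the team objective over all strategy profiles equals its supremum over SIB profiles, and it suffices to optimize within the SIB class.

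The heart of the argument is to show that under a SIB profile the common belief $\Pi_t$ is a controlled Markov chain and the stage reward is a function of $(\Pi_t,\sigma_t)$ alone. I would first write the expected stage utility, conditioned on $c_t$, as $\widehat{u}_t(\pi_t,\alpha^{1:N}):=\mathbb{E}_{\pi_t}\{u_t^{\text{team}}(X_t,\alpha^{1:N}(S_t^{1:N}))\}$; this is legitimate because part (iii) of Definition \ref{def:sufficient} (equation (\ref{eq:payoff-relevant2})) makes the conditional expected utility depend on the private information only through $S_t$, while $\Pi_t=\gamma_t(C_t)$ is precisely the conditional law of $(X_t,S_t)$ given $C_t$. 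Next, using the update rule (\ref{eq:CIBupdaterule}) together with part (ii) of Definition \ref{def:sufficient} (equation (\ref{eq:sufficientdynamic})), I would show that the conditional law of the next observation $Z_{t+1}$, and hence of $\Pi_{t+1}=\psi_t^{\sigma_t}(\Pi_t,Z_{t+1})$, given the whole common history depends only on $(\Pi_t,\sigma_t)$. This certifies that $\Pi_t$ evolves as a controlled Markov chain with control $\sigma_t$ and noise $Z_{t+1}$, and that the continuation value in (\ref{eq:team-max})--(\ref{eq:team-value}) is well defined.

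With the MDP structure established, the rest is the standard backward-induction verification. For an arbitrary SIB profile $\sigma'$ set $J_t^{\sigma'}(\pi_t):=\mathbb{E}^{\sigma'}\{\sum_{\tau=t}^{T}\widehat{u}_\tau(\Pi_\tau,\sigma'_\tau)\mid\Pi_t=\pi_t\}$. Starting from $V_{T+1}\equiv 0$ and proceeding backward, I would prove by induction that $V_t(\pi_t)\geq J_t^{\sigma'}(\pi_t)$ for every $\pi_t$ and every SIB profile $\sigma'$, with equality when $\sigma'$ is the maximizer $\sigma$ of (\ref{eq:team-max}); the induction step follows from Bellman optimality in (\ref{eq:team-max}) and the tower property applied to the Markov transition from the previous paragraph. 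Evaluating at $t=1$ shows that $\sigma$ attains the optimal team value over all SIB profiles, which by the reduction equals the globally optimal value.

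The main obstacle I anticipate is the controlled-Markov step: verifying via Bayes' rule that $\Pi_{t+1}$ is a deterministic function of $(\Pi_t,\sigma_t,Z_{t+1})$ and that its conditional law given the past depends only on $(\Pi_t,\sigma_t)$. This is where the entire structure of sufficient information is consumed, since one must check that the joint update of $(X_{t+1},S_{t+1})$ and $Z_{t+1}$ closes on $(X_t,S_t)$ under the compression --- exactly what parts (i)--(ii) of Definition \ref{def:sufficient} are built to provide. A secondary subtlety is the self-referential appearance of $\sigma_t$ inside $\psi_t^{\sigma_t}$ in (\ref{eq:team-max}); this creates no circularity because at stage $t$ the update rule is simply parameterized by the same $\alpha^{1:N}$ that is being optimized.
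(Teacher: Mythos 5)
Your proposal is correct and follows essentially the same route as the paper: reduce to SIB strategies via Theorem \ref{thm:nonstrategic} (discarding the public randomization device because it cannot improve a team's value over finite spaces), then treat $\Pi_t$ as the state of an equivalent centralized problem whose actions are the prescriptions $\alpha^{1:N}$ and run standard backward induction. The paper phrases the second step as a ``super agent''/coordinator POMDP and cites the standard dynamic-programming argument, while you make the controlled-Markov property of $\Pi_t$ and the stage-reward reduction explicit (which is exactly what the paper's Lemma \ref{lemma:lem} and condition (iii) of Definition \ref{def:sufficient} supply); the substance is the same.
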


The results of Theorems \ref{thm:nonstrategic} and \ref{thm:team} extend the results of \cite{nayyar2013decentralized,nayyar2011optimal} for the study of dynamic teams in two directions. First, they state that restriction to  the set of SIB strategies is without loss of generality, while the results of \cite{nayyar2013decentralized,nayyar2011optimal} only state that this restriction is without loss of optimality. Second, the definition of Common Information Based strategies, first presented in \cite{nayyar2013decentralized,nayyar2011optimal}, requires the agents to use all of their private information $P_t^i$, $i\in\mathcal{N}$ (or all their private memory that is a predetermined function of their private information if they do not have perfect recall); the result of Theorem \ref{thm:team} holds for SIB strategies where the agents' private information is effectively compressed , thus, it generalizes/extends the definition of CIB strategies proposed in \cite{nayyar2013decentralized,nayyar2011optimal}.

\vspace*{-5pt}

\section{Discussion} \label{sec:discussion}
\subsection{Constructive algorithm}\label{sec:disc:a}
The sufficient information approach described in Sections \ref{sec:sufficient} and \ref{sec:nonstrategic}, presents a generalization of the notion of information state to dynamic multi-agent decision problems with non-classical information structure. Nevertheless, we would like to point out that our approach does not address all the issues present in the study of dynamic multi-agent decision problems. We discuss the main limitation of our approach below.

In POMDPs, an information state with time-invariant domain can be determined by forming the probability distribution over the system state conditioned on the current information. Our approach does not offer an explicit constructive algorithm that determines a mutually-consistent set of information states, one for each agent, with time-invariant domains in dynamic multi-agent decision problems. Specifically, Definition \ref{def:sufficient}  describes only a set of sufficient conditions that one can use to evaluate whether a specific compression of agents' private information is sufficient for decision making purposes; it does not offer a constructive algorithm to determine a compression of the agents' private information that leads to an information state with time-invariant domain.

Given a set of sufficient private information with time-invariant domain for the agents, we achieve, through the formation of SIB beliefs,  a compression of the agents' common information that results in a set of information states with time-invariant domains. In Sections \ref{sec:model} and \ref{sec:sufficient}, we presented instances of multi-agent decision problems where we can discover a set of information states with time-invariant domains. Nonetheless, it is not clear if such a set of  mutually-consistent information states with time-invariant domains exist for every dynamic multi-agent decision problem. Therefore, an interesting, but challenging, future direction would be to identify classes of dynamic decision problems with non-classical information structure where we can guarantee the existence of a set of mutually-consistent information states with time-invariant domains, and prescribe a constructive methodology for their identification. \red{Moreover, we would like to point out that the sufficient information approach presented here provide sufficient conditions that can be used to evaluate an educated-guess one may have for specific multi-agent problems.}

\vspace*{-10pt}

\subsection{Comparison with other Approaches}	\label{sec:disc:c}
The sufficient information approach proposed in this paper shares similarities and also has differences with existing conceptual approaches to the study of dynamic multi-agent decision problems. Below, we briefly discuss these approaches and compare them with the sufficient information approach. 		

\subsubsection{Comparison with Agent-by-Agent Approach} The agent-by-agent approach proceeds as follows: start with an initial guess of a strategy profile $g$ for all agents. At each iteration, select one agent, say agent $i$. and update his strategy to a best response strategy given the strategy $g^{-i}$ of all other agents. Repeat the process until  a fixed point is reached, that is, when no agent can improve performance by unilaterally changing his strategy. 

If the above-described iterative process converges, the  resulting strategy profile determines an \textit{agent-by-agent optimal strategy profile}; however, such an agent-by-agent optimal strategy profile, in general, is not a globally optimal strategy profile \cite{ho1980team}. This is because the multi-agent decision problems are, in general, not convex in the agents' strategies \cite{mahajan_martins_yuksel}. Therefore, the above-described iterative process does not necessarily converge, or it may converge to a locally optimal strategy profile that is not a globally optimal strategy profile. In contrast to agent-by-agent approach, the sufficient information approach determines a globally optimal strategy profile for multi-agent decision problems with non-strategic agents.   

The agent-by-agent approach can be used to discover qualitative properties of optimal strategies. Specifically, we fix the strategies of all agents except one, say agent $i$, to an arbitrary set of strategies $g^{-i}$, and solve for agent $i$'s best response; to determine agent $i$'s best response we need to solve a POMDP, where the system state and system dynamics, in general, depend on $g^{-i}$. If agent $i$'s best response possesses a property that holds for every choice of $g^{-i}$, then a globally optimal strategy for agent $i$ possesses the same property. In contrast to the agent-by-agent approach \red{where one need to solve a POMDP parameterized by $g^{-i}$}, to discover qualitative properties of a globally optimal strategy profile using the sufficient information approach we only need to check the set of conditions appearing in Definition \ref{def:sufficient} (or equivalently a more general Definition \ref{def:sufficient-general} that will appear in Section \ref{sec:disc:b}). 

Moreover, using the sufficient information approach we can discover qualitative properties of optimal strategies that cannot be discovered by the agent-by-agent approach. %For instance, consider the the delayed sharing information structure example described in section \ref{sec:model} (special case (2)). An arbitrary choice of strategies $g^{j}_t$ for agents $j$, $j\neq i$ at $t$ depend, in general, on agent $j$'s complete history given by $\{Y_{1:t-d},A_{1:t-d},Y_{t-d+1:t}^j,A_{t-d+1:t-1}^j\}$. Therefore, following the agent-by-agent approach, if an agent, say agent $j$, uses the decoder's best response also depend on his complete history of transmitted signal state $X_{1:t}$.  Consequently, the agent-by-agent approach fails to characterize the past history of the system state $X_{1:t-1}$ as an irrelevant information for decision making purposes given $X_t$. However, using the sufficient information approach we can simply show that a globally optimal strategy profile depend only on the current system state $X_t$.         
For instance, consider the following example.

\textit{\textbf{Example.} Consider a team problem with two agents and observable actions, where agent $2$'s action $A_t^2$ does not affect the evolution of $X_t$ for all $t$, \red{\textit{i.e.} $X_{t+1}\hspace*{-2pt}=\hspace*{-2pt}f_t(X_t\hspace*{-1pt},\hspace*{-1pt}A_t^1\hspace*{-1pt},\hspace*{-1pt}W_t)$}. Each agent $i$, $i\hspace*{-2pt}=\hspace*{-2pt}1,2$, has an imperfect private observation of state $X_t$ at $t$ given by $Y_t^i\hspace*{-2pt}=\hspace*{-2pt}O_t^i(X_t,W_t^i)$. An arbitrary choice of strategy $g^{i}_t$ for agent $i$ at $t$ depends, in general, on his complete information history given by $\{Y_{1:t}^i,A_{1:t-1}\}$. Therefore, following the agent-by-agent approach, if agent $i$'s strategy depends on $A_{\tau}^2$ for some $\tau$, $1\leq \hspace*{-2pt}\tau\hspace*{-2pt}\leq t-1$, then agent $j$'s, $j\hspace*{-2pt}\neq\hspace*{-2pt}i$, best response also depends on $A_{\tau}^2$. Consequently, the agent-by-agent approach fails to characterize $A_{1:t-1}^2$ as irrelevant information for decision making purposes for agents $1$ and $2$. However, using the sufficient information approach we can simply show that a globally optimal strategy profile depend only on $\mathbb{P}\{X_t|Y_{1:t}^i,\hspace*{-2pt}A_{1:t}^1\}$ for agent $i$. 
}

\subsubsection{Comparison with the Designer's Approach} The designer's approach was originally proposed by Witsenhausen in \cite{witsenhausen1973standard}, and was further investigated in \cite{mahajan2009optimal}. This approach considers the decision problem from the point of view of a designer (she) who knows the system model and the probability distribution of the primitive random variables, and chooses control/decision strategies for all agents; she chooses these strategies without having any observation/knowledge about the realizations of primitive random variables (\textit{i.e.} she chooses these strategies before the system evolution starts). Therefore, the designer effectively solves a centralized panning problem. The designer's approach proceeds by: (i) formulating the centralized planning problem as a multi-stage, \textit{open-loop} stochastic control problem in which the designer's decision at each time is a set of control strategies for all agents; (ii) using the standard techniques in centralized stochastic control to obtain a dynamic programming decomposition of the decision problem. Each step of the resulting dynamic program is a functional optimization problem. %\blue{(Hamid: I am not sure about saying ``(in contrast to centralized dynamic programming where each step is a a parameter optimization problem)''. Maybe we can compare to the original multi-agent decision problem where each agent's action space is a value set rather than functional space.)}% (in contrast to the original multi-agent decision problem where each agent )      

 The designer's approach breaks the interdependencies between the agents' decision and information over time by transferring all the complexity that arises due to non-classical information structure and signaling to a larger information state which at each time is \red{given by a} probability distribution on $H_t$, the domain of which \red{increases} with time as agents have perfect recall.\footnote{An instance where the domain of the control law is time-invariant is presented in \cite{mahajan2009optimal}.} Therefore, the sequential decomposition resulting from the designer's approach is not, in general, very practical for the study of multi-agent dynamic decision problems with asymmetric information.  
 
In contrast to the designer's approach, the sufficient information approach provides a sequential decomposition of the decision problem over time where at each time $t$ each agent makes decision based on only a compression of his information $H_t^i$. Therefore, it leads to a dynamic program where the state variable at each step of the program is a probability distribution on $S_t$ instead of a probability distribution on $H_t$ in the designer's approach.

\vspace*{-2pt}

\subsubsection{Comparison with the Common Information Approach}
The common information approach, proposed in \cite{nayyar2011optimal,nayyar2013decentralized}, addresses some of the drawbacks of designer's approach by modeling the decision problem as a \textit{closed-loop} centralized planning problem (POMDP) in which a \textit{coordinator} observes perfectly the common information $C_t$ at each time $t$ and, based on this knowledge, chooses a set of \textit{partial control strategies/prescriptions} that determine how each agent takes an action based on his private information at time $t$. The coordinator's information state at time $t$ is his belief on $(X_t,P_t)$ conditioned $C_t$. As shown in \cite{nayyar2013decentralized}, the dynamic programming decomposition achieved by the common information approach is simpler than that achieved by the designer's approach.
In the common information approach the agents' private information remains intact. Therefore, the resulting decomposition is not very practical whenever the agents' private information grows in time (see special cases 1,3 and 4 in Section \ref{sec:model}). Furthermore, the common information approach becomes identical to the designer's approach whenever the agents do not share any common information over time (see special case 3). 

In the sufficient information approach, we provide conditions sufficient to identify mutually-consistent compressions of the agents' private information that are sufficient for decision making purposes and do not result in any loss in system performance. Thus, the sufficient information approach gives rise to a dynamic program that is simpler than the one resulting from the common information approach. As we show in Section \ref{sec:disc:b}, these conditions are the core of sufficient information approach; they are generalized by Definition \ref{def:sufficient-general} to captures a mutually-consistent joint compressions of the agents' private and common information.      
Moreover, in the model of Section \ref{sec:model}, we do not assume that the agents share a common objective. Therefore,  we do not reformulate the original multi-agent decision problem as a centralized planning problem from the coordinator's point of view when signaling occurs. Alternatively, we provide conditions sufficient to identify compression of the agents' information in a mutually-consistent manner on individual level. As a result, our approach is applicable to both strategic and non-strategic settings (see our companion paper \cite{game} for strategic settings).

\vspace*{-5pt}

\section{Generalization}\label{sec:disc:b}
In the sufficient information approach presented in Section \ref{sec:sufficient}, we treat the agents' private information and common information separately. This is because the main challenge in the study of dynamic decision problems with non-strategic agents is due to the presence of the agents' private information. Nevertheless, such a separate treatment of private and common information is not necessary. Using the same rationale that leads to Definition \ref{def:sufficient}, we present below a set of conditions sufficient to characterize a mutually consistent compression of agents' information, without separating private and common components, that is sufficient for decision making purposes. 

%The definition of sufficient private information provides a set of conditions sufficient for each agent to compress his private information over time. Given this information compression, one can imagine $N$ fictitious agents, one for each agent, that determine a set of functional prescription for their associated agents on how to utilize their private sufficient information to take an action at every time. Then, from  
\begin{definition}[Sufficient information]
	We say $L_t^i=\tilde{\zeta}_t^i(P_t^i,C_t,g_{1:t-1})\in\mathcal{L}_t^i$, $i\in\mathcal{N}$, $t\in\mathcal{T}$, is \textit{sufficient information} for the agents if, 
	\begin{enumerate}[(i)]
		\item it can be updated recursively as 
		\begin{gather}
		L_t^{i}=\tilde{\phi}_t^i(L_{t-1}^{i},H_t^i\backslash H_{t-1}^i,g_{1:t-1})  \text{ for } t\in\mathcal{T}\backslash\{1\}, \label{eq:sufficientupdate-gen}
		\end{gather}
		\item for any strategy profile $g$ and for all realizations $\{c_t,p_t,p_{t+1},z_{t+1},a_t\}\in\mathcal{C}_t\times\mathcal{P}_t\times\mathcal{P}_{t+1}\times\mathcal{Z}_{t+1}$ with positive probability,
		\begin{align}
		\hspace*{-26pt}\mathbb{P}^{g_{1:t}}\hspace*{-1pt}\left\{\hspace*{-2pt}l_{t+1}\hspace*{-1pt}\Big|p_t\hspace*{-1pt},\hspace*{-1pt}c_t\hspace*{-1pt},\hspace*{-1pt}a_t\hspace*{-2pt}\right\}\hspace*{-3pt}=\hspace*{-2pt}\mathbb{P}^{g_{1:t}}\hspace*{-1pt}\left\{\hspace*{-2pt}l_{t+1}\hspace*{-1pt}\Big|l_t\hspace*{-1pt},\hspace*{-1pt}a_t\hspace*{-2pt}\right\}\hspace*{-1pt},\hspace*{-4pt}\label{eq:sufficientdynamic-gen}
		\end{align}
		\hspace*{-4pt}where $l_{\tau}^{1:N}\hspace*{-3pt}=\hspace*{-2pt}\tilde{\zeta}_{\tau}^{1:N}\hspace*{-2pt}(p_{\tau}^{1:N}\hspace*{-1pt},\hspace*{-1pt}c_{\tau};\hspace*{-1pt}g_{1\hspace*{-1pt}:\tau-1}\hspace*{-1pt})$ for $\tau\in\mathcal{T}$;	\vspace{2pt}	
		%		\item it satisfies
		%		\begin{gather}
		%			\hspace*{-40pt\mathbb{P}\left\{L_{t+1}\Big|P_t,C_t,A_t\right\}\nonumber\\\hspace*{-30pt}=\nonumber\\}\mathbb{P}\left\{L_{t+1}\Big|L_t,A_t\right\} \quad w.p.1 \label{eq:sufficientdynamic-gen},
		%		\end{gather}
		\item 
		for every strategy profile \red{$\tilde{g}$ of the form} $\tilde{g}\hspace*{-2pt}:=\hspace*{-2pt}\{\hspace*{-1pt}\tilde{g}^{i}_t\hspace*{-1pt}:\hspace*{-1pt}\mathcal{L}_t^i\rightarrow \Delta(\mathcal{A}_t^i), i\hspace*{-2pt}\in\hspace*{-2pt}\mathcal{N}\hspace*{-1pt},\hspace*{-1pt} t\hspace*{-2pt}\in\hspace*{-2pt}\mathcal{T}\}$ and $a_t\hspace*{-2pt}\in\hspace*{-2pt}\mathcal{A}_t$, $t\hspace*{-2pt}\in\hspace*{-2pt}\mathcal{T}$;
		\begin{align} 		
		\hspace*{-26pt}\mathbb{E}^{\tilde{g}_{1:t-1}^{}}\hspace*{-2pt}\left\{\hspace*{-2pt}u_t^i(\hspace*{-1pt}X_t\hspace*{-1pt},\hspace*{-1pt}A_t\hspace*{-1pt})\hspace*{-1pt}\Big|c_t\hspace*{-1pt},\hspace*{-1pt}p_t^i\hspace*{-1pt},\hspace*{-1pt}a_t\hspace*{-2pt}\right\}\hspace*{-3pt}=\hspace*{-2pt}\mathbb{E}^{\tilde{g}_{1:t-1}^{}}\hspace*{-2pt}\left\{\hspace*{-2pt}u_t^i(\hspace*{-1pt}X_t\hspace*{-1pt},\hspace*{-1pt}A_t\hspace*{-1pt})\hspace*{-1pt}\Big|\hspace*{-1pt}l_t^{i}\hspace*{-1pt},\hspace*{-1pt}a_t\hspace*{-2pt}\right\}\hspace*{-2pt},\hspace*{-5pt}\label{eq:payoff-relevant2-gen}
		\end{align} 
		for all realizations $\{c_{t}\hspace*{-1pt},\hspace*{-1pt}p_{t}^i\}\hspace*{-2pt}\in\hspace*{-2pt}\mathcal{C}_{t}\hspace*{-2pt}\times\hspace*{-2pt}\mathcal{P}_{t}^i$ of positive probability where $l_{\tau}^{1:N}\hspace*{-3pt}=\hspace*{-2pt}\tilde{\zeta}_{\tau}^{1:N}\hspace*{-2pt}(p_{\tau}^{1:N}\hspace*{-1pt},\hspace*{-1pt}c_{\tau};\hspace*{-1pt}\tilde{g}_{1\hspace*{-1pt}:\tau-1}\hspace*{-1pt})$ for $\tau\in\mathcal{T}$;\vspace{1pt}
		%		\item %if $A_\tau\in Z_{\tau+1}$ for all $\tau\in\mathcal{T}\backslash{T}$ (\textit{i.e.} actions are observable), then for every agent $i\in\mathcal{N}$ there exist private payoff-relevant information $\{P_t^{i,pr},t\in\mathcal{T}\}$ that can be written as a function of $S_t^i$;\\
		%		%if $A_\tau\notin Z_{\tau+1}$ for some $\tau\in\mathcal{T}\backslash{T}$, then 
		%		for every strategy profile $\tilde{g}:=\{\tilde{g}^{i}_t:\mathcal{L}_t^i\rightarrow \Delta(\mathcal{A}_t^i), i\in\mathcal{N}, t\in\mathcal{T}\}$ and $a_t\in\mathcal{A}_t$, $t\in\mathcal{T}$,
		%		\begin{gather} 		
		%			\mathbb{E}^{\tilde{g}_{1:t-1}^{-i}}\left\{u_t^i(X_t,A_t)\Big|c_t,p_t^i,a_t\right\}\nonumber\\=\nonumber\\\mathbb{E}^{\tilde{g}_{1:t-1}^{-i}}\left\{u_t^i(X_t,A_t)\Big|l_t^{i},a_t\right\},\label{eq:payoff-relevant2-gen}
		%		\end{gather} 
		%		for all realizations $\{c_{t},p_{t}^i\}\in\mathcal{C}_{t}\times\mathcal{P}_{t}^i$ with positive probability under $\tilde{g}$.
		\item given an arbitrary strategy profile \red{$\tilde{g}$ of the form} $\tilde{g}\hspace*{-1pt}:=\hspace*{-1pt}\{\tilde{g}^{i}_t:\mathcal{L}_t^i\hspace*{-1pt}\rightarrow \Delta(\mathcal{A}_t^i), i\hspace*{-2pt}\in\hspace*{-2pt}\mathcal{N}, t\hspace*{-2pt}\in\hspace*{-2pt}\mathcal{T}\}$, $i\hspace*{-2pt}\in\hspace*{-2pt}\mathcal{N}$, and $t\hspace*{-2pt}\in\hspace*{-2pt}\mathcal{T}$,
		\begin{align}
		\hspace*{-25pt}\mathbb{P}^{\tilde{g}^{}_{1:t-1}}\hspace*{-2pt}\left\{\hspace*{-2pt}l_t^{-i}\hspace*{-1pt}\Big|p_t^i\hspace*{-1pt},\hspace*{-1pt}c_t\hspace*{-2pt}\right\}\hspace*{-3pt}=\hspace*{-2pt}\mathbb{P}^{\tilde{g}^{}_{1:t-1}}\hspace*{-2pt}\left\{\hspace*{-1pt}l_t^{-i}\hspace*{-1pt}\Big|l_t^i\hspace*{-2pt}\right\}\hspace*{-1pt},\hspace*{-4pt}\label{eq:sufficientinfo-gen}
		\end{align}
		for all realizations $\{c_{t}\hspace*{-1pt},\hspace*{-1pt}p_{t}^i\}\hspace*{-2pt}\in\hspace*{-2pt}\mathcal{C}_{t}\hspace*{-2pt}\times\hspace*{-2pt}\mathcal{P}_{t}^i$  with positive probability where $l_{\tau}^{1:N}\hspace*{-3pt}=\hspace*{-2pt}\tilde{\zeta}_{\tau}^{1:N}\hspace*{-2pt}(p_{\tau}^{1:N}\hspace*{-1pt},\hspace*{-1pt}c_{\tau};\hspace*{-1pt}\tilde{g}_{1\hspace*{-1pt}:\tau-1}\hspace*{-1pt})$ for $\tau\in\mathcal{T}$.		
		%		\item for every strategy profile $\tilde{g}:=\{\tilde{g}^{i}_t:\mathcal{L}_t^i\rightarrow \Delta(\mathcal{A}_t^i), i\in\mathcal{N}, t\in\mathcal{T}\}$, $a_t\in\mathcal{A}_t$, $i\in\mathcal{N}$, and $t\in\mathcal{T}$,
		%		\begin{gather}
		%			\mathbb{P}^{\tilde{g}^{-i}_{1:t-1},g_{1:t-1}^{i}}\left\{L_t^{-i}\Big|P_t^i,C_t\right\}\nonumber\\\hspace*{-5pt}=\nonumber\\\hspace*{38pt}\mathbb{P}^{\tilde{g}^{-i}_{1:t-1}}\left\{L_t^{-i}\Big|L_t^i\right\}\quad w.p.1,\label{eq:sufficientinfo-gen}
		%		\end{gather}
		%			\item for every $t\in\mathcal{T}\backslash T$, 
		%			\begin{align}
		%			\mathbb{P}\left\{S_{t+1},C_{t+1}\backslash C_t\Big|P_t,X_t,A_t\right\}=\mathbb{P}\left\{S_{t+1},C_{t+1}\backslash C_t\Big|S_t,X_t,A_t\right\} \quad w.p.1 \label{eq:sufficientdynamic}.
		%			\end{align}
	\end{enumerate}
	\label{def:sufficient-general}
\end{definition}

The conditions of Definition \ref{def:sufficient-general} are similar to those of Definition \ref{def:sufficient}, but they concern agents' private \textit{and} common information rather than just their private information. Throughout the paper, we do not make any assumption that the agents' private observations are necessarily \red{disjoint}. Therefore, one can define $P_t^i=H_t^i$ and $C_t^i=\emptyset$, for all $i\in\mathcal{N}$ and $t\in\mathcal{T}$, in which case Definition \ref{def:sufficient-general} would be the same as Definition \ref{def:sufficient}. Consequently, all the results appearing in this paper (Theorems \ref{thm:beliefindependence}-\ref{thm:team-inifnite}) also hold for sufficient information characterized by Definition \ref{def:sufficient-general}.  			

We show below that the set of information  states $(S_t^{i},\Pi_t)$, $i\in\mathcal{N}\}$ proposed in Section \ref{sec:sufficient} satisfies the conditions of Definition \ref{def:sufficient-general}. Therefore, Definition \ref{def:sufficient-general} provides a generalization of the sufficient information approach presented in Section \ref{sec:sufficient} as it does not require to compress the agents' private and common information separately. 

\begin{theorem}\label{lemma:general}
	The set of information states $L_t^i:=(S_t^{i},\Pi_t)$, $i\in\mathcal{N}$, $t\in\mathcal{T}$, satisfies Definition \ref{def:sufficient-general}.  
\end{theorem}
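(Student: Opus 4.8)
The plan is to verify the four conditions of Definition \ref{def:sufficient-general} for $L_t^i=(S_t^i,\Pi_t)$ one at a time, in each case exploiting two facts that we are handed for free: the sufficient private information $S_t^i$ already satisfies all of Definition \ref{def:sufficient}, and $\Pi_t=\gamma_t(C_t)$ is a \emph{deterministic} function of the common information that, by \eqref{eq:SIBcommon belief}, encodes precisely the conditional law $\mathbb{P}^{g_{1:t-1}}\{X_t,S_t\mid C_t\}$. Condition (i) is then immediate: $S_t^i$ updates recursively through $\phi_t^i$ by Definition \ref{def:sufficient}(i), and $\Pi_t$ updates through the SIB update rule of \eqref{eq:CIBupdaterule} using only $\Pi_{t-1}$ and $Z_t$; since $Z_t\in H_t^i\backslash H_{t-1}^i$, I can package the two recursions into a single map $\tilde\phi_t^i$ acting on $L_{t-1}^i=(S_{t-1}^i,\Pi_{t-1})$ and the increment $H_t^i\backslash H_{t-1}^i$, which is exactly the form \eqref{eq:sufficientupdate-gen}.

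The conceptual engine for the remaining three conditions is a single observation I would isolate as a reduction lemma: every conditional quantity appearing on the right-hand sides of Definition \ref{def:sufficient-general}, once the conditioning on $P_t^i$ has been replaced by $S_t^i$ via Definition \ref{def:sufficient}, depends on the common information $c_t$ only through $\pi_t=\gamma_t(c_t)$. The reason is structural: $\pi_t$ \emph{is} the exact joint posterior $\mathbb{P}^{g_{1:t-1}}\{x_t,s_t\mid c_t\}$, so any functional of $x_t$ and $s_t$ evaluated under the conditional law given $c_t$ is a fixed function $F(\pi_t,\cdot)$ of $\pi_t$, hence constant on each level set $\{c:\gamma_t(c)=\pi_t\}$. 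A tower-property step then converts "constant in $c_t$ on level sets of $\gamma_t$" into "equal to the value obtained by conditioning on $\Pi_t=\pi_t$", since $\mathbb{E}[\,F(\Pi_t,\cdot)\mid S_t^i=s_t^i,\Pi_t=\pi_t\,]=F(\pi_t,\cdot)$.

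For condition (iv) I would first invoke Definition \ref{def:sufficient}(iv) to replace $\mathbb{P}^{\tilde g_{1:t-1}}\{s_t^{-i}\mid p_t^i,c_t\}$ by $\mathbb{P}^{\tilde g_{1:t-1}}\{s_t^{-i}\mid s_t^i,c_t\}$, which by the reduction lemma equals $\mathbb{P}^{\tilde g_{1:t-1}}\{s_t^{-i}\mid s_t^i,\pi_t\}$; since $\pi_t$ is fixed by conditioning on either side, attaching the deterministic $\Pi_t$-coordinate to both $l_t^{-i}$ and $l_t^i$ yields \eqref{eq:sufficientinfo-gen}. For condition (iii) I would likewise use Definition \ref{def:sufficient}(iii) to pass from $(c_t,p_t^i,a_t)$ to $(c_t,s_t^i,a_t)$ and then apply the reduction lemma to the expected instantaneous utility; here the crucial supporting computation is that, because the prescribed profile $\tilde g$ in Definition \ref{def:sufficient-general}(iii) is of the SIB form $\tilde g_t^j:\mathcal{L}_t^j\to\Delta(\mathcal{A}_t^j)$, the action $a_t^{-i}$ is generated as a $\pi_t$-measurable kernel in $s_t^{-i}$, so that the state-posterior $\mathbb{P}^{\tilde g_{1:t-1}}\{x_t\mid c_t,s_t^i,a_t\}$ is itself a function of $\pi_t$ (the $a_t^i$-factor cancels in the Bayes ratio), which is exactly what the reduction lemma requires.

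The hard part is condition (ii), the joint dynamics of $(S_{t+1},\Pi_{t+1})$. I would begin from Definition \ref{def:sufficient}(ii), which already gives $\mathbb{P}^{g_{1:t}}\{s_{t+1},z_{t+1}\mid p_t,c_t,a_t\}=\mathbb{P}^{g_{1:t}}\{s_{t+1},z_{t+1}\mid s_t,c_t,a_t\}$, and then lift the common observation to the belief coordinate by noting that $\pi_{t+1}=\psi_{t+1}(\pi_t,z_{t+1})$ is deterministic via \eqref{eq:CIBupdaterule}; summing the identity over the fibre $\{z_{t+1}:\psi_{t+1}(\pi_t,z_{t+1})=\pi_{t+1}\}$ turns it into a statement about $(s_{t+1},\pi_{t+1})$ conditioned on $(s_t,c_t,a_t)$. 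The genuine difficulty is the final reduction $c_t\to\pi_t$: because the one-step kernel $\mathbb{P}\{s_{t+1},z_{t+1}\mid x_t,s_t,a_t\}$ is free of $(c_t,p_t)$ by the mutual independence of the primitive noises, the only residual $c_t$-dependence sits in the state-posterior $\mathbb{P}^{g_{1:t}}\{x_t\mid s_t,c_t,a_t\}$, and making this $\pi_t$-measurable is where the argument is delicate — it hinges on $A_t$ being conditionally independent of $X_t$ given $(S_t,\Pi_t)$, which holds in the SIB regime in which the theorem is applied but must be established carefully rather than asserted for a completely arbitrary $g$. Once that measurability is secured, the same level-set/tower-property step used in (iii)--(iv) closes condition (ii).
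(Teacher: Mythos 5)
Your proposal is correct and follows essentially the same route as the paper: your ``reduction lemma'' is exactly the paper's identity $\mathbb{P}\{x_t|c_t,s_t\}=\mathbb{P}\{x_t|\pi_t,s_t\}$ (its equation for the $c_t\to\pi_t$ reduction), conditions (iii)--(iv) are handled by first invoking Definition \ref{def:sufficient} and then this reduction, and your treatment of condition (ii) reproduces the paper's Lemma \ref{lemma:lem}, including the fact that it is established for SIB-form profiles with a consistent update rule rather than for a fully arbitrary $g$.
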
			
Compared to Definition \ref{def:sufficient}, Definition \ref{def:sufficient-general} provides conditions sufficient for a mutually-consistent joint compression of the agents' private and common information. However, similar to the discussion in Section \ref{sec:disc:a}, it does not provide a constructive algorithm to determine a set of sufficient information $L_t^i$, $i\in\mathcal{N}$, $t\in\mathcal{T}$, with time-invariant domain. %This is different from the sufficient approach presented in Section \ref{sec:sufficient}, where the definition of SIB belief present a constructive algorithm for characterizing a compression of the agents' common information given the agents' sufficient private information. 

\begin{remark}
	In view of Definition \ref{def:sufficient-general}, one can replace condition (ii) of Definition \ref{def:sufficient} with a weaker one that requires that $S_t$ include all the information necessary to form a belief about the realizations (of parts) of $Z_{t+1}$ only if (those parts of) $Z_{t+1}$ affect the realization of $\Pi_{t+1}$ given $\Pi_t$.      
\end{remark}
%\vspace*{3pt}

\red{Using Definition \ref{def:sufficient-general} we identify a set of sufficient information for special case 4 described in Section \ref{sec:model}.}

\textbf{Special Case:}

\textit{4) \hspace*{-1pt}Optimal remote and local controller:} We have $C_t\hspace*{-2pt}=\hspace*{-2pt}\{\hspace*{-1pt}Y_{1:t}\hspace*{-1pt}\}$, $P_t^1\hspace*{-2pt}=\hspace*{-2pt}\{\hspace*{-1pt}X_{1:t}\hspace*{-1pt},\hspace*{-2	pt}A_{1:t-1}^1\hspace*{-1pt}\}\hspace*{-1pt}\backslash\hspace*{-1pt} C_t$, and $P_t^2\hspace*{-2pt}=\hspace*{-2pt}\{\hspace*{-1pt}A_{1:t-1}^2\hspace*{-1pt}\}$. Let $\tau\hspace*{-2pt}\leq\hspace*{-2pt} t$ denote the last time the data transmission was successful between the local and remote controllers. \red{We can restrict attention, without loss of optimality,} to the class \red{of} pure strategies for both controllers. Therefore, one can show that $L_t^1\hspace{-2pt}=\hspace*{-2pt}\{\hspace*{-1pt}X_t\hspace*{-1pt},\hspace*{-1pt}\{\mathbb{P}^g\hspace*{-1pt}\{\hspace*{-1pt}X_t\hspace*{-2pt}=\hspace*{-2pt}x_t|X_{\hat{\tau}}\hspace*{-1pt}\}\hspace*{-1pt},\hspace*{-1pt}\forall x_t\hspace*{-2pt}\in\hspace*{-2pt}\mathcal{X}_t\hspace*{-1pt}\}\hspace*{-2pt}\}$ and $L_t^2\hspace*{-2pt}=\hspace*{-2pt}\{\mathbb{P}^g\hspace*{-1pt}\{\hspace*{-1pt}X_t\hspace*{-2pt}=\hspace*{-2pt}x_t|X_{\hat{\tau}}\hspace*{-1pt}\}\hspace*{-1pt},\hspace*{-1pt}\forall x_t\hspace*{-2pt}\in\hspace*{-2pt}\mathcal{X}_t\hspace*{-1pt}\}$ satisfy the conditions of Definition \hspace*{-1pt}\ref{def:sufficient-general}; this is similar to the structural results in \cite{ouyang2016optimal,asghari2016optimal}.

\vspace*{-3pt}

		\section{Extension to Infinite Horizon} \label{sec:infinite}
In the model of Section \ref{sec:model}, we assume that the horizon $T$ is finite. We present a model similar to that of Section \ref{sec:model} with infinite horizon, \textit{i.e.} $T=\infty$, and provide  the extensions of our results to dynamic decision problems with infinite horizon.

\vspace{10pt}

\textbf{Infinite Horizon Dynamic Decision Problem:} There are $N$ non-strategic agents who live in a dynamic Markovian world over an infinite horizon. Consider a time-invariant model where the system state, actions, and observations spaces are finite and time-invariant, \textit{i.e.} $\mathcal{X}_{\infty}=\mathcal{X}_t$, $\mathcal{A}_{\infty}=\mathcal{A}_t$, $\mathcal{Z}_{\infty}=\mathcal{Z}_t$, and $\mathcal{Y}_{\infty}=\mathcal{Y}_t$ for all $t\in\mathbb{N}$. Let $X_t\in\mathcal{X}_{\infty}$ denote the system state at $t\in\mathbb{N}$. Given the agents' actions $A_t$ at $t$, the system state evolution is given by
\begin{align}
X_{t+1}=f_{\infty}(X_t,A_t,W_t^x),
\end{align} 
where $\{W_t^x,t\in\mathbb{N}\}$ is a sequence of independent and identically distributed random variables. The initial state $X_1$ is a random variable with probability distribution $\eta\in\Delta(\mathcal{X}_{\infty})$ with full support that is common knowledge among the agents. 

At every time $t\in\mathbb{N}$, each agent $i\in\mathcal{N}$, receives a noisy observation $Y_t^i$ given by
\begin{align}
Y_{t}^i=O_{\infty}^i(X_t,A_{t-1},W_t^i),
\end{align}
where $\{W_t^i,t\in\mathbb{N},i\in\mathcal{N}\}$ is a sequence of independent and identically distributed random variables.

In addition, at every $t\in\mathbb{N}$ all agents receive a common observation $Z_t\in\mathcal{Z}_\infty$ given by
\begin{align}
Z_{t}=O_{\infty}^c(X_t,A_{t-1},W_t^c),
\end{align}
where $\{W_t^c,t\in\mathbb{N}\}$ is a sequence of independent and identically distributed random variables; the sequences $\{W_t^x,t\in\mathbb{N}\}$, $\{W_t^c,t\in\mathbb{N}\}$, and $\{W_t^i,t\in\mathbb{N},i\in\mathcal{N}\}$ and the initial state $X_1$ are mutually independent. 

Similar to the model of Section \ref{sec:model}, let $P_t^i$ and $C_t$ denote agent $i$'s, $i\in\mathcal{N}$, private and common information at $t\in\mathbb{N}$, respectively. Agent $i$ has a \textit{time-invariant} instantaneous utility function $\delta^{t-1}u^i_{\infty}(X_t,A_t)$, and his total discounted utility is given by
\begin{align}
U^i_{\text{in}}(X,A):=\sum_{t=1}^\infty \delta^{t-1}u^i_{\infty}(X_t,A_t),
\end{align}
where $\delta$ denotes the discount factor. %We assume that Assumptions \ref{assump:positivetrans} and \ref{assump:positiveprob} are satisfied. 

\vspace{10pt}

We provide an extension of our results to infinite horizon dynamic decision problems with non-strategic agents. For that matter, we first present a generalization of the definition of sufficient private information to infinite horizon decision problems.

\begin{definition}[Time-invariant sufficient private information] We say $S^i_{t}$, $i\in\mathcal{N}$, $t\in\mathbb{N}$, is a time-invariant sufficient private information if it is a sufficient private information and has a time-invariant domain denoted by $\mathcal{S}_{\infty}^i$, $i\in\mathcal{N}$.  
\label{def:SIB-inf}
\end{definition}

We note that for the special cases presented in Section \ref{sec:sufficient}, the characterized sufficient private information  is time-invariant.

Following an argument similar to the one presented in Section \ref{sec:nonstrategic}, we extend the result of Theorem \ref{thm:nonstrategic} to infinite horizon dynamic decision problems with non-strategic agents.

\begin{theorem}\label{thm:nonstrategic-infinite}
	Consider an infinite horizon dynamic decision problem with non-strategic agents having access to a public randomization device. Then, for any arbitrary strategy profile $g$ there exists an equivalent stationary SIB strategy profile $\sigma$ that results in the same expected flow of utility, \textit{i.e.},
	\begin{gather}
	%\lim_{\tilde{t}\rightarrow \infty}
	\mathbb{E}^g\hspace*{-2pt}\left\{\hspace*{-2pt}\sum_{\tau=t}^{\infty}\hspace*{-1pt}\delta^{t-1}u_{\infty}^i(g_\tau^{1:N}\hspace*{-1pt}(\hspace*{-1pt}H_\tau^{1:N}\hspace*{-1pt}),\hspace*{-1pt}X_\tau\hspace*{-1pt})\hspace*{-2pt}\right\}=
	%\lim_{\tilde{t}\rightarrow \infty}
	\mathbb{E}^{\sigma_{\infty}}\hspace*{-2pt}\left\{\hspace*{-2pt}\sum_{\tau=t}^{\infty}\hspace*{-1pt}\delta^{t-1}u^{i}_{\infty}\hspace*{-1pt}(\sigma_{\tau}^{1:N}\hspace*{-1pt}(\hspace*{-1pt}\Pi_\tau\hspace*{-1pt},\hspace*{-1pt}S_\tau^{1:N},\omega_\tau),\hspace*{-2pt}X_\tau\hspace*{-1pt})\hspace*{-1pt}\right\} \label{eq:nonstrategic-infinite}, 
	\end{gather}
	for all $i\in\mathcal{N}$ and $t\in\mathbb{N}$. 
\end{theorem}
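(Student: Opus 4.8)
The plan is to lift the per-stage equivalence established for the finite horizon (Theorem \ref{thm:nonstrategic}) to the infinite horizon, and then to control the discounted infinite sum by truncation. The entire machinery behind Theorem \ref{thm:nonstrategic} is local in time: for each fixed $\tau$ it relies on (a) the policy-independence of beliefs (Theorem \ref{thm:beliefindependence}), (b) the decomposition of each history $H_\tau^i$ into the information state $(\Pi_\tau, S_\tau^i)$ and a residual $R_\tau^i$, and (c) the joint independence of $\{R_\tau^1,\dots,R_\tau^N\}$ from $\{(\Pi_\tau,S_\tau^1),\dots,(\Pi_\tau,S_\tau^N)\}$ (Lemma \ref{lemma:noise}). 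None of these statements uses finiteness of the horizon, so I would first verify that each holds verbatim at every finite $\tau\in\mathbb{N}$ in the time-invariant model, using that the sufficient private information is time-invariant (Definition \ref{def:SIB-inf}) and that $f_\infty,O_\infty^i,O_\infty^c$ and the update rule $\psi$ do not depend on $t$.

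Given this, the second step is the same simulation construction: at each $\tau$ I would draw, from the public randomization device $\omega_\tau$, surrogate residuals $\{\tilde R_\tau^1,\dots,\tilde R_\tau^N\}$ whose joint law (given the common information) matches that of $\{R_\tau^1,\dots,R_\tau^N\}$; combining them with $(\Pi_\tau, S_\tau^i)$ recreates a simulated history equal in distribution to $H_\tau^i$, and feeding it into $g_\tau$ yields an SIB prescription $\sigma_\tau^i(\Pi_\tau,S_\tau^i,\omega_\tau)$ under which the joint law of $(X_\tau, A_\tau)$ coincides with the one induced by $g$. Consequently $\mathbb{E}^g\{u_\infty^i(X_\tau,A_\tau)\}=\mathbb{E}^\sigma\{u_\infty^i(X_\tau,A_\tau)\}$ for every agent $i$ and every $\tau$, \textit{i.e.} the per-stage flow is matched stage by stage. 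To pass from this to the discounted-sum equivalence I would use boundedness: since the spaces are finite each $u_\infty^i$ is bounded by some $M$, so the tails $\sum_{\tau=T}^\infty\delta^{\tau-1}u_\infty^i$ are dominated uniformly by $M\delta^{T-1}/(1-\delta)$ under both $g$ and $\sigma$. Matching the partial sums $\sum_{\tau=t}^{T-1}$ stage by stage and letting $T\to\infty$ then yields the claimed identity for every starting time $t\in\mathbb{N}$; the discount factor $\delta<1$ is precisely what legitimizes the interchange of limit and expectation.

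The delicate point, and the one I expect to be the main obstacle, is the claim that the equivalent SIB profile can be taken \emph{stationary}. The construction above produces a prescription $\sigma_\tau^i$ that reproduces $\mathbb{P}^g\{A_\tau^i\mid \Pi_\tau,S_\tau^i\}$, and for an arbitrary (possibly non-stationary) $g$ this conditional law may still depend on $\tau$. To obtain a genuinely time-invariant map $\sigma_\infty^i(\pi,s^i,\omega)$ I would exploit that the residual independence, the belief update $\psi$, and the transition and observation kernels are all time-homogeneous, so that the only residual source of time dependence is $g$ itself; for a stationary $g$ the induced conditional action law stabilises and a single $\sigma_\infty^i$ suffices, while for the general statement the stationarity must be read together with the time-invariance of the information-state domains $\mathcal{S}_\infty^i$ and $\Delta(\mathcal{X}_\infty\times\mathcal{S}_\infty)$. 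Making this last step fully rigorous, rather than merely matching the value by a time-invariant \emph{structure}, is where the careful work lies, and I would expect to combine a truncation-and-limit argument with the time-homogeneity of $\psi$ to close it.
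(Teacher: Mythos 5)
Your proposal follows essentially the same route as the paper: apply the finite-horizon construction of Theorem \ref{thm:nonstrategic} stage by stage, match the truncated discounted sums over $[t,T]$, and bound both tails by $M\delta^{T}/(1-\delta)\leq\epsilon/2$ so that the triangle inequality closes the argument as $\epsilon\to 0$. Your concern about genuine stationarity of $\sigma$ is well placed, but the paper's own proof does not resolve it either --- it simply reuses the (generally time-varying) $\sigma_t$ constructed in the proof of Theorem \ref{thm:nonstrategic} and never shows the resulting profile is time-invariant, so on that point your attempt is no weaker than the published argument.
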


Next, we consider the case where agents share the same objective $u^i_{\infty}(\cdot,\cdot)=u^{\text{team}}_{\infty}(\cdot,\cdot)$ for all $i\in\mathcal{N}$., \textit{i.e.} \red{an infinite horizon}  dynamic team problem. \red{It is shown that in infinite horizon POMDPS we can restrict attention, without loss of generality, to stationary Markov policies \cite{kumar1986stochastic}. We provide a generalization of this results to dynamic multi-agent decision problems below. }  

Given a set of time-invariant sufficient private information, let  $\Pi_t\hspace*{-2pt}\in\hspace*{-2pt}\Delta(\mathcal{X}_\infty\hspace*{-2pt}\times\hspace*{-2pt}\mathcal{S}_{\infty}) $ denote the SIB belief about $(X_t,S_{t})$ at time $t$. % Consider a class of SIB strategies that are based on time-invariant sufficient private information. 
We call the mapping $\sigma_{\infty}^i:\Delta(\mathcal{X}_\infty\times\mathcal{S}_{\infty})\times\mathcal{S}_{\infty}^i \rightarrow \Delta(\mathcal{A}_{\infty}^i)$ a \textit{stationary SIB strategy} for agent $i$ if $S_t^i$, $i\in\mathcal{N}$, $t\in\mathbb{N}$, is a time-invariant sufficient private information. 
Similarly, 
given a stationary SIB strategy profile $\sigma_{\infty}$, we define a \textit{stationary SIB update rule} as a time-invariant mapping $\eta_{\infty}^{\sigma_{\infty}}:\Delta(\mathcal{X}_{\infty}\times\mathcal{S}_{\infty})\times \mathcal{Z}_\infty\rightarrow \Delta(\mathcal{X}_{\infty}\times\mathcal{S}_{\infty})$, that recursively determines the SIB belief via Bayes' rule for all $t\in\mathbb{N}$.  Similarly, let $\sigma_\infty^i(\Pi_t,S_t^i,\omega_t)$ denote agent $i$'s \textit{stationary} SIB strategy using the public randomization device for every $i\hspace*{-2pt}\in\hspace*{-2pt}\mathcal{N}$ and $t\hspace*{-2pt}\in\hspace*{-2pt}\mathcal{T}$, when the agents have access to a public randomization device $\omega_t$ for every $t\hspace*{-2pt}\in\hspace*{-2pt}\mathcal{T}$.

We provide a sequential decomposition similar to that of Theorem \ref{thm:team} for infinite horizon dynamic teams below. 

\begin{theorem} \label{thm:team-inifnite} A stationary SIB strategy profile $\sigma_{\infty}$ is an optimal solution to an infinite horizon dynamic team problem with asymmetric information if it solves the following Bellman equation:
		\begin{align}
		&V_{\infty}(\pi_t)\hspace*{-2pt}:=\hspace*{-6pt}\max_{\alpha^{1:N}: \mathcal{S}_{\infty}^{1:N}\rightarrow \mathcal{A}_\infty^{1:N}}\hspace*{-6pt}\mathbb{E}_{\pi}\hspace*{-2pt}\big\{\hspace*{-1pt}u_{\infty}^{\text{team}}\hspace*{-1pt}(\hspace*{-1pt}X_t\hspace*{-1pt},\hspace*{-1pt}\alpha^{1:N}(S_{t}^{1:N})\hspace*{-1pt})+V_{\infty}\hspace*{-1pt}(\eta_{\infty}\hspace*{-1pt}(\hspace*{-1pt}\pi_t\hspace*{-1pt},\hspace*{-1pt}\alpha^{1:N}\hspace*{-3pt},\hspace*{-1pt}Z_{t+1}\hspace*{-1pt})\hspace*{-1pt})\hspace*{-1pt}\big\}\hspace*{-1pt},\label{eq:team-infinite-value}
		\end{align}
		\hspace*{3pt}for all $\pi_t\in\Delta(\mathcal{X}_{\infty}\times \mathcal{S}_{\infty})$.
	\end{theorem}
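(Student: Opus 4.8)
The plan is to recognize the infinite-horizon team problem, once restricted to stationary SIB strategies, as a standard discounted Markov decision process (MDP) whose state is the SIB belief $\Pi_t$, and then invoke the classical verification argument for such MDPs. First I would invoke Theorem~\ref{thm:nonstrategic-infinite}: since restriction to stationary SIB strategies is without loss of generality, it suffices to maximize the team's expected discounted utility over the class of stationary SIB strategy profiles. Under such a profile, the team objective can be rewritten as $\mathbb{E}^{\sigma_\infty}\{\sum_{t=1}^\infty \delta^{t-1} r(\Pi_t,\alpha_t)\}$, where $r(\pi,\alpha):=\mathbb{E}_\pi\{u^{\text{team}}_\infty(X_t,\alpha(S_t))\}$ is the one-stage reward and $\alpha_t=\sigma_\infty(\Pi_t,\cdot)$ is the prescription applied at time $t$.

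The key structural step is to establish that, under stationary SIB strategies, the belief process $\{\Pi_t\}$ is a time-homogeneous controlled Markov chain with the prescription $\alpha$ as the control and the common observation $Z_{t+1}$ as the exogenous noise. Concretely, I would show that, given $\Pi_t=\pi$ and prescription $\alpha$, the update $\Pi_{t+1}=\eta_\infty(\pi,\alpha,Z_{t+1})$ together with the conditional law of $Z_{t+1}$ depends only on $(\pi,\alpha)$ and not on the past. This follows from the policy-independence property of beliefs (Theorem~\ref{thm:beliefindependence}) and the definition of the stationary SIB update rule $\eta_\infty$, combined with the time-invariance of the sufficient private information (Definition~\ref{def:SIB-inf}) and of the model primitives. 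Because $\mathcal{X}_\infty$, $\mathcal{S}_\infty$, $\mathcal{A}_\infty$, and $\mathcal{Z}_\infty$ are finite, the induced MDP has a compact state space $\Delta(\mathcal{X}_\infty\times\mathcal{S}_\infty)$, a finite action (prescription) space, a bounded one-stage reward $r$, and a transition kernel that is continuous in $\pi$; these are exactly the regularity conditions under which the classical discounted-MDP theory applies.

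Given this reduction, the result is a verification argument. I would define the Bellman operator $\mathcal{B}$ on bounded functions of $\pi$ by $(\mathcal{B}V)(\pi):=\max_\alpha\{r(\pi,\alpha)+\delta\,\mathbb{E}[V(\eta_\infty(\pi,\alpha,Z_{t+1}))]\}$, which is the discounted form of the map in~(\ref{eq:team-infinite-value}). Since $\delta<1$, $\mathcal{B}$ is a contraction in the supremum norm, so by the Banach fixed-point theorem it has a unique bounded fixed point, which coincides with the optimal value $V_\infty$; this identifies the fixed-point equation with the Bellman equation~(\ref{eq:team-infinite-value}). A standard argument then shows that any stationary policy $\sigma_\infty$ attaining the maximum on the right-hand side of~(\ref{eq:team-infinite-value}) achieves $V_\infty$ and is therefore optimal among stationary SIB strategies, hence globally optimal by the first step.

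The main obstacle I expect is the structural step of establishing the controlled-Markov property of $\{\Pi_t\}$ rigorously in the infinite-horizon setting --- in particular, verifying that the conditional distribution of $Z_{t+1}$ given the entire past collapses to a function of $(\Pi_t,\alpha)$ alone, and that the update map $\eta_\infty$ and the reward $r$ possess the continuity and measurability needed to apply the MDP machinery on the continuum belief space $\Delta(\mathcal{X}_\infty\times\mathcal{S}_\infty)$. Most of this inherits from the finite-horizon construction underlying Theorem~\ref{thm:team}; the genuinely new ingredient is stationarity, which makes the kernel time-homogeneous and lets the contraction argument close. A secondary subtlety is to confirm that optimality within the stationary SIB class, which is what the Bellman equation certifies, coincides with global optimality --- which is precisely the content of Theorem~\ref{thm:nonstrategic-infinite}.
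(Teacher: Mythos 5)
Your proposal is correct and follows essentially the same route as the paper: invoke Theorem~\ref{thm:nonstrategic-infinite} to reduce to stationary SIB strategies, recast the problem as a discounted MDP on the belief state $\Pi_t$ controlled by prescriptions (the coordinator/``super agent'' viewpoint of Theorem~\ref{thm:team}), and close with the standard contraction/verification argument for infinite-horizon discounted dynamic programming, which the paper simply cites to \cite[Ch.~8]{kumar1986stochastic} rather than spelling out. The one step the paper makes explicit that you omit is passing from stationary SIB strategies \emph{with} the public randomization device (which is all Theorem~\ref{thm:nonstrategic-infinite} delivers) to SIB strategies without it, justified there by the finiteness of the spaces and the fact that in a team a randomized profile is optimal only if it is optimal for every realization of the randomization; you need this (or an equivalent linearity-of-the-objective argument) before maximizing over deterministic prescriptions $\alpha^{1:N}:\mathcal{S}_\infty^{1:N}\rightarrow\mathcal{A}_\infty^{1:N}$ in~(\ref{eq:team-infinite-value}).
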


The result of Theorem \ref{thm:team-inifnite} provide a generalization of Bellman equation for POMDPS  (see \cite[Ch. 8]{kumar1986stochastic}) to decision problems with many agents and asymmetric information.
		\section{Conclusion} \label{sec:conclsion}
		We presented a general approach to study a general class of dynamic multi-agent decision making problems with non-strategic agents. We proposed the notion of sufficient information that enables us to compress effectively the agents' (private and common) information in a mutually consistent manner for decision making purposes. We showed that the restriction to the class of SIB strategies are without loss of generality. Accordingly, we provided a sequential decomposition of dynamic decision problems with non-strategic agents, and formulated a dynamic program to determine a globally optimal strategy profile in dynamic teams. The proposed sufficient information approach presented in this paper generalizes a set of existing results in the literature for the study of dynamic multi-agent decision making problems with non-strategic agents. Our results in this paper, along with those appearing in the companion paper \cite{game}, provide a unified appraoch to study dynamic decision problems with non-strategic agents (teams) and strategic agents (games).
		For future directions, we will investigate the problem of determining a constructive algorithm that enables us to identify sufficient (private) information in a systematic way.% As 		
	
		%\vspace*{-10pt}		
		\bibliographystyle{ieeetr}
		\bibliography{thesis-bib,collection,otherrefs}

\iflongversion
\else % double column version		% double column version
		\vspace*{-35pt}		
		
		\begin{IEEEbiography}[{\includegraphics[width=1in]{tavafoghi}}]{Hamidreza Tavafoghi}
			received the Bachelor's degree in Electrical Engineering at Sharif University of Technology, Iran, 2011, and 
			the M.Sc and Ph.D in Electrical Engineering and M.A. in Economics from the University of Michigan in 2013, 2017, and 2017, respectively.
			He is currently a postdoctoral researcher at the University of California, Berkeley. His research interests lie in reinforcement learning, stochastic control, game theory, mechanism design, and stochastic control and their applications to transportation networks, power systems.			
		\end{IEEEbiography}

		\vspace*{-35pt}		

\begin{IEEEbiography}[{\includegraphics[width=1in,height=1.25in,clip,keepaspectratio]{ouyang.jpg}}]
{Yi Ouyang}(S'13-M'16)
%received the B.S. degree in Electrical Engineering from the National Taiwan University, Taipei, Taiwan in 2009, and the
%M.S. degree and the Ph.D. degree in electrical engineering and computer science from the University of Michigan, Ann Arbor, MI, in 2012 and 2016, respectively. He was a Postdoctoral Scholar at %the University of Southern
%California from 2016 to 2017. He is currently a Postdoctoral Scholar at the University of California, Berkeley, CA. His research interests include stochastic control, reinforcement learning, %decentralized decision-making, and dynamic games with asymmetric information.
			received the B.S. degree in Electrical Engineering from the National Taiwan University, Taipei, Taiwan in 2009, and the M.Sc and Ph.D. in Electrical Engineering at the University of Michigan, in 2012 and 2015, respectively. He is currently a researcher at Preferred Networks America, Inc. His research interests include reinforcement learning, stochastic control, and stochastic dynamic games.
\end{IEEEbiography}

		\vspace*{-35pt}		

		\begin{IEEEbiography}[{\includegraphics[width=1in]{teneketzis}}]{Demosthenis Teneketzis}(M'87--SM'97--F'00)
			is currently Professor of Electrical Engineering and Computer Science at the University of Michigan,
			Ann Arbor, MI, USA.  Prior to joining the University of Michigan, he worked for Systems Control, Inc., Palo Alto,and Alphatech, Inc., Burlington, MA.
			His research interests are in stochastic control, decentralized systems, queuing and communication networks, resource allocation, mathematical economics, and discrete-event systems.
		\end{IEEEbiography}
		%\newpage
\fi
		\appendix
\vspace*{-5pt}
\begin{proof}[\textbf{Proof of Theorem \ref{thm:beliefindependence}}]	
	We prove the result of part (i) by induction.
	For $t\hspace*{-1pt}=\hspace*{-1pt}1$ the result holds since the agents have not taken any action yet.
	Suppose that (\ref{eq:beliefindependence1}) holds for $t\hspace*{-1pt}-\hspace*{-1pt}1$. Then,
	\begin{align}
	&\mathbb{P}^{g}\hspace{-3pt}\left\{\hspace{-1pt}x_t,\hspace{-1pt}h_t^{-i}|h_t^i\hspace{-1pt}\right\}\hspace{-1pt}=\hspace{-1pt}\sum_{x_{t-1}}\hspace{-1pt}\mathbb{P}^{g}\hspace{-2pt}\left\{\hspace{-1pt}x_t,x_{t-1},h_t^{-i}|h_t^i\hspace{-1pt}\right\}
	%%%%%%%%%%%%%%%%%%%%%%%%%%%%%%%%%%%%%%%%%%%%%%%%%%
	=
	%%%%%%%%%%%%%%%%%%%%%%%%%%%%%%%%%%%%%%%%%%%%%%%%%%
	\sum_{x_{t-1}}\hspace{-1pt}\mathbb{P}^{g}\left\{\hspace{-1pt}x_t,x_{t-1},h_{t-1}^{-i},a_{t-1}^{-i},y_t^{-i}|h_{t-1}^i,a_{t-1}^i,y_t^i,z_t\hspace{-1pt}\right\}\nonumber\\
	%%%%%%%%%%%%%%%%%%%%%%%%%%%%%%%%%%%%%%%%%%%%%%%%%%
	=&
	%%%%%%%%%%%%%%%%%%%%%%%%%%%%%%%%%%%%%%%%%%%%%%%%%%
	\hspace{-2pt}\sum_{x_{t-1}}\hspace{-1pt}\mathbb{P}\{\hspace{-1pt}y_t^{-i}|x_{t}\hspace{-1pt},\hspace{-1pt}a_{t-1}\hspace{-1pt}\}\mathbb{P}^{g}\hspace{-3pt}\left\{\hspace{-1pt}x_t\hspace{-1pt},\hspace{-1pt}x_{t-1}\hspace{-1pt},\hspace{-1pt}h_{t-1}^{-i}\hspace{-1pt},\hspace{-1pt}a_{t-1}^{-i}|h_{t-1}^i\hspace{-1pt},\hspace{-1pt}a_{t-1}^i\hspace{-1pt},\hspace{-1pt}y_t^i\hspace{-1pt},\hspace{-1pt}z_t\hspace{-1pt}\right\}\nonumber\\
	%%%%%%%%%%%%%%%%%%%%%%%%%%%%%%%%%%%%%%%%%%%%%%%%%%
	=&
	%%%%%%%%%%%%%%%%%%%%%%%%%%%%%%%%%%%%%%%%%%%%%%%%%%
	\sum_{x_{t-1}}\hspace{-3pt}\Big[\mathbb{P}\{\hspace{-1pt}y_t^{-i}|x_{t},\hspace{-1pt}a_{t-1}\hspace{-1pt}\}\mathbb{P}\{\hspace{-1pt}x_t|x_{t-1},\hspace{-1pt}a_{t-1}\hspace{-1pt}\}
	\mathbb{P}^{g}	\left\{\hspace{-1pt}x_{t-1},\hspace{-1pt}h_{t-1}^{-i},\hspace{-1pt}a_{t-1}^{-i}|h_{t-1}^i,\hspace{-1pt}a_{t-1}^i,\hspace{-1pt}y_t^i,\hspace{-1pt}z_t\hspace{-1pt}\right\}\Big]\nonumber\\
	%%%%%%%%%%%%%%%%%%%%%%%%%%%%%%%%%%%%%%%%%%%%%%%%%%
	=&
	%%%%%%%%%%%%%%%%%%%%%%%%%%%%%%%%%%%%%%%%%%%%%%%%%%
	\sum_{x_{t-1}}\hspace{-3pt}\Big[\mathbb{P}\{\hspace{-1pt}y_t^{-i}|x_{t},a_{t-1}\hspace{-1pt}\}\mathbb{P}\{x_t|x_{t-1},a_{t-1}\}g^{-i}_{t-1}(h_{t-1}^{-i})(a_{t-1}^{-i})
	\mathbb{P}^{g}\hspace{-3pt}\left\{\hspace{-1pt}x_{t-1},h_{t-1}^{-i}|h_{t-1}^i,a_{t-1}^i,y_t^i,z_t\hspace{-1pt}\right\}\hspace*{-2pt}\Big]\nonumber\\
	%%%%%%%%%%%%%%%%%%%%%%%%%%%%%%%%%%%%%%%%%%%%%%%%%%
	=&
	%%%%%%%%%%%%%%%%%%%%%%%%%%%%%%%%%%%%%%%%%%%%%%%%%%
	\sum_{x_{t-1}}\hspace{-3pt}\Big[\mathbb{P}\{\hspace{-1pt}y_t^{-i}|x_{t},a_{t-1}\hspace{-1pt}\}\mathbb{P}\{x_t|x_{t-1},a_{t-1}\}g^{-i}_{t-1}(h_{t-1}^{-i})(a_{t-1}^{-i})	\frac{\mathbb{P}^{g}\hspace{-3pt}\left\{x_{t-1},h_{t-1}^{-i},y_t^i,z_t|h_{t-1}^i,a_{t-1}^i\right\}}{\mathbb{P}^{g}\hspace{-3pt}\left\{y_t^i,z_t|h_{t-1}^i,a_{t-1}^i\right\}}\hspace*{-2pt}\Big].\label{eq:thm1-1}
	\end{align} 
	Consider the term $\mathbb{P}^{g}\hspace{-3pt}\left\{x_{t-1},h_{t-1}^{-i},y_t^i,z_t|h_{t-1}^i,a_{t-1}^i\right\}$ in the nominator of the expression above. We have,
	\begin{align}
	&\mathbb{P}^{g}\hspace{-3pt}\left\{x_{t-1},h_{t-1}^{-i},y_t^i,z_t|h_{t-1}^i,a_{t-1}^i\right\}\nonumber\\
	%%%%%%%%%%%%%%%%%%%%%%%%%%%%%%%%%%%%%%%%%%%%%%%%%%
	=
	%%%%%%%%%%%%%%%%%%%%%%%%%%%%%%%%%%%%%%%%%%%%%%%%%%
	&\sum_{a_{t-1}^{-i},x_t}\hspace*{-5pt}\Big[\mathbb{P}\{y_t^i,z_t|x_t,a_{t-1}^{-i},a_{t-1}^i\}\mathbb{P}\{x_t|x_{t-1},a_{t-1}^{-i},a_{t-1}^i\}
	g^{-i}_{t-1}(h_{t-1}^{-i})(a_{t-1}^{-i})\mathbb{P}^{g}\hspace{-3pt}\left\{x_{t-1},h_{t-1}^{-i}|h_{t-1}^i,a_{t-1}^{i}\right\}\hspace*{-2pt}\Big]\nonumber\\
	%%%%%%%%%%%%%%%%%%%%%%%%%%%%%%%%%%%%%%%%%%%%%%%%%%
	=
	%%%%%%%%%%%%%%%%%%%%%%%%%%%%%%%%%%%%%%%%%%%%%%%%%%
	&\sum_{a_{t-1}^{-i},x_t}\hspace*{-5pt}\Big[\mathbb{P}\{y_t^i,z_t|x_t,a_{t-1}^{-i},a_{t-1}^i\}\mathbb{P}\{x_t|x_{t-1},a_{t-1}^{-i},a_{t-1}^i\}g^{-i}_{t-1}(h_{t-1}^{-i})(a_{t-1}^{-i})\mathbb{P}^{g^{-i}}\hspace{-3pt}\left\{x_{t-1},h_{t-1}^{-i}|h_{t-1}^i,a_{t-1}^{i}\right\}\hspace*{-2pt}\Big]\nonumber\\
	%%%%%%%%%%%%%%%%%%%%%%%%%%%%%%%%%%%%%%%%%%%%%%%%%%
	=
	%%%%%%%%%%%%%%%%%%%%%%%%%%%%%%%%%%%%%%%%%%%%%%%%%%
	&\mathbb{P}^{g^{-i}}\hspace{-3pt}\left\{x_{t-1},h_{t-1}^{-i},y_t^i,z_t|h_{t-1}^i,a_{t-1}^i\right\}\label{eq:thm1-2}
	\end{align}	
	where the second equality follows from the induction hypothesis (\ref{eq:beliefindependence1}) for $t\hspace*{-2pt}-\hspace*{-2pt}1$. Consequently, we also have,	
	\begin{gather}
	\mathbb{P}^{g}\hspace{-3pt}\left\{\hspace{-1pt}y_t^i,\hspace{-1pt}z_t|h_{t-1}^i,\hspace{-1pt}a_{t-1}^i\hspace{-2pt}\right\}\hspace{-2pt}
	%%%%%%%%%%%%%%%%%%%%%%%%%%%%%%%%%%%%%%%%%%%%%%%%%%
	=\hspace{-3pt}
	%%%%%%%%%%%%%%%%%%%%%%%%%%%%%%%%%%%%%%%%%%%%%%%%%%
	\sum_{\hat{h}_{t-1}^{-i},\hat{x}_t}\hspace{-5pt}\mathbb{P}^{g}\hspace{-3pt}\left\{y_t^i,z_t,\hat{x}_{t-1},\hat{h}_{t-1}^{-i}|h_{t-1}^i,a_{t-1}^i\hspace{-1pt}\right\}\nonumber\\
	%%%%%%%%%%%%%%%%%%%%%%%%%%%%%%%%%%%%%%%%%%%%%%%%%%
	\stackrel{\text{by (\ref{eq:thm1-2})}}{=}\nonumber\\
	\hspace{-8pt}\sum_{\hat{h}_{t\hspace{-0.5pt}-\hspace{-0.5pt}1}^{-i}\hspace{-1pt},\hspace{-0.5pt}\hat{x}_t}\hspace{-6pt}\mathbb{P}^{g^{\hspace{-1pt}-i}}\hspace{-1pt}\hspace{-3pt}\left\{\hspace{-2pt}y_t^i\hspace{-1pt},\hspace{-2pt}z_t\hspace{-1pt},\hspace{-2pt}\hat{x}_{t\hspace{-0.5pt}-\hspace{-0.5pt}1}\hspace{-1pt},\hspace{-2pt}\hat{h}_{t\hspace{-0.5pt}-\hspace{-0.5pt}1}^{-i}\hspace{-1pt}|\hspace{-0.5pt}h_{t\hspace{-0.5pt}-\hspace{-0.5pt}1}^i\hspace{-1.5pt},\hspace{-2pt}a_{t\hspace{-0.5pt}-\hspace{-0.5pt}1}^i\hspace{-2pt}\right\}\hspace{-3pt}=\hspace{-2pt}\mathbb{P}^{g^{\hspace{-1pt}-i}}\hspace{-4pt}\left\{\hspace{-1pt}y_t^i\hspace{-1pt},\hspace{-1pt}z_t\hspace{-1pt}|\hspace{-0.5pt}h_{t\hspace{-0.5pt}-\hspace{-0.5pt}1}^i\hspace{-1.5pt},\hspace{-2pt}a_{t\hspace{-0.5pt}-\hspace{-0.5pt}1}^i\hspace{-2pt}\right\}\hspace{-8pt}\label{eq:thm1-3}
	\end{gather}	
	Substituting (\ref{eq:thm1-2}) and (\ref{eq:thm1-3}) in (\ref{eq:thm1-1}),
	\begin{align}
	&\mathbb{P}^{g}\hspace{-3pt}\left\{\hspace{-1pt}x_t,\hspace{-1pt}h_t^{-i}|h_t^i\hspace{-1pt}\right\}\nonumber\\
	=&\sum_{x_{t-1}}\hspace{-3pt}\Big[\mathbb{P}\{\hspace{-1pt}y_t^{-i}|x_{t},a_{t-1}\hspace{-1pt}\}\mathbb{P}\{x_t|x_{t-1},a_{t-1}\}g^{-i}_{t-1}(h_{t-1}^{-i})(a_{t-1}^{-i})	\frac{\mathbb{P}^{g^{-i}}\hspace{-3pt}\left\{\hspace{-1pt}x_{t-1},\hspace{-1pt}h_{t-1}^{-i},\hspace{-1pt}y_t^i,\hspace{-1pt}z_t|h_{t-1}^i,\hspace{-1pt}a_{t-1}^i\right\}}{\mathbb{P}^{g^{-i}}\hspace{-3pt}\left\{y_t^i,z_t|h_{t-1}^i,a_{t-1}^i\right\}}\hspace*{-2pt}\Big]%\nonumber\\
	%\end{align}
	%\begin{align}
	%&
	\nonumber\\
	=&\hspace{-2pt}\mathbb{P}^{g^{-i}}\hspace{-3pt}\left\{\hspace{-1pt}x_t,\hspace{-1pt}h_t^{-i}|h_t^i\hspace{-1pt}\right\}\nonumber
	\end{align}	
	which establishes the induction step  for $t$.

	Given the result of part (i), the result of part (ii) follows directly from the definition of SIB strategies (\ref{eq:SIB-strategy}) and SIB update rule (\ref{eq:CIBupdaterule}).
\end{proof}

To provide the proof for Theorem \ref{thm:nonstrategic}, we need the following result. \vspace*{-3pt}

\begin{lemma}
	\label{lemma:lem}
	Given a SIB strategy profile $\sigma$ and update rule $\psi$ consistent with $\sigma$,
	\begin{align}
	\hspace*{-3pt}\mathbb{P}^{\sigma}_\psi\{s_{t+1},\pi_{t+1}|p_t,c_t,a_t\}\hspace{-2pt}=\hspace{-2pt}\mathbb{P}^{\sigma}_\psi\{s_{t+1},\pi_{t+1}|s_t,\pi_t,a_t\},\hspace*{-2pt}	\end{align}
	for all $s_{t+1},\pi_{t+1},s_t,\pi_t,a_t$.
\end{lemma}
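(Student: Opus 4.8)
The plan is to establish the identity in two moves: first replace the full private information $p_t$ by the sufficient private information $s_t$ in the conditioning using part (ii) of Definition \ref{def:sufficient}, and then replace the full common information $c_t$ by the SIB belief $\pi_t$ using the definition of $\gamma_t$ together with the structure of SIB strategies. Throughout I treat $\sigma$ as the induced general strategy $g^{(\sigma,\gamma)}$ via \eqref{eq:SIB-strategy}, so that the conditions of Definition \ref{def:sufficient} apply. For the first move I would exploit that, under the update rule $\psi$, $\Pi_{t+1}=\psi_{t+1}^{\sigma_t}(\Pi_t,Z_{t+1})$ is a deterministic function of $(\Pi_t,Z_{t+1})$, while $\pi_t=\gamma_t(c_t)$ is a deterministic function of $c_t$. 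Hence, fixing $c_t$ (which fixes $\pi_t$), the event $\{\Pi_{t+1}=\pi_{t+1}\}$ coincides with $\{Z_{t+1}\in\mathcal{Z}_{t+1}(\pi_t,\pi_{t+1})\}$, where $\mathcal{Z}_{t+1}(\pi_t,\pi_{t+1}):=\{z:\psi_{t+1}^{\sigma_t}(\pi_t,z)=\pi_{t+1}\}$, giving
\begin{align}
\mathbb{P}^{\sigma}_\psi\{s_{t+1},\pi_{t+1}|p_t,c_t,a_t\}=\sum_{z_{t+1}\in\mathcal{Z}_{t+1}(\pi_t,\pi_{t+1})}\mathbb{P}^{\sigma}_\psi\{s_{t+1},z_{t+1}|p_t,c_t,a_t\}.\nonumber
\end{align}
Applying \eqref{eq:sufficientdynamic} term-by-term replaces $p_t$ by $s_t$ in each summand, so the sum equals $\mathbb{P}^{\sigma}_\psi\{s_{t+1},\pi_{t+1}|s_t,c_t,a_t\}$.

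It then remains to show this last quantity depends on $c_t$ only through $\pi_t$. I would decompose over $x_t$,
\begin{align}
\mathbb{P}^{\sigma}_\psi\{s_{t+1},z_{t+1}|s_t,c_t,a_t\}=\sum_{x_t}\mathbb{P}^{\sigma}_\psi\{s_{t+1},z_{t+1}|x_t,s_t,a_t\}\,\mathbb{P}^{\sigma}_\psi\{x_t|s_t,c_t,a_t\}.\nonumber
\end{align}
The first factor is independent of $c_t$: given $(x_t,a_t)$, the variables $X_{t+1},Y_{t+1},Z_{t+1}$ are generated through \eqref{eq:systemdynamic1}--\eqref{eq:systemdynamic3} from primitive noises independent of the past, and $(s_{t+1},z_{t+1})$ is a further deterministic function of $(s_t,Y_{t+1},a_t,Z_{t+1})$ via $\phi_{t+1}$. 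For the second factor I would use that under a SIB strategy each $A_t^i$ is drawn from $\sigma_t^i(\pi_t,s_t^i)$ together with independent randomization, so that $A_t$ is conditionally independent of $X_t$ given $(S_t,\Pi_t)$; therefore $\mathbb{P}^{\sigma}_\psi\{x_t|s_t,c_t,a_t\}=\mathbb{P}^{\sigma}_\psi\{x_t|s_t,c_t\}$, and by the definition of the SIB belief this equals $\gamma_t(c_t)(x_t,s_t)/\sum_{x}\gamma_t(c_t)(x,s_t)=\pi_t(x_t,s_t)/\sum_x\pi_t(x,s_t)$, which depends on $c_t$ only through $\pi_t$. Combining the two factors gives $\mathbb{P}^{\sigma}_\psi\{s_{t+1},z_{t+1}|s_t,c_t,a_t\}=\mathbb{P}^{\sigma}_\psi\{s_{t+1},z_{t+1}|s_t,\pi_t,a_t\}$, and re-summing over $z_{t+1}\in\mathcal{Z}_{t+1}(\pi_t,\pi_{t+1})$ produces the desired right-hand side.

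I expect the main obstacle to be this second move, that is, verifying that the posterior of $X_t$ given $(s_t,c_t,a_t)$ collapses to a functional of $\pi_t$ alone. This is precisely where both the definition of $\gamma_t$ and the conditional independence of the action from the state (a consequence of restricting to SIB strategies) are indispensable: without the SIB structure, conditioning on $a_t$ could leak information about $x_t$ beyond what $\pi_t$ encodes, and the reduction would break down. A minor bookkeeping point is that the map $z_{t+1}\mapsto\pi_{t+1}$ depends on $\pi_t$, but since $\pi_t$ is held fixed in both conditionings this causes no difficulty, and the summation over $\mathcal{Z}_{t+1}(\pi_t,\pi_{t+1})$ matches on the two sides.
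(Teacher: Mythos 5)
Your proof is correct and follows essentially the same route as the paper's: expanding $\pi_{t+1}$ as a sum over $z_{t+1}$ consistent with the update rule $\psi$, invoking condition (ii) of Definition \ref{def:sufficient} to replace $p_t$ by $s_t$, and then decomposing over $x_t$ so that the posterior $\mathbb{P}\{x_t|s_t,c_t,a_t\}$ collapses to $\pi_t(x_t,s_t)/\sum_{\hat{x}_t}\pi_t(\hat{x}_t,s_t)$ via the definition of the SIB belief. If anything, you are more explicit than the paper in justifying why conditioning on $a_t$ may be dropped in that posterior (conditional independence of $A_t$ from $X_t$ given $(S_t,\Pi_t)$ under SIB strategies), a step the paper performs silently.
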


\vspace*{2pt}

\begin{proof}[\textbf{Proof of Lemma \ref{lemma:lem}}]
	Let $g^\sigma$ denote the strategy profile, given by (\ref{eq:SIB-strategy}), that corresponds to SIB strategy profile $\sigma$. We have,
	\begin{gather*}
	\mathbb{P}^{\sigma}_\psi\{s_{t+1},\hspace*{-1pt}\pi_{t+1}|p_t,\hspace*{-1pt}c_t,\hspace*{-1pt}a_t\}
	%%%%%%%%%%%%%%%%%%%%%%%%%%%%%%%%%%%%%%%%%%%%%%%%%%%%%%%%%%%%%%
	\stackrel{\pi_t=\gamma_t(c_t)}{=}
	%%%%%%%%%%%%%%%%%%%%%%%%%%%%%%%%%%%%%%%%%%%%%%%%%%%%%%%%%%%%%%
	\mathbb{P}^{\sigma}_\psi\{s_{t+1},\hspace*{-1pt}\pi_{t+1}|p_t,\hspace*{-1pt}c_t,\hspace*{-1pt}a_t,\hspace*{-1pt}\pi_t\}\\
	%%%%%%%%%%%%%%%%%%%%%%%%%%%%%%%%%%%%%%%%%%%%%%%%%%%%%%%%%%%%%%
	\stackrel{\text{using update rule (\ref{eq:CIBupdaterule})}}{=}\\
	%%%%%%%%%%%%%%%%%%%%%%%%%%%%%%%%%%%%%%%%%%%%%%%%%%%%%%%%%%%%%%
	\sum_{z_{t+1}}\Big[\mathbb{P}^{\sigma}_\psi\{s_{t+1},z_{t+1}|p_t,c_t,a_t,\pi_t\}\mathbf{1}_{\{\pi_{t+1}=\psi_{t+1}(\pi_t,z_{t+1})\}}\Big]\\
	%%%%%%%%%%%%%%%%%%%%%%%%%%%%%%%%%%%%%%%%%%%%%%%%%%%%%%%%%%%%%%
	\stackrel{\text{by  (\ref{eq:sufficientdynamic})}}{=}\\
	%%%%%%%%%%%%%%%%%%%%%%%%%%%%%%%%%%%%%%%%%%%%%%%%%%%%%%%%%%%%%%
	\hspace{-10pt}\sum_{z_{t+1}}\Big[\mathbb{P}^{\sigma}_\psi\{s_{t+1},z_{t+1}|s_t,c_t,a_t,\pi_t\}\mathbf{1}_{\{\pi_{t+1}=\psi_{t+1}(\pi_t,z_{t+1})\}}\Big]
	%%%%%%%%%%%%%%%%%%%%%%%%%%%%%%%%%%%%%%%%%%%%%%%%%%%%%%%%%%%%%%
	\\=\\	
	%%%%%%%%%%%%%%%%%%%%%%%%%%%%%%%%%%%%%%%%%%%%%%%%%%%%%%%%%%%%%%
	\sum_{y_{t+1},x_{t+1},x_t,z_{t+1}}\hspace{-25pt}\Big[\mathbb{P}^{\sigma}_\psi\{s_{t+1},z_{t+1},y_{t+1},x_{t+1},x_t|s_t,c_t,a_t,\pi_t\}\mathbf{1}_{\{\pi_{t+1}=\psi_{t+1}(\pi_t,z_{t+1})\}}\Big]\\
	%%%%%%%%%%%%%%%%%%%%%%%%%%%%%%%%%%%%%%%%%%%%%%%%%%%%%%%%%%%%%%
	\stackrel{\text{by system dynamics (\ref{eq:systemdynamic1}) and (\ref{eq:systemdynamic2})}}{=}\\
	\end{gather*}
	\begin{gather*}
	%%%%%%%%%%%%%%%%%%%%%%%%%%%%%%%%%%%%%%%%%%%%%%%%%%%%%%%%%%%%%%
	\hspace*{-120pt}\sum_{y_{t+1},x_{t+1},x_t,z_{t+1}}\hspace{-25pt}\Big[\mathbb{P}^{\sigma}_\psi\{s_{t+1}|s_t,c_t,a_t,\pi_t,z_{t+1},y_{t+1},x_{t+1},x_t\}\nonumber\\\hspace*{98pt}\mathbb{P}\{z_{t+1},y_{t+1}|a_t,x_{t+1}\}\mathbb{P}\{x_{t+1}|x_t,a_t\}
	\mathbb{P}^{\sigma}\{x_t|s_t,c_t,a_t,\pi_t\}\mathbf{1}_{\{\pi_{t+1}=\psi_{t+1}(\pi_t,z_{t+1})\}}\Big]\\
	%%%%%%%%%%%%%%%%%%%%%%%%%%%%%%%%%%%%%%%%%%%%%%%%%%%%%%%%%%%%%%
	\stackrel{\text{by  (\ref{eq:sufficientupdate})}}{=}\\
	%%%%%%%%%%%%%%%%%%%%%%%%%%%%%%%%%%%%%%%%%%%%%%%%%%%%%%%%%%%%%%
	\hspace*{-135pt}\sum_{y_{t+1},x_{t+1},x_t,z_{t+1}}\hspace{-5pt}\Big[\Big(\prod\limits_j\mathbf{1}_{\{s_{t+1}^j=\phi_{t+1}^j(s_t^j,\{y_{t+1}^j,z_{t+1},a_t^j\};g^\sigma)\}}\Big)\nonumber\\\hspace*{90pt}\mathbb{P}\{z_{t+1},y_{t+1}|a_t,x_{t+1}\}\mathbb{P}\{x_{t+1}|x_t,a_t\}
	\mathbb{P}^{\sigma}_\psi\{x_t|s_t,c_t\}\mathbf{1}_{\{\pi_{t+1}=\psi_{t+1}(\pi_t,z_{t+1})\}}\Big]\\
	%%%%%%%%%%%%%%%%%%%%%%%%%%%%%%%%%%%%%%%%%%%%%%%%%%%%%%%%%%%%%%
	\stackrel{\text{by Bayes' rule}}{=}\\
	%%%%%%%%%%%%%%%%%%%%%%%%%%%%%%%%%%%%%%%%%%%%%%%%%%%%%%%%%%%%%%
	\hspace*{-135pt}\sum_{y_{t+1},z_{t+1},x_{t+1},x_t}\hspace{-5pt}\Big[\Big(\prod\limits_j\mathbf{1}_{\{s_{t+1}^j=\phi_{t+1}^j(s_t^j,\{y_{t+1}^j,z_{t+1},a_t^j\};g^\sigma)\}}\Big)\\\hspace*{96pt}\mathbb{P}\{z_{t+1},y_{t+1}|a_t,x_{t+1}\}\mathbb{P}\{x_{t+1}|x_t,a_t\}\frac{\mathbb{P}^{\sigma}_\psi\{x_t,s_t|c_t\}}{\mathbb{P}^{\sigma}_\psi\{s_t|c_t\}}\mathbf{1}_{\{\pi_{t+1}=\psi_{t+1}(\pi_t,z_{t+1})\}}\Big]\\
	%%%%%%%%%%%%%%%%%%%%%%%%%%%%%%%%%%%%%%%%%%%%%%%%%%%%%%%%%%%%%%
	=\\
	%%%%%%%%%%%%%%%%%%%%%%%%%%%%%%%%%%%%%%%%%%%%%%%%%%%%%%%%%%%%%%
	\hspace*{-135pt}\sum_{y_{t+1},z_{t+1},x_{t+1},x_t}\hspace{-5pt}\Big[\Big(\prod\limits_j\mathbf{1}_{\{s_{t+1}^j=\phi_{t+1}^j(s_t^j,\{y_{t+1}^j,z_{t+1},a_t^j\};g^\sigma)\}}\Big)\\\hspace{96pt}\mathbb{P}\{z_{t+1},y_{t+1}|a_t,x_{t+1}\}\mathbb{P}\{x_{t+1}|x_t,a_t\}
	\frac{\pi_t(x_t,s_t)}{\sum_{\hat{x}_t}\pi_t(\hat{x}_t,s_t)}\mathbf{1}_{\{\pi_{t+1}=\psi_{t+1}(\pi_t,z_{t+1})\}}\Big]
	%%%%%%%%%%%%%%%%%%%%%%%%%%%%%%%%%%%%%%%%%%%%%%%%%%%%%%%%%%%%%%
	\\=\\
	%%%%%%%%%%%%%%%%%%%%%%%%%%%%%%%%%%%%%%%%%%%%%%%%%%%%%%%%%%%%%%
	\mathbb{P}^{\sigma}_\psi\{s_{t+1},\pi_{t+1}|s_t,\pi_t,a_t\}.
	\end{gather*}
\end{proof}

\begin{proof}[Proof of Theorem \ref{thm:nonstrategic}]
Consider an arbitrary strategy profile $g$. We prove the existence of SIB strategy profile that is equivalent to $g$ by construction. 

With some abuse of notation, let $\sigma^{i}(\Pi_t,S_t^i,\omega_t)$ denote agent $i$'s strategy using the public randomization device $\omega_t$. We construct a SIB strategy profile $\sigma_t$ that has the following properties:
\begin{enumerate}[(a)]
	\item  %assuming that $\mathbb{P}^{\sigma_{1:t-1}}\{S_t,\Pi_t\}=\mathbb{P}^{g_{1:t-1}}\{S_t,\Pi_t\}$, 
	the induced distribution on $\{\Pi_{t+1},S_{t+1}\}$ under $\sigma$ coincides with one under $g$, \textit{i.e.}
	\begin{align}
	\mathbb{P}^{\sigma_{1:t}}\left\{\pi_{t+1},s_{t+1}\right\}=\mathbb{P}^{g_{1:t}}\left\{\pi_{t+1},s_{t+1}\right\}.\label{eq:proof-thm4-a}
	\end{align}
	\item %assuming that $\mathbb{P}^{\sigma_{1:t-1}}\{S_t,\Pi_t\}=\mathbb{P}^{g_{1:t-1}}\{S_t,\Pi_t\}$, 
	the continuation payoff for all the agents under $\sigma$ is the same as that under $g$, \textit{i.e.}  for all $i\in\mathcal{N}$,
	\begin{gather}
	\mathbb{E}^g\left\{\sum_{\tau=t}^{T}u_\tau^i(X_\tau,g_\tau(H_\tau))\right\}=\hspace*{5pt}\mathbb{E}^\sigma\left\{\sum_{\tau=t}^{T}u^{i}_\tau(X_\tau,\sigma_\tau(\Pi_\tau,S_\tau,\omega_\tau))\right\}.\label{eq:proof-thm4-b}
	\end{gather}
	
\end{enumerate}

We prove condition (a) by forward induction and condition (b) by backward induction. We note that condition (a) is satisfied for $t=1$, since at $t=1$ no action has been taken. Moreover, condition (b) is satisfied for $t=T+1$ since there is no future.
\vspace{5pt}

%We note that these conditions are clearly satisfied for $T+1$.
Assume that condition (a) is satisfied from $1$ to $t$, $t\hspace*{-2pt}\in\hspace*{-2pt}\mathcal{T}$.  We construct $\sigma_t$ below such that condition (a) is satisfied at $t\hspace*{-2pt}+\hspace*{-2pt}1$. 

To construct $\sigma_t$, we first define below a random vector $R_t^{1:N}$ based on $H_t^{1:N}$, such that for every $i\hspace{-2pt}\in\hspace{-2pt}\mathcal{N}$, (i) $R_t^{1:N}$ is independent of $\Pi_t$ and $S_t^{1:N}$, and (ii) $H_t^i$ can be reconstructed using $R_t^i$ along with $\Pi_t$ and $S_t^{i}$.

We proceed as follows. For every time $t\in\mathcal{T}$, let $(\pi_t,s_t^{1:N})$ denote the realization of the agents' sufficient common information and private information, respectively. Let $\mathcal{H}_t^i:=\{h_t^{i,1},...,h_t^{i,|\mathcal{H}_t^i|}\}$ denote the set of all histories of agent $i$ at time $t$, where $|\mathcal{H}_t^i|$ denote the number of possible realizations of agent $i$'s history at time $t$. Conditioned on the realization of $(\pi_t,s_t^i)$, let $\{p(h_t^{i,k}|\pi_t,s_t^i),1\hspace{-2pt}\leq\hspace{-2pt} k\hspace{-2pt}\leq\hspace{-2pt} |H_t^i|)\}$ denote the probability mass function on $\mathcal{H}_t^i$ that leads to $(\pi_t,s_t^i)$ for agent $i$. Define the random variable $R_t^i$ on $[0,1]$ as follows: \vspace*{3pt}
\begin{align}
\hspace*{-27pt}\text{1)}\hspace*{27pt}\mathbb{P}\Big\{\hspace*{-1pt}0\hspace*{-2pt}\leq \hspace*{-2pt}R_t^i \hspace*{-2pt}\leq\hspace*{-2pt} p(h_t^{i,1}|\pi_t,\hspace*{-1pt}s_t^i)\hspace*{-1pt}\Big\}\hspace*{-2pt}=\hspace*{-1pt}p(h_t^{i,1}|\pi_t,\hspace*{-1pt}s_t^i),\label{eq:thm2-r-1}
\end{align} 
and conditioned on the event $
\Big\{\hspace*{-1pt}0\hspace*{-3pt}\leq \hspace*{-3pt}R_t^i \hspace*{-3pt}\leq\hspace*{-3	pt} p(h_t^{i,1}\hspace*{-1pt}|\pi_t,\hspace*{-1pt}s_t^i)\hspace*{-1pt}\Big\}$, $R_t^i$ is uniformly distributed \hspace*{-3pt} on \hspace*{-1pt} $[0,\hspace*{-1pt}p(h_t^{i,1}|\pi_t,\hspace*{-1pt}s_t^i)]$. \vspace*{5pt}
	
2) For $1<k\leq |\mathcal{H}_t^i|$,
\begin{align} 
\hspace*{1pt}\mathbb{P}\Big\{\hspace*{-2pt}\sum_{j=1}^{k-1}p(h_t^{i,j}|\pi_t,\hspace*{-1pt}s_t^i)\hspace*{-2pt}\leq \hspace*{-2pt}R_t^i \hspace*{-2pt}\leq\hspace*{-2pt} \sum_{j=1}^{k}p(h_t^{i,j}|\pi_t,\hspace*{-1pt}s_t^i)\hspace*{-2pt}\Big\}\hspace*{-3pt}=\hspace*{-1pt}p(h_t^{i,k}|\pi_t,\hspace*{-1pt}s_t^i),\label{eq:thm2-r-2}
\end{align}	
and conditioned on the event $\Big\{\hspace*{-2pt}\sum_{j=1}^{k-1}p(h_t^{i,j}|\pi_t,\hspace*{-1pt}s_t^i)\hspace*{-2pt}\leq \hspace*{-2pt}R_t^i \hspace*{-2pt}\leq\hspace*{-2pt} \sum_{j=1}^{k}p(h_t^{i,j}|\pi_t,\hspace*{-1pt}s_t^i)\hspace*{-2pt}\Big\}$, $R_t^i$ is uniformly distributed \hspace*{-3pt} on \hspace*{-1pt} $\Big[\hspace*{-1pt}\sum_{j=1}^{k-1}p(h_t^{i,j}|\pi_t,\hspace*{-1pt}s_t^i)\hspace*{-2pt},\sum_{j=1}^{k}p(h_t^{i,j}|\pi_t,\hspace*{-1pt}s_t^i)\hspace*{-1pt}\Big]$. \vspace*{5pt}

Therefore, $R_t^i$ is uniformly distributed on $[0,1]$ and is independent of $(\Pi_t,S_t^i)$. Furthermore, for any realization $(\pi_t,s_t^i,r_t^i)$ %associated with an arbitrary history realization $h_t^{i,l}$, 
we can uniquely determine $h_t^{i,l}$ where \vspace*{-2pt}
\begin{align}
l:=\min\{k:r_t^i\geq \sum_{j=1}^{k-1} p(h_t^{i,j}|\pi_t,s_t^i)\}. \label{eq:thm2-r-inv}\vspace*{-2pt}
\end{align}

 Therefore, the random variable $R_t^{i}$ defined above, satisfies the mentioned-above conditions (i) and (ii)
 when $H_t^{i}$ takes finite values.\vspace{5pt}
 
 We show below that $R_t^i$ is independent of $S_t$.

	\begin{lemma}\label{lemma:noise}
		The random variable $R_t^i$, $i\in\mathcal{N}$, is independent of $\Pi_t$ and $S_t$ for all $t\in\mathcal{T}$.
	\end{lemma}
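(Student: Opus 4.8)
The plan is to show that, conditioned on any realization $(\pi_t,s_t)$ of $(\Pi_t,S_t)$, the variable $R_t^i$ is still uniformly distributed on $[0,1]$; since this conditional law does not depend on $(\pi_t,s_t)$, independence follows. The discussion preceding the lemma already establishes that $R_t^i$ is uniform and independent of $(\Pi_t,S_t^i)$, so the whole content of the lemma is to upgrade this to conditioning on the full vector $S_t=S_t^{1:N}$, i.e. to absorb the other agents' sufficient private information $S_t^{-i}$.

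The key step I would isolate is the conditional independence
\[
H_t^i \perp S_t^{-i}\,\big|\,(\Pi_t,S_t^i),\qquad\text{equivalently}\qquad \mathbb{P}\{S_t^{-i}=s_t^{-i}\mid H_t^i\}=\mathbb{P}\{S_t^{-i}=s_t^{-i}\mid \Pi_t,S_t^i\}.
\]
I would obtain it in two moves. First, since $H_t^i=(C_t,P_t^i)$, condition (iv) of Definition \ref{def:sufficient} (equation (\ref{eq:sufficientinfo})) gives $\mathbb{P}\{S_t^{-i}=s_t^{-i}\mid H_t^i\}=\mathbb{P}\{S_t^{-i}=s_t^{-i}\mid S_t^i,C_t\}$, replacing the full private information $P_t^i$ by its compression $S_t^i$. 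Second, by the definition of the SIB belief (\ref{eq:SIBcommon belief}), the marginal $\mathbb{P}\{S_t=s_t\mid C_t\}=\sum_{x_t}\Pi_t(x_t,s_t)$ is a function of $\Pi_t=\gamma_t(C_t)$ alone, whence
\[
\mathbb{P}\{S_t^{-i}=s_t^{-i}\mid S_t^i=s_t^i,C_t\}=\frac{\sum_{x_t}\Pi_t\big(x_t,(s_t^i,s_t^{-i})\big)}{\sum_{x_t,\tilde{s}^{-i}}\Pi_t\big(x_t,(s_t^i,\tilde{s}^{-i})\big)}
\]
depends on $C_t$ only through $\Pi_t$. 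This lets me replace $C_t$ by $\Pi_t$ in the previous identity and conclude the displayed conditional independence.

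With this in hand I would finish by conditioning on $H_t^i$. Writing $I_k(\pi_t,s_t^i)$ for the intervals from (\ref{eq:thm2-r-1})--(\ref{eq:thm2-r-2}), which partition $[0,1]$ and have Lebesgue lengths $p(h_t^{i,k}\mid\pi_t,s_t^i)$, the construction makes $R_t^i$ uniform on $I_k$ given $\{H_t^i=h_t^{i,k}\}$ and, being generated from $H_t^i$ together with exogenous randomization, conditionally independent of $S_t^{-i}$ given $H_t^i$. Hence, for any Borel $B\subseteq[0,1]$ and restricting to histories of positive probability,
\[
\mathbb{P}\{R_t^i\in B\mid \Pi_t=\pi_t,S_t=s_t\}=\sum_k \frac{|B\cap I_k(\pi_t,s_t^i)|}{p(h_t^{i,k}\mid\pi_t,s_t^i)}\,\mathbb{P}\{H_t^i=h_t^{i,k}\mid \Pi_t=\pi_t,S_t=s_t\}.
\]
The conditional independence turns $\mathbb{P}\{H_t^i=h_t^{i,k}\mid\Pi_t,S_t\}$ into $p(h_t^{i,k}\mid\pi_t,s_t^i)$, the factors cancel, and the sum collapses to $\sum_k|B\cap I_k(\pi_t,s_t^i)|=|B|$. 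Since this Lebesgue measure is independent of $(\pi_t,s_t)$, the variable $R_t^i$ is independent of $(\Pi_t,S_t)$, as claimed.

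The main obstacle is the second move of the conditional-independence step: condition (iv) only yields conditional independence of $S_t^{-i}$ from $P_t^i$ given the \emph{full} common information $C_t$, whereas the lemma requires conditioning on the \emph{coarser} statistic $\Pi_t=\gamma_t(C_t)$, and conditional independence is not generally preserved under coarsening. The step therefore genuinely relies on the fact that $\Pi_t$ is exactly the posterior of $(X_t,S_t)$ given $C_t$, which makes the relevant conditional law of $S_t^{-i}$ measurable with respect to $\Pi_t$; keeping all manipulations on positive-probability histories (as in the qualifications of Definition \ref{def:sufficient}) is the only other point demanding care.
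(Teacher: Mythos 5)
Your proposal is correct and follows essentially the same route as the paper's proof: both arguments rest on condition (iv) of Definition \ref{def:sufficient} together with the observation that the conditional law of $S_t$ given $C_t$ is a function of $\Pi_t$ alone (the paper's equation (\ref{eq:lemma5-1})), combined with the bijection between $h_t^i$ and $(\pi_t,s_t^i,r_t^i)$. The only cosmetic difference is that the paper carries out the final step via Bayes' rule on $\mathbb{P}\{r_t^i\mid\pi_t,s_t\}$ and cancellation, while you decompose the conditional law of $R_t^i$ over the values of $H_t^i$; your explicit treatment of the coarsening from $C_t$ to $\Pi_t$ is a point the paper handles only implicitly.
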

   
   \begin{proof}[Proof of Lemma \ref{lemma:noise}] Consider an arbitrary realization $(\hspace*{-1pt}h_t^1\hspace*{-1pt},\hspace*{-1pt}...,\hspace*{-1pt}h_t^N\hspace*{-1pt})$ of $(\hspace*{-1pt}H_t^1\hspace*{-1pt},\hspace*{-1pt}...,\hspace*{-1pt}H_t^N\hspace*{-1pt})$. Let $((s_t^1\hspace*{-1pt},\hspace*{-1pt}\pi_t,r_t^1),\hspace*{-1pt}$ $...,\hspace*{-1pt}(s_t^N\hspace*{-1pt},\hspace*{-1pt}\pi_t,r_t^N))$ denote the realization of $((S_t^1,\Pi_t,R_t^1),...,(S_t^N,\Pi_t,R_t^N))$ where $(s_t^i,\pi_t,r_T^i)$ corresponds to $h_t^i$ as it is defined above for every $i\in\mathcal{N}$.  
   	
   For every $i\in\mathcal{N}$ we have,
   \begin{align}
   \mathbb{P}^g\{r_t^i|\pi_t,s_t\}&=\mathbb{P}^g\{r_t|\pi_t,s_t^i,s_T^{-i}\}=\frac{\mathbb{P}^g\{r_t^i,s_t^{-i}|\pi_t,s_t^i\}}{\mathbb{P}^g\{s_t^{-i}|\pi_t,s_t^i\}}
   =\frac{\mathbb{P}^g\{s_t^{-i}|r_t^i,\pi_t,s_t^i\}\mathbb{P}^g\{r_t^i|\pi_t,s_t^i\}}{\mathbb{P}^g\{s_t^{-i}|\pi_t,s_t^i\}}\nonumber\\&\hspace*{-29pt}\stackrel{\scriptsize \begin{array}{c}
   		\text{replace}  \\(\pi_t,s_t^i,r_t^i) \text{ by } h_t^i\end{array}}
   		{=}\frac{\mathbb{P}^g\{s_t^{-i}|h_t^i\}\mathbb{P}^g\{r_t^i|\pi_t,s_t^i\}}{\mathbb{P}^g\{s_t^{-i}|\pi_t,s_t^i\}}\label{eq:lemma5-0}
   \end{align}	
   The last equality holds because $H_t^i$ is uniquely determined by $(\Pi_t,S_t^i, R_t^i)$ and vice versa; see (\ref{eq:thm2-r-1})-(\ref{eq:thm2-r-inv}). Moreover,
   \begin{align}
   \mathbb{P}^g\{s_t^{-i}|h_t\}&\stackrel{by\; (\ref{eq:sufficientinfo})}{=}\mathbb{P}^g\{s_t^{-i}|s_t,c_t\}=\frac{\mathbb{P}^g\{s_t^{-i},s^i_t|c_t\}}{\mathbb{P}^g\{s_t^i|c_t\}}=\frac{\pi_t^g(s_t^{-i},s_t^i)}{\sum_{\hat{s}_t^{-i}}\pi_t^g(\hat{s}_t^{-i},s_t^i)}=\mathbb{P}\{s_t^{-i}|s_t,\pi_t^g\}.\label{eq:lemma5-1}
   \end{align}	
   
Combining (\ref{eq:lemma5-0}) and (\ref{eq:lemma5-1}) 
\begin{align}
\mathbb{P}^g\{r_t^i|\pi_t,s_t\}&=\frac{\mathbb{P}^g\{s_t^{-i}|s_t^i,\pi_t^g\}\mathbb{P}^g\{r_t^i|\pi_t,s_t^i\}}{\mathbb{P}^g\{s_t^{-i}|\pi_t,s_t^i\}}=\mathbb{P}^g\{r_t^i|\pi_t,s_t^i\}=\mathbb{P}^g\{r_t^i\}\label{eq:proof-thm4-indep}
\end{align}      	
where the last equality is true since by definition $R_t^i$ is independent of $(\Pi_t,S_t^i)$. Therefore, by (\ref{eq:proof-thm4-indep}), $R_t^i$ is independent of $\Pi_t$ and $S_t$ for all $i\in\mathcal{N}$. 
\end{proof}

Using the result of Lemma \ref{lemma:noise},  we prove that for every $i\hspace{-2pt}\in\hspace{-2pt}\mathcal{N}$, (i) $R_t^{1:N}$ is independent of $\Pi_t$ and $S_t^{1:N}$, and (ii) $H_t^i$ can be reconstructed using $R_t^i$ along with $\Pi_t$ and $S_t^{i}$. 

In the following, we construct a SIB strategy profile $\sigma_t$ equivalent to $g_t$ as follows. Let $\hat{R}_t^{1:N}(\omega_t)$ denote a random vector the agents construct using the public randomization device $\omega_t$ that has an identical joint cumulative distribution to that of $R_t^{1:N}$. Note that by Lemma \ref{lemma:noise}, the distribution of $R_t^{1:N}$ is independent of $S_t$ and $\Pi_t$.   

 Define,
\begin{align}
\sigma^{i}_t(\Pi_t,S_t^i,\omega_t):=g_t^i(F^{-1}_{R_t^i|S_t^i,\Pi_t}(\hat{R}_t^i(\omega_t),\Pi_t,S_t^i)).\label{eq:thm:team-1}
\end{align}

Then,
\begin{align*}
\mathbb{P}^{g_{1:t}}\{\pi_{t+1},s_{t+1}|H_t\}=&\mathbb{P}^{g_{1:t}}\{\pi_{t+1},s_{t+1}|\Pi_t,S_t,R_t\}\\\stackrel{distribution}{=}&\mathbb{P}^{g_{1:t}}\{\pi_{t+1},s_{t+1}|\Pi_t,S_t,\hat{R}_t\}\\=&\mathbb{P}^{\sigma_{1:t}}\{\pi_{t+1},s_{t+1}|\Pi_t,S_t,\hat{R}_t\}.
\end{align*}

Taking the expectation of the left and right hand sides with respect to $\omega_t$ and $R_t$, respectively, and using the fact that $\hat{R}_t(\omega_t)$ and $R_t$ are independent of $S_t$ and $\Pi_t$ (Lemma \ref{lemma:noise}), we obtain 
\begin{align}
&\mathbb{P}^{\sigma_{1:t}}\left\{\pi_{t+1},s_{t+1}|\Pi_t,S_t\right\}=\mathbb{P}^{g_{1:t}}\left\{\pi_{t+1},s_{t+1}|\Pi_t,S_t\right\} \quad w.p.1. \label{eq:thm4-induc(a)}
\end{align} 

By the induction hypothesis, we have $\mathbb{P}^{\sigma_{1:t-1}}\left\{\pi_{t},s_{t}\right\}=\mathbb{P}^{g_{1:t-1}}\left\{\pi_{t},s_{t}\right\}$. Therefore, taking the expectation of both sides of (\ref{eq:thm4-induc(a)}) with respect to $\Pi_t,S_t$, we establish that condition (a) holds for time $t+1$.

Next, we prove condition (b) by backward induction.  
We have,
\begin{align}
\mathbb{E}^g\{u_t^i(X_t,A_t)|H_t\}=&\mathbb{E}^g\{u_t^i(X_t,A_t)|\Pi_t,S_t,R_t\}\nonumber\\\stackrel{distribution}{=}&\mathbb{E}^g\{u_t^i(X_t,A_t)|\Pi_t,S_t,\hat{R}_t\}\nonumber\\=&\mathbb{E}^\sigma\{u_t^i(X_t,A_t)|\Pi_t,S_t,\hat{R}_t\} \label{eq:thm2-conb}
\end{align}

Using (\ref{eq:thm2-conb}) for $t=T$, we have condition (b) is satisfied for $t=T$.

Now we assume that condition (b) is satisfied from $t+1$ to $T$, $t\in\mathcal{T}$. We prove that condition (b) is satisfied at $t$.

Using condition (a) at time $t$, \textit{i.e} $\mathbb{P}^{\sigma_{1:t-1}}\{s_t,\pi_t\}=\mathbb{P}^{g_{1:t-1}}\{s_t,\pi_t\}$, the induction hypothesis on condition (b) for $t+1$ along with equation (\ref{eq:thm2-conb}) for $t$, and the fact that $R_t$ and $\hat{R}_t$ are identically distributed and independent of $\Pi_t$ and $S_t$, we obtain
\begin{gather*}
\mathbb{E}^g\hspace*{-2pt}\left\{\hspace*{-1pt}\sum_{\tau=t}^{T}\hspace*{-1pt}u_\tau^i(X_\tau,g_\tau(H_\tau)\hspace*{-1pt})\hspace*{-2pt}\right\}\hspace*{-2pt}=\hspace*{-1pt}\mathbb{E}^\sigma\hspace*{-2pt}\left\{\hspace*{-1pt}\sum_{\tau=t}^{T}\hspace*{-1pt}u^{i}_\tau(X_\tau,\sigma_\tau(\Pi_\tau,S_\tau,\omega_\tau)\hspace*{-1pt})\hspace*{-2pt}\right\}\hspace*{-2pt}.
\end{gather*}
\end{proof}

\vspace*{-10pt}

\begin{proof}[\textbf{Proof of Theorem \ref{thm:team}}]
	By the result of Theorem \ref{thm:nonstrategic}, we can restrict attention to SIB strategies with public randomization device without loss of generality. Moreover, since by Assumption \ref{assump:finite} all space are finite, we can restrict attention to SIB strategies (with no public randomization device) without loss of generality. The proof of Theorem \ref{thm:team} then follows from an argument identical to the one given for dynamic programming for POMDP (see \cite[Ch. 6.7]{kumar1986stochastic}). 
	
	The dynamic program described by (\ref{eq:team-initial}-\ref{eq:team-value}) can be viewed as a solution to the following decision problem that is equivalent to the original dynamic team problem. Consider a ``\textit{super agent}'' \red{that knows the functional forms of system dynamics and the agents' utilities, and the set of spaces $\mathcal{X}_t,\mathcal{A}_t^{1:N},\mathcal{S}_t^{1:N}$ for all $t$. The super agent coordinates the agents' decisions at each time as follows. The super agent observes $\pi_t$ (which is common knowledge among all agents) but does not know the realizations $s_t^{1:N}$ of the agents' sufficient private information.} Based on his information, the super agent chooses a joint set of prescriptions/partial functions $\sigma_t^{1:N}(\pi_t,\cdot)$, one for each agent, that determine agent $i$'s action for every realization $s_t^i$ as $\sigma^i_t(\pi_t,s_t^i)$ for $\forall t,i$.
	 The dynamic program described by (\ref{eq:team-initial}-\ref{eq:team-value}) determines an optimal solution for the above-described super agent, and thus, equivalently, determine the optimal strategy for the original dynamic team problem.\footnote{The above interpretation of the dynamic program from the point of view of a super agent is similar to \textit{the coordinator problem} formulated in \cite{nayyar2011optimal,nayyar2013decentralized}.} 
\end{proof}

\begin{proof}[\textbf{Proof of Theorem \ref{lemma:general}}]
	We show below that $L_t^i:=(S_t^{i},\Pi_t)$, $i\in\mathcal{N}$, $t\in\mathcal{T}$ satisfies conditions (i)-(iv) of Definition \ref{def:sufficient-general}.
	
	Condition (i) is satisfied since both $S_t^{1:N}$ and $\Pi_t$ can be updated recursively via update rules $\phi_t^{1:N}$ and $\psi_t$, respectively, for every $t\in\mathcal{T}$.
	
	Condition (ii) is satisfied by Lemma \ref{lemma:lem}.
	
	To prove condition (iii), we have
	\begin{align}
	\mathbb{P}\{x_t|c_t,s_t\}=\frac{\mathbb{P}\{x_t,s_t|c_t\}}{\sum_{\hat{x}_t}\mathbb{P}\{\hat{x}_t,s_t|c_t\}}=\mathbb{P}\{x_t|\pi_t,s_t\}.\label{eq:lemma-gen-eq}
	\end{align}
	Therefore, 

	\begin{gather*}
	\mathbb{E}^{\tilde{g}_{1:t-1}^{-i}}\hspace*{-2pt}\left\{\hspace*{-2pt}u_t^i(\hspace*{-1pt}X_t\hspace*{-1pt},\hspace*{-1pt}A_t\hspace*{-1pt})\hspace*{-1pt}\Big|c_t\hspace*{-1pt},\hspace*{-1pt}p_t^i\hspace*{-1pt},\hspace*{-1pt}a_t\hspace*{-2pt}\right\}\stackrel{\text{by  (\ref{eq:payoff-relevant2})}}{=}
	\mathbb{E}^{\tilde{g}_{1:t-1}^{-i}}\hspace*{-2pt}\left\{\hspace*{-2pt}u_t^i(\hspace*{-1pt}X_t\hspace*{-1pt},\hspace*{-1pt}A_t\hspace*{-1pt})\hspace*{-1pt}\Big|c_t\hspace*{-1pt},\hspace*{-1pt}s_t^i\hspace*{-1pt},\hspace*{-1pt}a_t\hspace*{-2pt}\right\}\nonumber\\
	=\nonumber\\
	\mathbb{E}^{\tilde{g}_{1:t-1}^{-i}}\Bigg\{\mathbb{E}^{\tilde{g}_{1:t-1}^{-i}}\hspace*{-2pt}\left\{\hspace*{-2pt}u_t^i(\hspace*{-1pt}X_t\hspace*{-1pt},\hspace*{-1pt}A_t\hspace*{-1pt})\hspace*{-1pt}\Big|X_t\hspace*{-1pt},\hspace*{-1pt}a_t\hspace*{-2pt}\right\}\Bigg| c_t,s_t^i,a_t\Bigg\}\nonumber\\
	\stackrel{\text{by  (\ref{eq:lemma-gen-eq})}}{=}\nonumber\\
	\mathbb{E}^{\tilde{g}_{1:t-1}^{-i}}\Bigg\{\mathbb{E}^{\tilde{g}_{1:t-1}^{-i}}\hspace*{-2pt}\left\{\hspace*{-2pt}u_t^i(\hspace*{-1pt}X_t\hspace*{-1pt},\hspace*{-1pt}A_t\hspace*{-1pt})\hspace*{-1pt}\Big|X_t\hspace*{-1pt},\hspace*{-1pt}a_t\hspace*{-2pt}\right\}\Bigg| \pi_t,s_t^i,a_t\Bigg\}
	\nonumber\\=\nonumber\\\mathbb{E}^{\tilde{g}_{1:t-1}^{-i}}\hspace*{-2pt}\left\{\hspace*{-2pt}u_t^i(\hspace*{-1pt}X_t\hspace*{-1pt},\hspace*{-1pt}A_t\hspace*{-1pt})\hspace*{-1pt}\Big|\pi_t\hspace*{-1pt},\hspace*{-1pt}p_t^i\hspace*{-1pt},\hspace*{-1pt}a_t\hspace*{-2pt}\right\}
	\end{gather*}
	
	Condition (iv) holds since,
	\begin{align*}
	\mathbb{P}^{\tilde{g}^{-i}_{1:t-1}\hspace*{-1pt},\tilde{g}_{1:t-1}^{i}}\hspace*{-3pt}\left\{\hspace*{-2pt}l_t^{-i}\hspace*{-1pt}\Big|p_t^i\hspace*{-1pt},\hspace*{-1pt}c_t\hspace*{-2pt}\right\}=&\hspace*{2pt}\mathbb{P}^{\tilde{g}^{-i}_{1:t-1}}\hspace*{-3pt}\left\{\hspace*{-2pt}s_t^{-i}\hspace*{-1pt}\Big|p_t^i\hspace*{-1pt},\hspace*{-1pt}c_t\hspace*{-2pt}\right\}
	\stackrel{\text{by  (\ref{eq:payoff-relevant2})}}{=}\hspace*{3pt}\mathbb{P}^{\tilde{g}^{-i}_{1:t-1}}\hspace*{-3pt}\left\{\hspace*{-2pt}s_t^{-i}\hspace*{-1pt}\Big|s_t^i\hspace*{-1pt},\hspace*{-1pt}c_t\hspace*{-2pt}\right\}
	=\frac{\mathbb{P}^{\tilde{g}^{-i}_{1:t-1}}\hspace*{-2pt}\left\{s_t|c_t\right\}}{\mathbb{P}^{\tilde{g}^{-i}_{1:t-1},g_{1:t-1}^{i}}\hspace*{-2pt}\left\{s_t^i|c_t\right\}}\nonumber\\
	\stackrel{\text{by  (\ref{eq:lemma-gen-eq})}}{=}\hspace*{-7pt}&\hspace*{10pt}\mathbb{P}^{\tilde{g}^{-i}_{1:t-1}}\hspace*{-2pt}\left\{\hspace*{-2pt}s_t^{-i}\hspace*{-1pt}\Big|s_t^i\hspace*{-1pt},\hspace*{-1pt}\pi_t\hspace*{-2pt}\right\}
	=\mathbb{P}^{\tilde{g}^{-i}_{1:t-1}}\hspace*{-2pt}\left\{\hspace*{-2pt}l_t^{-i}\hspace*{-1pt}\Big|l_t^i\hspace*{-1pt}\hspace*{-2pt}\right\}\nonumber
	\end{align*}
	
\end{proof}

\begin{proof}[\textbf{Proof of Theorem \ref{thm:nonstrategic-infinite}}]
Consider the SIB strategy $\sigma_t$ constructed in the proof of Theorem \ref{thm:nonstrategic} for every $t\in\mathbb{N}$. We show below that $\sigma_t$ satisfies (\ref{eq:nonstrategic-infinite}).  

By the proof of Theorem \ref{thm:nonstrategic}, condition (\ref{eq:proof-thm4-a}) holds for all $t\in\mathbb{N}$. To prove (\ref{eq:nonstrategic-infinite}), we show that under strategy $\sigma_t$, $t\in\mathbb{N}$, we have 
\begin{gather}
\Bigg|\mathbb{E}^g\hspace*{-2pt}\left\{\hspace*{-2pt}\sum_{\tau=t}^{\infty}\hspace*{-1pt}\delta^{t-1}u_{\infty}^i(g_\tau^{1:N}\hspace*{-1pt}(\hspace*{-1pt}H_\tau^{1:N}\hspace*{-1pt}),\hspace*{-1pt}X_\tau\hspace*{-1pt})\hspace*{-2pt}\right\}- 
\mathbb{E}^{\sigma_{\infty}}\hspace*{-2pt}\left\{\hspace*{-2pt}\sum_{\tau=t}^{\infty}\hspace*{-1pt}\delta^{t-1}u^{i}_{\infty}\hspace*{-1pt}(\sigma_{\infty}^{1:N}\hspace*{-1pt}(\hspace*{-1pt}\Pi_\tau\hspace*{-1pt},\hspace*{-1pt}S_\tau^{1:N}),\hspace*{-2pt}X_\tau\hspace*{-1pt})\hspace*{-1pt}\right\}\Bigg|\leq \epsilon \label{eq:proof-infinite}
\end{gather} 
for all $\epsilon>0$.

Let $M=\max_{a_t,x_t,i}|u^i_\infty(x_t,a_t)|$. For every $\epsilon>0$, choose $T\in\mathbb{N}$ such that $\frac{\delta^T}{1-\delta}M\leq \frac{\epsilon}{2}$. Then, for any arbitrary strategy $\tilde{g}$,
\begin{align}
\Bigg|\mathbb{E}^{\tilde{g}}\hspace*{-2pt}\left\{\hspace*{-2pt}\sum_{\tau=T}^{\infty}\hspace*{-1pt}\delta^{t-1}u_{\infty}^i(\tilde{g}_\tau^{1:N}\hspace*{-1pt}(\hspace*{-1pt}H_\tau^{1:N}\hspace*{-1pt}),\hspace*{-1pt}X_\tau\hspace*{-1pt})\hspace*{-2pt}\right\}\Bigg| \leq \frac{\epsilon}{2}. \label{eq:proof-infinite-2}
\end{align}

Therefore, for every $t<T$, condition (\ref{eq:proof-infinite}) is satisfied by (\ref{eq:proof-infinite-2}) and \textit{the triangle inequality}. 

For $t>T$, consider a finite decision problem with horizon $T$ resulting by the truncation of the original infinite-horizon decision problem at $T$. Then, by Theorem \ref{thm:nonstrategic}, 
\begin{gather}
\mathbb{E}^g\hspace*{-2pt}\left\{\hspace*{-2pt}\sum_{\tau=t}^{T}\hspace*{-1pt}u_\tau^i(g_\tau^{1:N}\hspace*{-1pt}(\hspace*{-1pt}H_\tau^{1:N}\hspace*{-1pt}),\hspace*{-1pt}X_\tau\hspace*{-1pt})\hspace*{-2pt}\right\}=
\mathbb{E}^\sigma\hspace*{-2pt}\left\{\hspace*{-2pt}\sum_{\tau=t}^{T}\hspace*{-1pt}u^{1:N}_\tau\hspace*{-1pt}(\sigma_\tau^{i}\hspace*{-1pt}(\hspace*{-1pt}\Pi_\tau\hspace*{-1pt},\hspace*{-1pt}S_\tau^{1:N}),\hspace*{-2pt}X_\tau\hspace*{-1pt})\hspace*{-1pt}\right\}, \label{eq:proof-infinite-3}
\end{gather}
for all $i\hspace*{-2pt}\in\hspace*{-2pt}\mathcal{N}$ and $t\hspace*{-2pt}\in\hspace*{-2pt}\mathcal{T}$. Combining (\ref{eq:proof-infinite-3}) with the result for $t>T$, we show that (\ref{eq:nonstrategic-infinite}) is satisfied for $t$. 
\end{proof}

\begin{proof}[\textbf{Proof of Theorem \ref{thm:team-inifnite}}]
	By Theorem \ref{thm:nonstrategic-infinite}, we can restrict attention to stationary SIB strategies with public randomization device without loss of generality. Moreover, since by Assumption \ref{assump:finite} all space are finite, we can restrict attention to SIB strategies (with no public randomization device) without loss of optimality. Consequently, following the same rationale as the one given in the proof of Theorem \ref{thm:team}, the result of Theorem \ref{thm:team-inifnite} follows from an argument identical to the one given for dynamic programming in infinite-horizon Markovian Decision Processes (see \cite[Ch. 8.2 and Ch.8.3]{kumar1986stochastic}).  
\end{proof}

	\end{document}